\newtheorem{Theorem}{Theorem}[section]
\newtheorem{Lemma}[Theorem]{Lemma}
\newtheorem{Proposition}[Theorem]{Proposition}
\newtheorem{Definition}[Theorem]{Definition}
\newtheorem{rem}[Theorem]{Remark}
\newtheorem{example}[Theorem]{Example}
\newcommand{\R}{\mathbb R}
\newcommand{\K}{\mathbb{K}}
\newcommand{\N}{\mathbb N}
\newcommand{\Z}{\mathbb Z}
\newcommand{\C}{\mathbb{C}}
\newcommand{\mbH}{\mathbb H}
\newcommand{\F}{\mathcal{F}}
\newcommand{\D}{\mathcal{D}}
\title[Realization of quaternionic KP hierarchy]{On the Kadomtsev-Petviashvili hierarchy in an extended class of formal pseudo-differential operators}
\author{Jean-Pierre Magnot$^1$}
\author{Vladimir Roubtsov$^{2,3,4}$}
\address{$1$ and $2:$ LAREMA, Universit\'e d'Angers, \\ 2 boulevard Lavoisier, \\ 49045 Angers Cedex 1, France}
\address{$1$ Lyc\'ee Jeanne d'Arc, \\ Avenue de Grande Bretagne, \\ 63000 Clermont-Ferrand, France}
\address{$3$ ITEP, Theoretical Division, 25, Bol. Tcheremushkinskaya, 117259, Moscow, Russia}
\address{$4$ IGAP (Institute for Geometry and Physics), Trieste, Italy}
\email{$1:$ jean-pierr.magnot@ac-clermont.fr}
\email{$1:$ magnot@math.cnrs.fr}
\email{$2:$ volodya@univ-angers.fr}
\email{$2:$ volodya@math.cnrs.fr}
\begin{document}

	\begin{abstract}
		 We study the existence and uniqueness of the Kadomtsev-Petviashvili (KP) hierarchy solutions in the algebra of $\F Cl(S^1,\K^n)$ of formal classical pseudo-differential operators. The classical algebra $\Psi DO(S^1,\K^n)$ where the KP hierarchy is well-known appears as a subalgebra of $\F Cl(S^1,\K^n).$ We investigate algebraic properties of $\F Cl(S^1,\K^n)$ such as splittings, r-matrices, extension of the Gelfand-Dickii bracket, almost complex structures. Then, we prove the existence and uniqueness  of the KP hierarchy solutions in  $\F Cl(S^1,\K^n)$ with respect to extended classes of initial values. Finally, we extend this KP hierarchy to complex order formal pseudo-differential operators and we describe their Hamiltonian structures similarly to previously known formal case. .       
	\end{abstract}
	
	\maketitle
	\vskip 12pt
	\textit{Keywords:} Formal pseudo-differential operators, Kadomtsev-Petviashvili hierarchy, almost complex structure, almost quaternionic structure.
	
	\textit{MSC (2020):} 37K10, 37K20, 37K30
	\section*{Introduction}
	In the classical theory of the Kadomtsev-Petviashvili (KP) hierarchy, the considering algebra of pseudo-operators  is $$ \Psi DO(S^1,\K) = C^\infty(S^1,\K)((\partial^{-1}))$$ where $\partial$ is a derivation and  $\K = \R, \C$ or $\mathbb{H}.$ Classically, $\partial = \frac{d}{dx}, \, x\in S^1.$ 
	It is well-known that this KP hierarchy is an integrable system, with existence and uniqueness of solutions with respect to a fixed initial value, (
see e.g. \cite{D} for a classical treatise). There exists various generalizations, or deformations, of the KP hierarchy which almost all satisfy the formal integrability condition, and solutions satisfy properties similar to the properties of the solutions of the (classical) KP hierarchy. Recently, well-posedness have been stated for these equations \cite{ERMR,MR2016}. Classical algebraic settings that arise in the theory of the KP hierarchy will be reviewed in section \ref{ss:Lie}, in section \ref{s:poisson} and in section \ref{ss:prelKP}. 
	
	Pseudo-differential operators appear also in some contexts other than the theory of integrable systems. In general, larger classes of such operators are studied, see e.g. \cite{Gil,PayBook,Scott,See}, starting from \textbf{non-formal} operators, i.e. operators acting on spaces of sections of a vector bundle. These non-formal operators, in particular \textbf{classical} pseudo-differential operators, have their own applications and one can build from them spaces of formal classical operators. 
	The algebra of operators that we intend to use in this paper is the algebra of formal classical pseudo-differential operators $\F Cl(S^1,\K^n)$ that are obtained from classical pseudo-differential operators 
	acting on smooth sections of the trivial vector bundle $S^1 \times \K^n$ over $S^1,$ for $K =  \C$ or $\mathbb{H},$ see e.g. \cite{Gil,PayBook}. 
	In this algebra, it is possible to define functions of an elliptic positive operator that satisfy mild properties of the spectrum  using a Cauchy-like formula \cite{PayBook,Scott,See}. 
	In particular the square root of the Laplacian $|D| = \Delta^{1/2}$ is well-defined, as well as the sign of the Dirac operator $D = i \frac{d}{dx}$ defined by $$\epsilon(D) = D |D|^{-1} =  |D|^{-1} D.$$
	This operator is not in $ \Psi DO(S^1,\K^n).$ In fact, the algebra $ \Psi DO(S^1,\K)$ is the formal part of the so-called even-even class of (non-formal) classical pseudo-differenbtial operators first defined, to our knowledge, by Kontsevich and Vishik \cite{KV1,KV2} and named as even-even class operators in \cite{PayBook,Scott}, mostly motivated by problems about renormalized determinants. As a consequence, $ \Psi DO(S^1,\K)$ is a subalgebra of $\F Cl(S^1,\K^n)$ { which is noted in \cite{PayBook,Scott} as $\mathcal{F} Cl_{ee}(S^1,\K^n).$} The necessary properties  of these  pseudo-differential operator algebras, both formal and non-formal, will be reviewed in section \ref{ss:prel}. { The key properties of $\epsilon(D)$ that we use in our constructions are:
		
		\begin{itemize}
			\item the formal operator $\epsilon(D)\in \mathcal{F} Cl(S^1,\K^n)$ commutes with any formal operator $A \in  \mathcal{F} Cl(S^1,\K^n),$
			\item $\epsilon(D)^2=Id$
			\item the composition on the left $A \mapsto \epsilon(D) \circ A$ is an endomorphism of the algebra $ \mathcal{F} Cl(S^1,\K^n),$ which restricts to a bijiective map from $\Psi DO(S^1,\K^n)=\F Cl_{ee}(S^1,\K^n)$ to an algebraic complement in $ \mathcal{F} Cl(S^1,\K^n)$ noted as $ \mathcal{F} Cl_{eo}(S^1,\K^n)$ following the terminology of \cite{Scott}
			\item the restriction of the Wodzicki residue to $\Psi DO(S^1,\K^n)=\F Cl_{ee}(S^1,\K^n)$, which is similar to but not equal to the Adler functional, is vanishing.  
		\end{itemize}  
	}
	Our first remarks are the following: 
	\begin{itemize}
		\item The space $\F Cl(S^1,\K^n)$ splits in various ways: one is derived from the splitting of $T^*S^1 - S^1$ into two connected components (section \ref{ss:+-}), the splitting with respect to $\Psi DO(S^1,\K^n)$ as a subalgebra (section \ref{s:even/odd}), and the extension of the splitting related to the classical Manin triple on $\Psi DO(S^1,\K^n)$ to $\F Cl(S^1,\K^n)$ (section \ref{ss:manin}) . 
		\item The operator $\epsilon(D)$ is in the center of  $\F Cl(S^1,\K^n).$  It generates  then a polarized Lie bracket using it as a $\mathbf r-$matrix (section \ref{ss:polar}) and an integrable almost complex structure on  $\F Cl(S^1,\K^n).$ 
	\end{itemize}
These technical features enables us to state the announced main results of this paper: existence and uniqueness of solutions of the KP hierarchy with various initial conditions (section \ref{ss:multKP}) and KP hierarchy with complex powers (section \ref{ss:complex}). 

The paper is organized as follows:

Section \ref{s:tech} is devoted to technical preliminaries: we remind and review some operator algebras, Poisson structures and Manin pairs. We give an overview of the classical method for solving the KP hierarchy. New results of this Section are concentrated in section \ref{ss:complexpow} where formal operators of complex order that generalize operators in $\F CL(S^1,\K^n)$,  extending the definitions present in \cite{EKRRR1995}, \cite{KZ},  are described. In section \ref{s:Manin}, we explore some Manin pairs on $\F Cl(S^1\K^n)$, and in section \ref{ss:polar} we present some polarized brackets, inherited from the richer structure of $\F Cl(S^1\K^n)$.

Section \ref{s:inj} is focused on the comparison of $\F Cl(S^1\K^n)$ with  $\Psi DO(S^1,\K^n).$ First, we develop various injections of $\Psi DO(S^1,\K^n)$ in $\F Cl(S^1\K^n),$ beyond the standard one described in section \ref{s:tech}. Second, we describe three almost complex structures on $\F Cl(S^1\K^n)$ $J_1,$ $J_2$ and $J_3$ such that each couple $(J_1,J_2),$ $(J_1,J_3)$ and $(J_2,J_3)$ form an almost quaternionic structure on $\F Cl(S^1\K^n).$ We prove the integrability of $J_1,$ derived from $i\epsilon(D),$ and the non-integrability of the two others $J_2$ and $J_3.$

Section \ref{s:KP} deals with various type of initial values for the KP system, which are derived from the various injections of $\Psi DO(S^1,\K^n)$ in $\F Cl(S^1,\K^n),$ and ends up with a generalization  to the KP hierarchy with operators of complex order. As it was announced, the existence and uniqueness of the solutions, depending on the initial value, is stated. We make few short remarks about well-posedness.
 
The final part of the paper extends the classical Hamiltoinian formulations of the KP hierarchy from $\Psi DO(S^1,\K^n)$ to $\F CL(S^1,\K^n),$ using a generalized Adler-Gelfand-Dickii construction. 

All technical and routine proofs are gathered and organized in the Appendix.

\subsection{Acknowledgements.} This research of both authors was supported by LAREMA UMR 6093 du CNRS. V.R. was partly supported by the project IPaDEGAN (H2020- MSCA-RISE-2017), Grant Number 778010, and by the Russian Foundation for Basic Research under the Grants RFBR.
\subsection{ Conflict of Interest}  The authors declare that they have no conflicts of interest.

	\section{Technical preliminaries}\label{s:tech}
	\subsection{Preliminaries on pseudo-differential operators}\label{ss:prel}
	\subsubsection{Description}
	We {shall start with a description of (non-formal!) pseudo-differential operator groups and algebras which we consider in this work. Throughout this section $E$ denotes a} complex finite-dimensional vector bundle over $S^1.$
	 We {shall} specialize below  to the case $E = S^1 \times V$ in which $V$ is a $n-$dimensional vector space. 
	The following definition appears in \cite[Section 2.1]{BGV}.
	
	\begin{Definition} \label{def:diff-op}
		The graded algebra of differential operators acting on the space of smooth sections $C^\infty(S^1,E)$ is the 
		algebra $DO(E)$ generated by:
		
		$\bullet$ elements of $End(E),$ the group of smooth maps $E \rightarrow E$ leaving each fibre globally invariant  
		and which restrict to linear maps on each fibre. This group acts on sections of $E$ via (matrix) multiplication;
		
		$\bullet$  {covariant derivation} operators
		$$\nabla_X : g \in C^\infty(S^1,E) \mapsto \nabla_X g$$ where $\nabla$ 
		is a {smooth } connection on $E$ and $X$ is a {smooth} vector field on $S^1$.
	\end{Definition}
	
	{We assign as usual the order $0$  to smooth function} multiplication operators. 
	{The derivation} operators and vector fields have the  
	order 1. A differential operator of order $k$ has the form 
	$ P(u)(x) = \sum p_{i_1 \cdots i_r} \nabla_{x_{i_1}} \cdots \nabla_{x_{i_r}} u(x) \; , \quad r \leq k \; ,$
	In local coordinates (the coefficients $p_{i_1 \cdots i_r}$ can be matrix-valued).
	We denote by $DO^k(S^1)$,$k \geq 0$, the differential operators of order less or equal than $k$.
	The algebra $DO(E)$ is {filtered  by the} order. It is a subalgebra of the algebra of classical pseudo-differential operators $Cl(S^1,V)$ that we describe shortly hereafter, focusing on its necessary aspects.
	This is an algebra that contains, for example, the square root of the Laplacian \begin{equation} \label{eq:integral}|D| = \Delta^{1/2} = \int_{\Gamma} \lambda^{1/2}(\Delta-\lambda Id)^{-1} d\lambda,\end{equation}
	where $\Delta = - \frac{d^2}{dx^2}$ is the positive Laplacian and $\Gamma$ is a contour around the spectrum of the Laplacian, see e.g. \cite{See,PayBook} for an exposition on contour integrals of pseudo-differential operators. $Cl(S^1,V)$ contains also the inverse of $Id+\Delta,$ and all
	smoothing operators on $L^2(S^1,V). $ Among smoothing operators one can find the heat operator 
	$$e^{-\Delta} = \int_{\Gamma} e^{-\lambda}(\Delta-\lambda Id)^{-1} d\lambda.$$ 
	pseudo-differential operators (maybe non-scalar) are linear operators acting on $C^\infty(S^1,V)$ which reads locally as
	$$ A(f) = \int e^{ix.\xi}\sigma(x,\xi) \hat{f}(\xi) d\xi$$ where $\sigma \in C^\infty(T^*S^1, M_n(\C))$ satisfying additional estimates on its partial derivatives {and $\hat{f}$ means the Fourier transform of $f$}. Basic facts on pseudo-differential operators defined 
	on a vector bundle $E \rightarrow S^1$ can be found e.g. in \cite{Gil}.
	
{\begin{rem} Since $V$ is finite dimensional, there exists $n \in \N^*$ such that $V \sim \C^n.$ Through this identification, a pseudo-differential operator $A \in Cl^(S^1,V)$ can be identified with a matrix $(A_{i,j})_{(i,j)\in \N_n^2}$ with coefficients $$A_{i,j} \in Cl(S^1,\C).$$ In other words, the identification $V \sim \C^n$ that we fix induces the isomorphism of algebras
 $$Cl(S^1,V) \sim M_n(Cl(S^1,\C)).$$
 This identification will remain true and useful in the successive constructions below, and will be recalled if appropriate. When it will not carry any ambiguity, we will use the notation $DO(S^1),$ $Cl(S^1),$ etc. instead of $DO(S^1,\C),$ $Cl(S^1,\C),$ etc. for operators acting on the space of smooth functions from $S^1$ to $\C.$
\end{rem}}
	 {Pseudo-differential operators can be also described by their kernel $$K(x,y) = \int_{\R} e^{i(x-y)\xi} \sigma(x,\xi)d\xi$$ which is off-diagonal smooth.} Pseudo-differential operators {with infinitely smooth kernel (or "smoothing" operators)}, i.e. that are maps: $L^2 \rightarrow C^\infty$ form a two-sided ideal that we note by $Cl^{-\infty}(S^1,V).$ Their symbols  are those which are in the Schwartz space $\mathcal{S}(T^*S^1, M_n(\C)).$  The quotient 
	$\F Cl(S^1,V)=Cl(S^1,V)/Cl^{-\infty}(S^1,V)$ of the algebra of pseudo-differential operators by $Cl^{-\infty}(S^1,V)$ forms the algebra of formal pseudo-differential operators. 
	 {Another algebra, which is actually known as a subalgebra of $\F Cl(S^1,V)$ following \cite{MR2018}, is also called algebra of formal pseudo-differential operators. This algebra is generated by formal Laurent series $$ \Psi DO(S^1,V) = C^\infty(S^1,V)((\partial^{-1}))=\bigcup_{d \in \Z}\left\{ \sum_{k \leq d} a_k \partial^k \right\}$$ where each $a_k \in C^\infty(S^1,M_n(\C))$ and $\partial = \frac{d}{dx}.$ Let us precise hereafter a short but complete description of basic correspondence between $ \Psi DO(S^1,V)$ and $\F Cl(S^1,V)$.}
	
	Symbols $\sigma$ project to formal symbols and there is an isomorphism between formal pseudo-differntial operators and formal symbols. A detailed study can be found in \cite[Tome VII]{Dieu}.  Classical pseudo-differential operators are operators $A$ which associated formal symbol $\sigma(A)$ reads as an asymptotic expansion $$\sigma(A)(x,\xi) \sim \sum_{k \in \Z, k \leq o} \sigma_k(A)(x,\xi)$$
	where the \textbf{partial symbol of order k} $$\sigma_k(A): (x,\xi)\in T^*S^1\setminus S^1 \mapsto \sigma_k(A)(x,\xi)\in M_n(\C)$$ is $k-$positively homogeneous in the $\xi-$variable, smooth on $T^*S^1\setminus S^1 = \{(x,\xi)\in T^*S^1 \, | \, \xi \neq 0\}$ and such that {$d\in \Z$} is the order of the operator $A.$ The order of a smoothing operator {we put} equal to $-\infty$ and the formal symbol of a smoothing operator is $0.$

	 The set $\mathcal{F}Cl(S^1,V)$ is not the same as the space of formal operators $ \Psi DO(S^1,V) $ which naturally arises  in the algebraic theory of PDEs, see e.g. \cite{KW} for an overview, but here the partial symbols $\sigma_k(A)$ of $A \in \Psi DO(S^1,V)$ are $k-$homogeneous. By the way one only has $\Psi DO(S^1,V) \subset \mathcal{F}Cl(S^1,V).$ Following the remarks given in \cite{MR2018}, 
	 $\Psi DO(S^1,V)$ correspond to \textbf{even-even class} formal pseudo-differential operators { that we describe in section \ref{s:even/odd}.}
	%We assume known the definition of the algebra of pseudo-differential operators $PDO(S^1,E)$ and of
	%classical pseudo-differential operators $Cl(S^1,E)$. 
	%When the vector bundle $E$ is trivial, i.e. $E = S^1\times V$ or 
	%$E = S^1 \times \K^p$ with $\K = \R$ or $\mathbb{C},$ 
	%we use the notation $Cl(S^1,V)$ or $Cl(S^1,\K^p)$ instead of $Cl(S^1,E).$ 
	%Pseudo-differential operators are \textbf{pseudolocal}, which means that, for any compact 
	%set $K$ in the domain of a chart on $S^1,$ if $ A \in PDO(S^1,E)$, then
	%$$ \forall f \in L^2(S^1,E), \hbox{ if } f \hbox{ is smooth on } K, \hbox{ then } A(f) \hbox{ is smooth on } K.$$ 
	Two approaches for a global symbolic calculus {of }pseudo-differential operators have been described in \cite{BK,Wid}. It  is shown in these papers how the 
	geometry of the base manifold $M$ furnishes an obstruction to generalizing 
	local formulas of of symbol composition and inversion; we do not recall these
	formulas here since they are not involved in our computations. 
	%More interesting for this article is to precise when the local formulas 
	%of composition of formal symbols do extend globally on the base manifold. 
	We assume henceforth (following e.g. \cite{Ma2016}, along the lines of the more general description of \cite{Gil}), that $S^1$ is equipped with charts such that the changes 
	of coordinates are translations.  Under these assumptions, 
	$$\quad \sigma(A \circ B) \sim \sum_{\alpha \in \N} \frac{(-i)^\alpha}{\alpha !} D^\alpha_\xi \sigma(A) D^\alpha_x \sigma(B), \quad \forall  A,B \in Cl(S^1,V), $$
	and specializing to partial symbols:
	$$\forall k \in \Z, \sigma(A \circ B)_k   = \sum_{\alpha \in \N}\sum_{m+n-\alpha = k} \frac{(-i)^\alpha}{\alpha !} D^\alpha_\xi \sigma_m(A) D^\alpha_x \sigma_n(B).$$
	{The composition $\sigma(A \circ B)$ for $A,B \in \Psi DO(S^1,V) \subset \mathcal{F}Cl(S^1,V)$ gives rise to a (unitary) associative algebra structure on $\Psi DO(S^1,V)$
	and we shall write in this case (by abuse of notation)
	\begin{equation}\label{asspsi}
	A \circ B = \sum_{\alpha \in \N} \frac{(-i)^\alpha}{\alpha !} D^\alpha_\xi A D^\alpha_x B
	\end{equation}}
	
      {\begin{rem}
	In such an  "operator product" we shall always suppose so called "Wick order" which means that we write functions on $C^{\infty}(S^1)$ on (or "in front of") left-hand side of all degrees of $D.$ 
	\end{rem}}
	\vskip 6pt
	\noindent
	\textbf{Notations.} 
	We shall denote note by
	% $Cl(S^1,V)$ the algebra of  classical pseudo-differential operators acting on smooth
	%sections of $E=S^1 \times V$, and by 
	$Cl^d(S^1,V)$ the vector space of classical pseudo-differential operators 
	of order{ $\leq d$}. We also denote by $Cl^{*}(S^1,V)$ the group of  {invertible in  $Cl(S^1,V)$ operators}.
	We denote the sets of formal operators adding the script $\mathcal{F}$. The algebra of formal pseudo-differential operators, identified wih formal symbols, is noted by ${\mathcal F}Cl^{}(S^1,V),$ and its group of invertible element is ${\mathcal F}Cl^{*}(S^1,V),$ while formal pseudo-differential operators of order less or equal to  $d \in \Z$ is noted by ${\mathcal F}Cl^{d}(S^1,V).$
	{%
\begin{rem}
	Through identification of $\F Cl(S^1,V)$ with the corresponding space of formal symbols, the space $\F Cl(S^1,V)$ is equipped with the natural locally convex topology inherited from the space of formal symbols. 
	A formal symbol $\sigma_k$ is a smooth function in  $C^\infty(T^*S^1\setminus S^1, M_n(\C))$ which is $k-$homogeneous (for $k>0)$), and hence with an element of $C^\infty(S^1, M_n(\C))^2$ evaluating 
	$\sigma_k$ at $\xi = 1$ and $\xi = -1.$ Identifyting $Cl^d(S^1,V)$ with 
	$$ \prod_{k \leq d} C^\infty(S^1, M_n(\C))^2, $$ 
	the vector space $Cl^d(S^1,V)$ is a Fr\'echet space, 
	and hence $$Cl(S^1,V) = \cup_{d \in \Z}Cl^d(S^1,V)$$ 
	is a locally convex topological algebra. 
	
	We have to precise that the classical topology on non-formal classical pseudo-differential operators $Cl(S^1,V)$ is finer than the one obtained by pull-back from $\F Cl(S^1,V).$ A ``useful'' topology on $Cl(S^1,V)$ needs to ensure that partial symbols \textbf{and} off-diagonal smooth kernels converge.  
	The topology on spaces of classical pseudo differential operators has been described by Kontsevich and Vishik in 
	\cite{KV1}; see also \cite{CDMP,PayBook,Scott} for descriptions.
	This is a Fr\'echet topology on each space 
	$Cl^d(S^1,E).$ 
	%We set
	%$$ PDO^{-\infty}(S^1,E) = \bigcap_{o \in \Z} PDO^o(S^1,E) \; .$$
	%It is well-known that $PDO^{-\infty}(S^1,E)$ is a two-sided ideal of $PDO(S^1,E)$, see e.g.
	%\cite{Gil,Scott}. 
	However, passing to the quotients
	$\F Cl^d(S^1,E) = Cl^d(S^1,E) /  Cl^{-\infty}(S^1,E),$
	the push-forward topology coincides with the topology of $\F Cl^d(S^1,V)$ described at the beginning of this remark. \end{rem}}
	
	%\vskip 12pt
	
	%, and non-formal operators which associated formal operator lie in the even-even class are called even-even class the same way following the terminology of \cite{KV1}. The set of non-formal, even-even class pseudo-differential operators is denoted by $Cl_{ee}(S^1,V).$

	\subsubsection{The splitting with induced by the connected components of $T^*S^1\setminus S^1.$} \label{ss:+-}
	In this section, we define two ideals of the algebra $\mathcal{F}Cl(S^1,V)$, 
	that we call $\mathcal{F}Cl_+(S^1,V)$ and $\mathcal{F}Cl_-(S^1,V)$, such that $\mathcal{F}Cl(S^1,V) = \mathcal{F}Cl_+(S^1,V) \oplus \mathcal{F}Cl_-(S^1,V)$. 
	This decomposition is explicit in \cite[section 4.4., p. 216]{Ka}, and we give an explicit description here following \cite{Ma2003,Ma2006-2}. 
	
	\begin{Definition}
		
		Let $\sigma$ be a partial symbol of order $o$ on $E$. Then, we define, for $(x,\xi) \in T^*S^1\setminus S^1$, 
		$$ \sigma_+(x,\xi) = \left\{ 
		\begin{array}{ll}
		\sigma(x,\xi) & \hbox{ if $ \xi > 0$} \\
		0 & \hbox{ if $ \xi < 0$} \\
		\end{array}
		\right. \hbox{ and }
		\sigma_-(x,\xi) = \left\{ 
		\begin{array}{ll}
		0 & \hbox{ if $ \xi > 0$} \\
		\sigma(x,\xi) & \hbox{ if $ \xi < 0$} . \\
		\end{array}
		\right.$$
		We define  $p_+(\sigma) = \sigma_+$ and $p_-(\sigma) = \sigma_-$ .
	\end{Definition}
	The maps 
	$ p_+ : \mathcal{F}Cl(S^1,V) \rightarrow \mathcal{F}Cl(S^1,V) $ { and } $p_- : \mathcal{F}Cl(S^1,V) \rightarrow \mathcal{F}Cl(S^1,V)$ are clearly smooth algebra morphisms (yet non-unital morphisms) that leave the order invariant and are also projections (since multiplication on formal symbols is expressed in terms of point-wise multiplication of tensors). 
	
	\begin{Definition} We define
		$  \mathcal{F}Cl_+(S^1,V) = Im(p_+) = Ker(p_-)$
		and $  \mathcal{F}Cl_-(S^1,V) = Im(p_-) = Ker(p_+).$ \end{Definition}
	Since $p_+$ is a projection,  we have the splitting
	$$ \mathcal{F}Cl(S^1,V) = \mathcal{F}Cl_+(S^1,V) \oplus \mathcal{F}Cl_-(S^1,V) .$$
	Let us give another characterization of $p_+$ and $p_-$. {The operator $D = {-i} \frac{d}{dx}$ splits $C^\infty(S^1, \C^n)$ into three spaces :
\begin{itemize}
\item  its kernel $E_0,$ built of constant maps
\item $E_+$, the vector space spanned by eigenvectors related to positive eigenvalues
\item $E_-$, the vector space spanned by eigenvectors related to negative eigenvalues.
\end{itemize}
	The $L^2-$orthogonal projection on $E_0$ is a smoothing operator, which has null formal symbol. By the way, concentrating our attention on thr formal symbol of operators, we can ignore this projection and hence we work on $E_+ \oplus E_-$. The following elementary result will be useful for the sequel.  
	\begin{Lemma} \label{l1} \cite{Ma2003,Ma2006-2}
\begin{itemize}	
\item $\sigma(D) = {\xi }, \quad \sigma(|D|) = {|\xi| }$
\item $\sigma(\epsilon) = {\xi \over |\xi|}$, where $\epsilon = D|D|^{-1} = |D|^{-1}D$ is the sign of D.
\item  Let $p_{E_+}$ (resp. $p_{E_-}$) be the projection on $E_+$ (resp. $E_-$), then $\sigma(p_{E_+}) ={1 \over 2}(Id + {\xi \over |\xi|})$ and $\sigma(p_{E_-}) = {1 \over 2}(Id - {\xi \over |\xi|})$.
\end{itemize}
\end{Lemma}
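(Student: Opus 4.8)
The plan is to reduce all four computations to one structural observation: the operators $D$, $|D|$, $\epsilon$ and the spectral projections $p_{E_\pm}$ are all \emph{Fourier multipliers}, i.e.\ classical pseudo-differential operators whose formal symbol does not depend on $x$. Under the standing assumption that the charts of $S^1$ are related by translations, the composition formula $\sigma(A\circ B)\sim\sum_\alpha \frac{(-i)^\alpha}{\alpha!}D_\xi^\alpha\sigma(A)\,D_x^\alpha\sigma(B)$ becomes \emph{exact} and, when both $\sigma(A)$ and $\sigma(B)$ are $x$-independent, collapses to the ordinary pointwise product $\sigma(A)\sigma(B)$ with no lower-order corrections. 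So the first step is to record the action on the Fourier basis $\{e^{inx}\}_{n\in\Z}$ of $C^\infty(S^1,\C^n)$: $De^{inx}=n\,e^{inx}$, hence $\Delta e^{inx}=n^2e^{inx}$ and $|D|=\Delta^{1/2}$ acts by $|n|$; on $E_+\oplus E_-$ the operator $|D|^{-1}$ acts by $|n|^{-1}$ and $\epsilon=D|D|^{-1}$ by $\mathrm{sign}(n)$ (the $n=0$ mode is handled by noting that the $L^2$-orthogonal projection onto $E_0$ is smoothing, hence has null formal symbol and is invisible at the formal level, exactly as explained before the lemma); and $p_{E_\pm}$ is the multiplier sending $e^{inx}$ to $e^{inx}$ when $\pm n>0$ and to $0$ otherwise.

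Then I would read off the symbols. From $D=-i\,d/dx$ and the local formula $A(f)(x)=\int e^{ix\xi}\sigma(x,\xi)\hat f(\xi)\,d\xi$ one gets $\sigma(D)(x,\xi)=\xi$ directly; it is smooth on $T^*S^1\setminus S^1$ and $1$-homogeneous. For $|D|$: being a multiplier its symbol is $x$-independent, and $|D|\circ|D|=\Delta$ together with exact composition of multipliers gives $\sigma(|D|)(\xi)^2=\xi^2$; positivity of $|D|$ (it is defined by the functional calculus (\ref{eq:integral}), which is classical and positive) forces $\sigma(|D|)(\xi)=|\xi|$. The same argument applied to $|D|^{-1}\circ|D|=Id$ modulo smoothing yields $\sigma(|D|^{-1})(\xi)=|\xi|^{-1}$, whence $\sigma(\epsilon)=\sigma(D)\,\sigma(|D|^{-1})=\xi/|\xi|$; the two expressions $D|D|^{-1}$ and $|D|^{-1}D$ coincide because multipliers commute. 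Finally $\tfrac12(1+\mathrm{sign}(n))$ and $\tfrac12(1-\mathrm{sign}(n))$ equal, for $n\neq 0$, the characteristic functions of $\{n>0\}$ and $\{n<0\}$, so $p_{E_+}-\tfrac12(Id+\epsilon)$ and $p_{E_-}-\tfrac12(Id-\epsilon)$ are both smoothing (they are $\pm\tfrac12$ times the $E_0$-projection); passing to formal symbols gives $\sigma(p_{E_+})=\tfrac12(Id+\xi/|\xi|)$ and $\sigma(p_{E_-})=\tfrac12(Id-\xi/|\xi|)$, where here $Id$ denotes the constant symbol $I_n\in M_n(\C)$.

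The only point requiring real care — and the one I would write out in the appendix — is the justification that each of these operators is genuinely a classical pseudo-differential operator and that the composition of two Fourier multipliers is exact: one must verify that the candidate multiplier symbols $\xi\mapsto|\xi|$, $\xi\mapsto|\xi|^{-1}$, $\xi\mapsto\xi/|\xi|$ are bona fide classical symbols (smooth and homogeneous of degree $1$, $-1$, $0$ respectively away from $\xi=0$), use that $|D|$ and $|D|^{-1}$ belong to $Cl(S^1,\C^n)$ by the holomorphic functional calculus, and invoke the translation-invariance of the charts to kill all the $D_x^\alpha$-terms in the composition formula. Once this is in place, the rest is routine bookkeeping and the stated formulas follow.
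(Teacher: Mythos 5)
Your proposal is correct, and there is nothing in the paper to compare it against: Lemma \ref{l1} is stated with citations to \cite{Ma2003,Ma2006-2} and no proof is given in the text or in the appendix. Your Fourier-multiplier argument --- exactness of the composition formula for $x$-independent symbols, the action on the basis $e^{inx}$, and the observation that $p_{E_\pm}-\tfrac12(Id\pm\epsilon)$ is ($\pm\tfrac12$ times) the smoothing projection onto $E_0$ --- is the standard verification and is complete, modulo the symbol-class checks you already flag as appendix material. The only cosmetic point worth noting is that the paper uses two sign conventions for $D$ (it is $i\,d/dx$ in the introduction and $-i\,d/dx$ in the section containing the lemma); your computation uses the latter, which is the one consistent with $\sigma(D)=\xi$ as stated.
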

	
	%\begin{proof}

	Let us now give an easy but very useful lemma:
	\begin{Lemma}\label{l2} \cite{Ma2003}
		Let $f: \R^* \rightarrow V$ be a 0-positively homogeneous function with values in a topological vector space $V$. Then, for any $n \in \N^*$, $f^{(n)} = 0$ where $f^{(n)}$ denotes the n-th derivative of $f$.
	\end{Lemma}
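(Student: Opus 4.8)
The plan is to prove Lemma \ref{l2} by a direct computation showing that the first derivative of a $0$-positively homogeneous function already vanishes, whence all higher derivatives vanish trivially. The homogeneity hypothesis is $f(\lambda t) = f(t)$ for all $\lambda > 0$ and all $t \in \R^*$; the domain $\R^*$ splits into the two half-lines $]0;+\infty[$ and $]-\infty;0[$, and on each of these the scaling action is transitive, so the argument may be carried out separately on each component.

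First I would fix $t_0 > 0$ and observe that for all $\lambda > 0$ we have $f(\lambda t_0) = f(t_0)$, i.e. $f$ is constant on $]0;+\infty[$; similarly $f$ is constant on $]-\infty;0[$. A constant function on an open subset of $\R$ has vanishing derivative of every order, so $f^{(n)} = 0$ on $\R^*$ for every $n \in \N^*$. If one prefers not to invoke constancy directly, the same conclusion follows by differentiating the identity $f(\lambda t) = f(t)$ with respect to $\lambda$ at $\lambda = 1$: this yields $t f'(t) = 0$, hence $f'(t) = 0$ for $t \neq 0$, and then induction on $n$ (differentiating $f^{(n)}(\lambda t)\lambda^n = f^{(n)}(t)$, or simply noting $f'$ is itself the zero function) gives $f^{(n)} = 0$ for all $n \geq 1$.

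A point worth a remark is that differentiating $f$ presupposes $f$ is differentiable; strictly speaking the statement should be read as "if $f$ is moreover $C^\infty$" or "wherever the derivatives exist", which is the only context in which the lemma will be applied (to partial symbols $\sigma_k$ restricted to $\xi = \pm 1$, or to $0$-homogeneous smooth functions such as $\sigma(\epsilon) = \xi/|\xi|$ away from $\xi = 0$). Since the paper uses this only for smooth formal symbols, I would state the argument under that tacit smoothness assumption and not dwell on it.

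I do not expect any genuine obstacle here: the content is essentially that $0$-positive homogeneity forces local constancy along rays, and on a one-dimensional punctured domain the rays exhaust each connected component, so "locally constant along rays" upgrades to "locally constant", and the derivative statement is immediate. The only mild subtlety — keeping the two connected components of $\R^*$ separate, since a $0$-homogeneous function need not be globally constant (e.g. $\xi \mapsto \xi/|\xi|$) — is handled automatically because differentiation is a local operation and each point of $\R^*$ has a neighbourhood inside one component.
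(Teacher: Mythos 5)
Your argument is correct, and it is the standard (indeed the only reasonable) proof: $0$-positive homogeneity gives $f(\lambda t)=f(t)$ for all $\lambda>0$, the scaling action is transitive on each of the two connected components of $\R^*$, so $f$ is locally constant and every derivative vanishes. The paper itself offers no proof of this lemma -- it is cited from \cite{Ma2003} and does not reappear in the Appendix -- so there is nothing to compare against beyond noting that your remark about the tacit smoothness hypothesis, and your observation that the two components must be kept separate (as the example $\xi\mapsto\xi/|\xi|$ shows), are exactly the right points to make explicit.
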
}
 
	From this, we have the following result.
	
	\begin{Proposition} \label{pag} \cite{Ma2003,Ma2006-2}
		Let $A \in \mathcal{F}Cl(S^1,V).$ 
		$ p_+(A) =  \sigma( p_{E_+}) \circ A = A \circ \sigma( p_{E_+})$ and 
		$  p_-(A) =  \sigma( p_{E_-}) \circ A = A \circ \sigma( p_{E_-}).$
	\end{Proposition}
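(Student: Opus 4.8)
The plan is to reduce everything to the level of formal symbols, where composition is pointwise multiplication of homogeneous tensors (combined with the asymptotic composition formula), and to exploit that $\sigma(p_{E_+})$ has a $0$-homogeneous symbol. First I would recall from Lemma \ref{l1} that $\sigma(p_{E_+}) = \tfrac12(Id + \xi/|\xi|)$, which is $0$-positively homogeneous in $\xi$, and that $\sigma(p_{E_+})(x,\xi)$ equals $Id$ on $\{\xi>0\}$ and $0$ on $\{\xi<0\}$; the symmetric statement holds for $\sigma(p_{E_-})$. Then, given $A\in\mathcal{F}Cl(S^1,V)$ with formal symbol $\sigma(A)\sim\sum_{k\le d}\sigma_k(A)$, I would compute $\sigma(p_{E_+})\circ A$ using the composition formula
$$\sigma\bigl(\sigma(p_{E_+})\circ A\bigr) \sim \sum_{\alpha\in\N}\frac{(-i)^\alpha}{\alpha!}\,D^\alpha_\xi\sigma(p_{E_+})\,D^\alpha_x\sigma(A).$$
By Lemma \ref{l2}, since $\sigma(p_{E_+})$ is $0$-positively homogeneous in $\xi$ on each connected component of $T^*S^1\setminus S^1$, all its $\xi$-derivatives $D^\alpha_\xi\sigma(p_{E_+})$ vanish for $\alpha\ge 1$. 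Hence only the $\alpha=0$ term survives and $\sigma(p_{E_+})\circ A$ has symbol $\sigma(p_{E_+})\cdot\sigma(A)$, the pointwise product. On $\{\xi>0\}$ this product equals $\sigma(A)$ and on $\{\xi<0\}$ it equals $0$, which is exactly the definition of $p_+(A)$. The same argument with $A$ on the left, i.e. $A\circ\sigma(p_{E_+})$, again kills all $\alpha\ge1$ terms because now $D^\alpha_x$ hits $\sigma(p_{E_+})$ — but $\sigma(p_{E_+})$ is $x$-independent (it is a constant matrix times a function of $\xi$ only), so $D^\alpha_x\sigma(p_{E_+})=0$ for $\alpha\ge1$ as well, leaving only $\sigma(A)\cdot\sigma(p_{E_+})$; since the two factors are scalar-matrix-compatible (one is a scalar function of $\xi$ times $Id$), the pointwise product is again $p_+(A)$. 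The statement for $p_-$ and $\sigma(p_{E_-})$ follows identically, or by subtracting from the identity $A = p_+(A)+p_-(A)$.

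The key steps, in order, are: (i) record that $\sigma(p_{E_\pm})$ is $0$-homogeneous in $\xi$ and independent of $x$; (ii) invoke Lemma \ref{l2} to annihilate all $D^\alpha_\xi\sigma(p_{E_\pm})$ with $\alpha\ge 1$; (iii) observe $D^\alpha_x\sigma(p_{E_\pm})=0$ for $\alpha\ge1$ trivially; (iv) conclude that both $\sigma(p_{E_\pm})\circ A$ and $A\circ\sigma(p_{E_\pm})$ reduce to the pointwise product $\sigma(p_{E_\pm})\cdot\sigma(A)$; (v) identify this pointwise product with $p_\pm(A)$ by inspecting the two connected components $\{\xi>0\}$ and $\{\xi<0\}$ separately, using the explicit values of $\sigma(p_{E_\pm})$ from Lemma \ref{l1}.

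I expect the only real subtlety — not a genuine obstacle — to be bookkeeping about the smoothing correction: $p_{E_+}$ as a genuine operator on $C^\infty(S^1,\C^n)$ differs from a true pseudo-differential operator by the $L^2$-projection onto the constants $E_0$, which is smoothing and has null formal symbol. So the identities hold at the level of $\mathcal{F}Cl(S^1,V)$, i.e. modulo $Cl^{-\infty}$, which is exactly the category in which the Proposition is stated; I would remark this explicitly so that "$\sigma(p_{E_+})$" is unambiguously read as the formal symbol (equivalently, the class of $p_{E_+}$ in $\mathcal{F}Cl(S^1,V)$) and not as the non-formal operator. A secondary point worth a sentence is commutativity: the equality of the left and right compositions $\sigma(p_{E_+})\circ A = A\circ\sigma(p_{E_+})$ is not automatic for general symbols, but here it holds because $\sigma(p_{E_+})$ is a scalar function of $\xi$ times $Id$, so it is central in $\mathcal{F}Cl(S^1,V)$ — the same mechanism that makes $\epsilon(D)$ central, as noted in the introduction.
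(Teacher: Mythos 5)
Your proof is correct and follows exactly the route the paper intends: it deduces Proposition \ref{pag} from Lemma \ref{l1} (explicit $0$-homogeneous, $x$-independent symbol of $p_{E_\pm}$) and Lemma \ref{l2} (vanishing of $\xi$-derivatives of $0$-homogeneous functions), so that the composition formula collapses to the pointwise product, which is $p_\pm(A)$ by inspection on the two components of $T^*S^1\setminus S^1$. The paper itself only cites \cite{Ma2003,Ma2006-2} for this statement after presenting those two lemmas as the preparation, and your write-up (including the remark on the smoothing projection onto $E_0$) is a faithful reconstruction of that argument.
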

	
	\noindent
	{\textbf{Notation.} For shorter notations, we note by $A_\pm = p_\pm(A)$ the formal operators defined from another viewpoint by $$\sigma (A_+)(x,\xi) \quad (\hbox{ resp. }\sigma (A_-)(x,\xi) ) = \left\{ \begin{array}{ll}
	\sigma (A)(x,\xi) & \hbox{if }\xi > 0 \quad(\hbox{ resp. }\xi < 0 ) \\
	0 & \hbox{if }\xi < 0  \quad(\hbox{ resp. }\xi > 0 )\\
	\end{array} \right.$$
	%and  $$ = \left\{ \begin{array}{ll}
	%0 & \hbox{if }\xi > 0 \\
	%\sigma (A)(x,\xi) & \hbox{if }\xi < 0 \\
	%\end{array} \right.$$ }
	%\begin{proof}
	%	The proof follows from the fact that 
	%	$$\sigma( p_{E_+}) \circ A =  \left\{
	%	\begin{array}{ll}
	 %       A & \hbox{if }\xi > 0 \\
	  %      0 & \hbox{if }\xi < 0 \\
	   %     \end{array} \right. .$$
	%\end{proof}

	\subsubsection{The ``odd-even'' splitting} \label{s:even/odd}
	
	We note by $\sigma(A)(x,\xi)$ the total formal symbol of $A \in \mathcal{F}Cl(S^1,V).$ 
	The following proposition is trivial:
	\begin{Proposition}
		Let 
		$\phi:\mathcal{F} Cl(S^1,V)\rightarrow \mathcal{F}Cl(S^1,V)$ defined by 
		$$\phi(A) = \frac{1}{2}\sum_{k \in \mathbb{Z}}\sigma_{k}(A)(x,\xi) - (-1)^k\sigma_{k}(A)(x,-\xi).$$
		This map is smooth, and $\Psi DO(S^1,V)=\mathcal{F} Cl_{ee}(S^1,V) = Ker(\phi).$
	\end{Proposition}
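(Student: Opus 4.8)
The plan is to treat $\phi$ as a continuous linear projection whose image and kernel are the two ``parity classes'' of partial symbols.

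First I would check that $\phi$ is well defined, i.e. that $\phi(A)$ is again a formal classical symbol whenever $A$ is. Writing $\sigma(A)\sim\sum_{k\le d}\sigma_k(A)$, the candidate partial symbols of $\phi(A)$ are
$$\sigma_k(\phi(A))(x,\xi) = \tfrac12\bigl(\sigma_k(A)(x,\xi) - (-1)^k \sigma_k(A)(x,-\xi)\bigr);$$
since $(x,\xi)\mapsto\sigma_k(A)(x,-\xi)$ is the pull-back of a smooth $k$-positively homogeneous function by the smooth, degree-preserving involution $\xi\mapsto-\xi$ of $T^*S^1\setminus S^1$, each $\sigma_k(\phi(A))$ is again smooth and $k$-positively homogeneous on $T^*S^1\setminus S^1$, it vanishes for $k>d$, so $\phi(A)\in\F Cl^d(S^1,V)$. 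Linearity of $\phi$ is immediate from the formula, so smoothness reduces to continuity; using the Fr\'echet description of $\F Cl^d(S^1,V)$ recalled above (a symbol being encoded by the pair $(\sigma_k(\cdot,1),\sigma_k(\cdot,-1))\in C^\infty(S^1,M_n(\C))^2$ for each $k\le d$), $\phi$ acts on the $k$-th factor as the fixed linear map $(a,b)\mapsto\tfrac12(a-(-1)^k b,\,b-(-1)^k a)$, which is continuous; hence $\phi$ is continuous on each $\F Cl^d(S^1,V)$ and therefore on $\F Cl(S^1,V)$, i.e. smooth.

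Second, a one-line computation gives $\phi\circ\phi=\phi$ (the factor $\tfrac12$ is chosen precisely so that the cross terms recombine, using $(-1)^{2k}=1$), so $\phi$ is a projection and $\F Cl(S^1,V)=\mathrm{Ker}(\phi)\oplus\mathrm{Im}(\phi)$. By construction $\phi(A)=0$ if and only if $\sigma_k(A)(x,-\xi)=(-1)^k\sigma_k(A)(x,\xi)$ for every $k\in\Z$, which is exactly the defining parity condition of the even-even class, so $\mathrm{Ker}(\phi)=\F Cl_{ee}(S^1,V)$. It then remains to recall the identification $\F Cl_{ee}(S^1,V)=\Psi DO(S^1,V)$, which is the content of \cite{MR2018}; I would also include the short argument: if $A$ is even-even, its order-$k$ partial symbol, being $k$-homogeneous on each half-line and satisfying the parity relation, has the form $\sigma_k(A)(x,\xi)=\xi^k a_k(x)$ with $a_k\in C^\infty(S^1,M_n(\C))$, now valid for all $\xi\ne0$; since $\sigma(D)=\xi$ by Lemma~\ref{l1} and $C^\infty(S^1,V)((\partial^{-1}))=C^\infty(S^1,V)((D^{-1}))$, this says $A=\sum_{k\le d}a_k D^k\in\Psi DO(S^1,V)$. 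Conversely the generators $a\in C^\infty(S^1,M_n(\C))$ (order $0$) and $\partial$ (order $1$, symbol $i\xi$) are even-even, and the even-even class is stable under composition: in the partial-symbol formula $\sigma_k(A\circ B)=\sum_{\alpha}\sum_{m+n-\alpha=k}\frac{(-i)^\alpha}{\alpha!}D^\alpha_\xi\sigma_m(A)\,D^\alpha_x\sigma_n(B)$, the substitution $\xi\mapsto-\xi$ multiplies $D^\alpha_\xi\sigma_m(A)$ by $(-1)^{m+\alpha}$ and $D^\alpha_x\sigma_n(B)$ by $(-1)^n$, hence the whole term by $(-1)^{m+n+\alpha}=(-1)^{m+n-\alpha}=(-1)^k$.

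As for the main obstacle: there is genuinely none worth the name, which is why the statement is labelled trivial. The only two points needing (minor) care are fixing the topology so that ``smooth'' is unambiguous — for which the Fr\'echet structure on each $\F Cl^d(S^1,V)$ recalled in the preceding remark suffices, $\phi$ being visibly continuous and linear — and the sign bookkeeping in the composition formula that identifies $\F Cl_{ee}$ with the subalgebra $\Psi DO$; both are routine, and the latter is in any case available in \cite{MR2018}.
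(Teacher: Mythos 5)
Your proof is correct, and it takes the only natural route: the paper itself offers no argument here (it labels the proposition ``trivial'' and does not revisit it in the Appendix), so your write-up simply supplies the routine verification — well-definedness and continuity of $\phi$ on each $\F Cl^d(S^1,V)$, the idempotence computation, the identification of $\mathrm{Ker}(\phi)$ with the parity condition defining $\F Cl_{ee}(S^1,V)$, and the sign bookkeeping showing $\F Cl_{ee}(S^1,V)=\Psi DO(S^1,V)$ — that the authors leave implicit and delegate to \cite{MR2018}.
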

	
	Following \cite{Scott}, one can define \textbf{even-odd class} pseudo-differential operators 
	$$\mathcal{F}Cl_{eo}(S^1,V) = \left\{ A \in\mathcal{F} Cl(S^1,V) \, | \, \sum_{k \in \mathbb{Z}}\sigma_{k}(A)(x,\xi) + (-1)^{k}\sigma_{k}(A)(x,-\xi) = 0 \right\}.$$
	\begin{rem}
		This terminology is inherited from \cite{Scott}. This reference is mostly concerned with non-formal operators. We have also to mention that the class of non formal  even-even pseudo-differential operators was first described in  \cite{KV1,KV2}. In these two references, even-even class pseudo-differential operators are called odd class pseudo-differential operators. By the way, following the terminology of \cite{KV1,KV2} even-odd class pseudo-differential operators should be called even class. In this paper we prefer to fit with the terminology given in the textbooks\cite{PayBook,Scott} even if the initial terminology given in \cite{KV1,KV2} and its natural extension would appear more natural to us.
	\end{rem}

	\begin{Proposition}
		$\phi$ is a projection and $\mathcal{F}Cl_{eo}(S^1,V) = Im \phi.$
	\end{Proposition}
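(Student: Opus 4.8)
The approach is to work entirely at the level of partial symbols, exploiting that $\phi$ respects the grading by order: writing $\sigma_k(A)$ for the order-$k$ partial symbol of $A\in\F Cl(S^1,V)$, the definition of $\phi$ amounts to
\[
\sigma_k(\phi(A))(x,\xi)=\tfrac12\bigl(\sigma_k(A)(x,\xi)-(-1)^k\sigma_k(A)(x,-\xi)\bigr),\qquad k\in\Z .
\]
First I would check that this really produces an element of $\F Cl(S^1,V)$: the map $\xi\mapsto\sigma_k(A)(x,-\xi)$ is again smooth on $T^*S^1\setminus S^1$ and $k$-positively homogeneous in $\xi$ (for $t>0$ one has $\sigma_k(A)(x,-t\xi)=t^k\sigma_k(A)(x,-\xi)$), so $\sigma_k(\phi(A))$ is a legitimate partial symbol of order $k$, it vanishes whenever $\sigma_k(A)$ does, and hence $\phi$ preserves the order and is well defined and linear (smoothness being already recorded in the preceding proposition).

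For the idempotency $\phi^2=\phi$, I would substitute the displayed formula into itself: the order-$k$ component of $\phi^2(A)$ equals
\[
\tfrac12\Bigl(\tfrac12\bigl(\sigma_k(A)(x,\xi)-(-1)^k\sigma_k(A)(x,-\xi)\bigr)-(-1)^k\,\tfrac12\bigl(\sigma_k(A)(x,-\xi)-(-1)^k\sigma_k(A)(x,\xi)\bigr)\Bigr),
\]
and expanding, using $(-1)^{2k}=1$, the two copies of $\sigma_k(A)(x,\xi)$ add and the two copies of $(-1)^k\sigma_k(A)(x,-\xi)$ add, so the inner bracket collapses to $\sigma_k(A)(x,\xi)-(-1)^k\sigma_k(A)(x,-\xi)=2\,\sigma_k(\phi(A))(x,\xi)$ and the outer $\tfrac12$ returns $\sigma_k(\phi(A))(x,\xi)$. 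As this holds for every $k$, $\phi^2=\phi$.

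For the identification of the image I would introduce the complementary map $\psi:=\mathrm{Id}-\phi$, whose order-$k$ component is $\tfrac12\bigl(\sigma_k(A)(x,\xi)+(-1)^k\sigma_k(A)(x,-\xi)\bigr)$; note $\psi$ is, up to the factor $2$, precisely the even-even projection of the previous proposition, and the defining condition for membership in $\F Cl_{eo}(S^1,V)$, namely $\sum_{k}\bigl(\sigma_k(A)(x,\xi)+(-1)^k\sigma_k(A)(x,-\xi)\bigr)=0$, is exactly the equation $\psi(A)=0$. The one point deserving a word is that the vanishing of this whole series forces the vanishing of every summand: each summand is $k$-positively homogeneous in $\xi$, hence is precisely the order-$k$ homogeneous component of the series, and two formal symbols agree if and only if all their homogeneous components agree. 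Therefore $\F Cl_{eo}(S^1,V)=\ker\psi=\ker(\mathrm{Id}-\phi)=\mathrm{Im}\,\phi$, the last equality being the standard fact that the fixed-point set of an idempotent is its image. None of these steps presents a genuine difficulty; the only place calling for care is the homogeneity bookkeeping just mentioned — on one side that $\phi(A)$ lands in the class of \emph{classical} formal symbols (each $\sigma_k(\phi(A))$ homogeneous of the right degree), and on the other side that ``the formal series is zero'' may legitimately be read off degree by degree — everything else being the one-line symbol algebra displayed above.
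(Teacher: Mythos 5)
Your proof is correct and is precisely the routine symbol-by-symbol verification the paper omits (the appendix contains no proof of this proposition): idempotency from the degree-$k$ formula using $(-1)^{2k}=1$, and $\mathrm{Im}\,\phi=\ker(\mathrm{Id}-\phi)=\F Cl_{eo}(S^1,V)$ since each term of the defining series is $k$-positively homogeneous and a formal symbol vanishes iff all its homogeneous components do. Nothing to add.
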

	By the way, we also have
	$$ \mathcal{F}Cl(S^1,V) = \mathcal{F}Cl_{ee}(S^1,V) \oplus \mathcal{F}Cl_{eo}(S^1,V).$$
	We have the following composition rules for the class of a formal operator $A\circ B:$
	\vskip 12pt
	
	\begin{tabular}{|c|c|c|}
		\hline
		&& \\
		& $A$ even-even class & $A$ even class \\
		&& \\
		\hline
		&& \\
		$B$ even-even class & $A \circ B$ even-even class & $A \circ B$ even-odd class \\
		&& \\
		\hline
		&& \\
		$B$ even-odd class & $A \circ B$ even-odd class & $A \circ B$ even-even class \\
		&& \\
		\hline
	\end{tabular}
	
	\begin{example}
		$\epsilon(D)$ and $|D|$ are even-odd class, while we already mentioned that differential operators are even-even class.
	\end{example}

{
	\begin{rem}
		The operator $\epsilon(D)$ satisfies the following properties:
		\begin{itemize}
			\item Since $\epsilon(D)^2=Id,$ the left composition $A \in \F Cl(S^1,V)\mapsto \epsilon(D)\circ A$ is an involution on $\F Cl(S^1,V) $
			\item Since $\epsilon(D) \in \F Cl_{eo}(S^1,V),$ the  restriction of $\epsilon(D) \circ (.)$ to $\Psi DO(S^1,V)= \F Cl_{ee}(S^1,V)$ is a bijection from $\F Cl_{ee}(S^1,V)$ to $\F Cl_{eo}(S^1,V).$
		\end{itemize}
	\end{rem}
}
	One can also define the operator $s$ on $\F Cl(S^1,V)$ which extends the operator $s: T^*S^1 \rightarrow T^*S^1$ defined by $s(x,\xi) = (x,-\xi)$ by $$ s : \sum_n \sigma_{n}(x,\xi) \mapsto \sum_n  (-1)^n\sigma_{n}\left(s(x,\xi)\right).$$ 
This operator obviously satisfies $s^2 = Id,$ and we remark the following properties:
\begin{Proposition}
	\begin{itemize}
		\item $s\left( \F Cl_\pm(S^1,V)\right) = \F Cl_\mp(S^1,V)$
		\item $\F Cl_{ee}(S^1,V) = Ker(Id - s)$
		\item $\F Cl_{eo}(S^1,V) = Ker(Id + s)$
	\end{itemize}
\end{Proposition}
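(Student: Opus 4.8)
The plan is to verify that $s$ is a well-defined linear endomorphism of $\F Cl(S^1,V)$ acting diagonally on the grading by homogeneity degree, to recognize that the resulting operator is nothing but $Id-2\phi$ for the projection $\phi$ of section \ref{s:even/odd}, and then to let the elementary linear algebra of involutions finish the job.

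First I would check well-posedness. If $\sigma_n$ is $n$-positively homogeneous in $\xi$ and smooth on $T^*S^1\setminus S^1$, then $(x,\xi)\mapsto(-1)^n\sigma_n(x,-\xi)$ is again smooth on $T^*S^1\setminus S^1$ and, for $t>0$, satisfies $(-1)^n\sigma_n(x,-t\xi)=(-1)^n t^n\sigma_n(x,-\xi)$, hence is again $n$-homogeneous. Thus $s$ sends the order-$n$ partial symbol of $A$ to a partial symbol of the same order $n$, namely $(-1)^n\sigma_n(A)(x,-\xi)$; in particular $s$ preserves the order, and $\sum_n\sigma_n\mapsto\sum_n(-1)^n\sigma_n(x,-\xi)$ is an admissible asymptotic expansion, so $s$ does map $\F Cl(S^1,V)$ into itself. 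The identity $s^2=Id$ is immediate from $(-1)^{2n}=1$ and $-(-\xi)=\xi$.

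For the first item, recall that $A\in\F Cl_+(S^1,V)$ means $\sigma(A)(x,\xi)=0$ for $\xi<0$, equivalently, by uniqueness of the homogeneous decomposition, $\sigma_n(A)(x,\xi)=0$ for $\xi<0$ and every $n$. Then the order-$n$ partial symbol of $s(A)$, which equals $(-1)^n\sigma_n(A)(x,-\xi)$, vanishes for $\xi>0$, so $s(A)\in\F Cl_-(S^1,V)$ and $s(\F Cl_+(S^1,V))\subseteq\F Cl_-(S^1,V)$. The same computation with $+$ and $-$ exchanged gives $s(\F Cl_-(S^1,V))\subseteq\F Cl_+(S^1,V)$; applying $s$ once more and using $s^2=Id$ upgrades both inclusions to equalities.

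For the remaining two items, the key point is that, with the grading-diagonal description of $s$ just obtained, the projection $\phi$ of section \ref{s:even/odd} is precisely $\phi=\frac12(Id-s)$. Since $s$ is an involution, $\frac12(Id-s)$ is the projection onto $Ker(Id+s)$ with kernel $Ker(Id-s)$, whence $\F Cl_{ee}(S^1,V)=Ker(\phi)=Ker(Id-s)$ and $\F Cl_{eo}(S^1,V)=Im(\phi)=Ker(Id+s)$. Alternatively one may argue termwise: matching homogeneous components in the defining relation of $\F Cl_{ee}(S^1,V)$, resp. of $\F Cl_{eo}(S^1,V)$, rewrites it as $\sigma_k(A)(x,\xi)=(-1)^k\sigma_k(A)(x,-\xi)$, resp. $\sigma_k(A)(x,\xi)=-(-1)^k\sigma_k(A)(x,-\xi)$, for all $k$, that is $A=s(A)$, resp. $A=-s(A)$. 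No step here is genuinely hard; the only point deserving care is the bookkeeping that $s$ respects the homogeneity grading, which is exactly what legitimizes passing from the ``summed'' defining relations to their termwise forms and the identity $\phi=\frac12(Id-s)$.
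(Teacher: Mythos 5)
Your proof is correct. The paper itself offers no proof of this Proposition (it is stated as a routine observation and does not reappear in the Appendix where the other technical verifications are collected), so there is nothing to compare against; your argument supplies exactly the intended verification. The one genuinely useful observation you make explicit --- that $s$ acts diagonally on the homogeneity grading, so that the ``summed'' defining relations of $\F Cl_{ee}$ and $\F Cl_{eo}$ can be read termwise and $\phi$ is literally $\frac12(Id-s)$ --- is the right way to organize it, and the involution $s^2=Id$ then gives the last two items for free.
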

\begin{rem}
	One can consider also $s':\sum_n \sigma_{n}(x,\xi) \mapsto \sum_n \sigma_n \left(s(x,\xi)\right).$ We still have $s'^2 = Id,$  $s'\left( \F Cl_\pm(S^1,V)\right) = \F Cl_\mp(S^1,V)$ but the two other properties are not fulfilled.
\end{rem}
Under these properties, $\F Cl_{ee}(S^1,V)$ and $\F Cl_{eo}(S^1,V)$ appear respectively as eigen-spaces for the eigen values $1$ and $-1$ of the symmetry $s$, and hence an operator $a \in  \F Cl(S^1,V) = \F Cl_{ee}(S^1,V) \oplus \F Cl_{eo}(S^1,V)$ decomposes as $ a = a_{ee} + a_{eo}$
and $$ s[a,b] = [a,b]_{ee} - [a,b]_{eo} = \left([a_{ee},b_{ee}] + [a_{eo}, b_{eo}]\right) -\left([a_{ee},b_{eo}] + [a_{eo}, b_{ee}] \right).$$  

\subsection{Complex powers of a formal  pseudo-differential operator} \label{ss:complexpow}
Following \cite{FMR1993} inspired by \cite{See}, this is possible to define the complex power of an elliptic formal operator. Concerning formal operators,
ellipticity is fully obtained by a condition on the principal symbol of the operator. This provides the possibility, when the algebra of functions $R$ is e.g. a complete topological vector space with bounded addition and multiplication laws, 
to define complex powers $A^\alpha$ of a formal  operator $A$ for $\mathfrak{Re}(\alpha)<0$ via contour integrals similar to (\ref{eq:integral}) 
and then extend it to arbitrary complex powers. 
Beyond these technical problems, for any formal $\C-$algebra of functions $R$ with differentiation $\partial,$ it is possible to define the same complex powers of the Lax-type operators $L \in \Psi DO(R)$ present in the KP hierarchy, along the lines of \cite{KZ} and \cite{EKRRR1995}. Let $\alpha \in \C$ and let $\Psi DO^\alpha(R)$
be the affine space of formal series of the form 
$$ \sum_{k \in \N} a_{\alpha - k} \partial^{\alpha - k},$$ formally defined as $\Psi DO^\alpha(R) = \Psi DO^0(R). \partial^\alpha.$
On the total spce of formal pseudo-diferential operators of complex order
	 generated by the family  $\left(\Psi DO^\alpha(R)\right)_{\alpha \in \C},$ the same addition and multiplication rules as in $\Psi DO(R)$ holds true and consistent.}
Let $A  \in \Psi DO^\alpha(R)$ with  $a_\alpha \in \R_+^* \subset R,$  one can define $$\log(A) \in \alpha \log a\partial + \Psi DO^0(R)$$ such that $ \exp\left(\log(A)\right) = A$
by standard rules of formal series. 

Let $L \in \Psi DO^1(R)$ with principal symbol $\partial.$ We can then define the complex power $L^\alpha$ for $\alpha \in \C^*,$ and following the notations of \cite{EKRRR1995,KW}, the affine space
$\mathcal{L} =  \partial + \Psi DO^0(R)$
has an affine isomorphism, for $\alpha \in \C^*,$ with 
$$ \mathcal{L}^\alpha =  \partial^\alpha + \Psi DO^{\alpha-1}(R)$$
through the identification $ L \in \mathcal{L} \mapsto L^\alpha = \exp\left(\alpha\log(L)\right) \in \mathcal{L}^\alpha.$
From this construction on $\Psi DO(S^1,\K),$ one can push forward complex powers on subalgebras of $\F Cl(S^1,\K)$ via the identifications already described. More precisely, one use heuristically the bijection 
$\Phi_{1,0} : \Psi DO (S^1,\K) \rightarrow \F Cl_+(S^1,\K)$
to define, for $A = \frac{d}{dx}_+ + \sum_{k \leq 0} a_k \frac{d}{dx}^k_+\in \F Cl_+^1(S^1,\K),$
first the logarithm  $$\log A = \log \frac{d}{dx}_+ + \sum_{k \leq 0} a_k \frac{d}{dx}^k_+$$
and the complex power $A^\alpha = \exp\left(\alpha\log A\right)$
which formal symbol vanishes for $\xi <0.$
Then we define $$\F Cl^\alpha_+ (S^1,\K) = \F Cl^0_+ (S^1,\K)\frac{d}{dx}_+^\alpha$$ and (after these constructions) $\Phi_{1,0}$ extends naturally to a bijection from $\Psi DO^\alpha(S^1,\K)$ to $\F Cl^\alpha_+ (S^1,\K).$ The same construction holds to extend the identification of  $\Psi DO(S^1,\K)$ with $\F Cl_- (S^1,\K)$ to complex powers
$\Phi_{0,1}: \Psi DO^\alpha(S^1,\K) \rightarrow \F Cl_-^\alpha (S^1,\K)$
and define 
$$  \F Cl^\alpha (S^1,\K) =  \F Cl_+^\alpha (S^1,\K) \oplus  \F Cl_-^\alpha (S^1,\K) = (\Phi_{1,0}\times \Phi_{0,1})\left(\Psi DO^\alpha(S^1,\K)^2\right).$$
One can also understand $  \F Cl^\alpha (S^1,\K)$ as
$  \F Cl^\alpha (S^1,\K) =   \F Cl^0 (S^1,\K) |D|^\alpha$
where $|D|^\alpha = \Delta^{\frac{\alpha}{2}} $ is defined via Seeley's complex powers \cite{See}. Alternatively, setting $$\left(\frac{d}{dx}\right)^\alpha = \left(\frac{d}{dx}\right)_+^\alpha + \left(\frac{d}{dx}\right)_-^\alpha = i\epsilon(D) |D|^\alpha,$$
we get $  \F Cl^\alpha (S^1,\K) =   \F Cl^0 (S^1,\K) \left(\frac{d}{dx}\right)^\alpha.$
These spaces of complex powers contain the projections on formal operators (up to smoothing oprators) of the classes of pseudo-differential operators of complex order defined in 
\cite[section 3]{KV1}.

%From this remark, one can define the complex power of 
%$$ L \in |D| + \F Cl^0(S^1,\C)$$ as $$L^\alpha = \left(\Phi_{1,0} \times \Phi_{0,-1}\right)\left(\left(\Phi_{\epsilon(D),1}^{-1} (L)\right)^\alpha\right).$$ 

\subsection{Lie-algebraic digression} \label{ss:Lie}

\subsubsection{Operator bialgebras and Manin pairs}

One can easily define a Lie algebra structure by antysimmetrisation of the associative product $[A,B] = A\circ B - B\circ A$. We remark that the vector field Lie algebra ${\rm Vect}(S^1)$ and its semi-direct product with $C^{\infty}(S^1) = C^\infty(S^1,\C)$ is a natural Lie subalgebra of the differential operator Lie algebra $DO(S^1)$ which is formed by the order 1 differential operators and the order less or equal to 1. {This remark can be also deduced from Definition \ref{def:diff-op} by setting $E = S^1 \times \C,$ i.e. $V=\C.$ When $V = \C^n$ with $n \geq 2,$ an operator $X \in Vect(S^1)$ can be identified with the degree 1 differential opeartor $X \otimes Id_{\C^n} \in DO(S^1,V)$ while order $0$ differential operators coincide with multplication operators in $C^\infty(S^1,M_n(\C)).$ We also have that $C^\infty(S^1,M_n(\C)) \rtimes Vect(S^1) \subset DO^1(S^1,V)$ as a Lie algebra, but the off-diagonal operator $$A= \frac{d}{dx} \otimes \left(\begin{array}{cc}
	0 & 1 \\ 1 & 0
	\end{array}\right) = \left(\begin{array}{cc}
	0 & \frac{d}{dx} \\ \frac{d}{dx} & 0
	\end{array}\right) \in DO^1(S^1,V)$$
	is not an operator in the Lie algebra $C^\infty(S^1,M_n(\C)) \rtimes Vect(S^1).$ 
	One can always embed $DO(S^1)=DO(S^1,\C)$ into $DO(S^1,V)$ by identifying  $A \in DO(S^1)$ with $A \otimes Id_{\C^n} \in  DO(S^1,V).$ This identification is a morphism of unital algebras and a morphism of Lie algebras. 
}{It is a straightforward to check that the similar antisymmetrization of the product (\ref{asspsi}) gives a Lie algebra structure on $\Psi DO(S^1)$ and the algebra $DO(S^1)$ is a Lie subalgebra in it.}

One of the most exciting properties of this pair of infinite-dimensional Lie algebras is an existence of a trace functional (which is quite atypical in the infinite-dimensional world{). This functional is known as {\it Adler trace} $${\rm Tr}(A) = \oint_{S^1}tr_n( a_{-1} (x)) dx,$$ where $tr_n$ is the classical trace of $n \times n$ matrices,} and it defines a bilinear invariant symmetric form on $\Psi DO(S^1)$
$$ (A,B) \to {\rm Tr}(A\circ B),\quad A,B \in \Psi DO(S^1),$$
which is invariant with respect the multiplication: $(C\circ A,B) = (A,B\circ C)$ and also invariant with respect to the Lie bracket: $([C, A],B) = (A,[B, C])$ for any triple $A,B,C\in\Psi DO(S^1).$ 
{This form is a non-degenerate and  can be used to {build an injective map from} the algebra $\Psi DO(S^1)$ with its {\it dual}: to each $A\in \Psi DO(S^1)$ one can assign the linear functional $l_A \in (\Psi DO(S^1))^{\ast}$ such that $l_A (X) = {\rm Tr}(A\circ X)$}

{Let $A\in \Psi DO(S^1)$ such that it contains only {\it negative degrees} of the symbol $D =\partial$:
$$A = \sum_{k=-\infty} ^{-1} b_k (x) \partial^k.$$}
Such "purely Integral"operators are also closed with respect to both operations $\circ$ and $[,]$ and we shall denote this subalgebra in  $\Psi DO(S^1)$ by  $IO(S^1).$
It is easy to check that the subalgebra $DO(S^1)$ is dual to the subalgebra $IO(S^1)$ via the bilinear invariant form $(-,-)$ and the "full" algebra 
$\Psi DO(S^1) = DO(S^1) \oplus IO(S^1).$ Both subalgebras are isotropic  with respect to $(-,-).$ 
{The algebra triple $$(\Psi DO(S^1),DO(S^1), IO(S^1))$$ is known as a {\it Manin triple} and the algebra $DO(S^1)$ carries a structure of a {\it Lie bialgebra}.
We should admit that strictly speaking this triple and this bialgebra are not a genuine example of both structures in view of the following remark:}
{
\begin{rem}
We should remark that while  $DO(S^1) = (IO(S^1))^{\ast}$ the natural map  $IO(S^1)\to (DO(S^1))^{\ast}$ is not surjective since not every continuous linear functional on 
$C^{\infty}(S^1)$ is of the form $F \to (F,f), \quad F\in C^{\infty}(S^1)$ (\cite{Drin}).
\end{rem}
}
In what follows by abuse of the rigorous terminology ("pseudo-Manin triple", "pseudo-Lie bialgebra", "Khovanova triple" etc.) we shall call the operator triple above by
Manin triple and refer $DO(S^1)$ as a Lie bialgebra.

\subsubsection{Differential and integral part}

We first remind that if $V=\C^n$ and use the notations
$$ { DO}(S^1,V) = \bigcup_{o \in \N}\left\{ \sum_{0\leq k \leq o} a_k \partial^k \right\},\quad{ IO}(S^1,V) = \left\{ \sum_{k \leq -1} a_k \partial^k \right\}$$ 
we get also  the {vector space} decomposition 
\begin{eqnarray}
\label{psi-DS} \Psi DO(S^1,V) = { DO}(S^1,V)\oplus { IO}(S^1,V).
\end{eqnarray}
{such that any (matrix) order $k$ pseudo-differential operator $A = \sum_{i=-\infty}^{k}a_i \partial^i$ is splitted  in two components $A = A_{+} + A_{-}$ with 
$A_{+} =  \sum_{i=0}^{k}a_i \partial^i$ and $A_{-} =  \sum_{i=-\infty}^{-1}a_i \partial^i.$
In that case, when $V = \C^n$ and with obvious extension of notations, the algebra triple $\left(\Psi DO(S^1,V),DO(S^1,V), IO(S^1,V) \right)$ is known as a Manin triple and the algebra $DO(S^1,V)$ carries a structure of a  Lie bialgebra.}{
We shall use also (by abuse of notation)  the notation ${\rm Res}(A)$ for the {\it residue-matrix function}: 
$${\rm Res}: M_n(\Psi DO(S^1,\C)) \to C^{\infty}(S^1,M_n(\mathbb C)), \ , A \to a_{-1}(x)$$}
Let $A,B$ be some matrix-valued pseudo-differential operators, such that  $A = \sum_{i=-\infty}^{k}a_i \partial^i,\, B = \sum_{j=-\infty}^{l}b_j \partial^i$
with $a_j,b_j$ some matrix-valued functions. Then it is a straightforward exercise to check that  there exists a matrix-valued function $F$ such that
$${\rm Tr}([A,B]) = \oint tr_n ({\rm Res} [A,B])= \oint dF = 0.$$

\begin{rem}
The same holds when we replace concrete algebras of functions $C^\infty(S^1)$ by an abstract associative algebra $\mathcal R$ with unit element, equipped with integration properties, we refer to \cite{M1,M3} for a detailed description for the corresponding algebra of formal operators {$\Psi DO(	\mathcal R).$} Then, in presence of a non-trivial one-form 
$\oint : {\mathcal R} \rightarrow \C,$  one can define an analogous ot the Adler map that we note also  ${\rm Tr}$ by 
$$ {\rm Tr} : \sum_{k \in \Z} a_{k} \partial^k \mapsto \oint a_{-1}.$$ For example, when $\mathcal{R} = C^\infty(S^1,M_n(\C))$ for $n\geq 2,$ i.e. when 
$$\Psi DO(\mathcal R) = \Psi DO(S^1,\C^n)= M_n(\Psi DO(S^1,\C)),$$ 
the natural 1-form $\oint$ on $\mathcal R$ is exactly $\oint_{S^1} \circ tr_n$ already described. 
\end{rem}

\subsection{Poisson structures on matrix pseudo-differential operators}.\label{s:poisson}
{In analogy with the "scalar" ($n=1$) case one can define the first and the second Gelfand-Dikii Poisson structures in the framework of the formal Gelfand "variational" differential-geometric formalism in the infinite-dimensional setting. The results of this subsection are not new and are well-known since almost 30 years (see for example \cite{AdBilal}).}
{We define an infinite dimensional affine variety $L_k$ whose points, monic differential operators of order $k$, are defined by $k$ matrix function coefficients 
$\bar u=(u_1(x),\ldots, u_k(x))$  such that  $\forall j : 1\leq j\leq k, \, u_j (x) \in C^{\infty}(S^1,M_n(\mathbb C)):$
$$ L_k =\{ L = \partial^k + u_1\partial^{k-1} + \ldots + u_k \}.$$ }
{We consider a function algebra ${\mathcal C}(L_k)$ as a set of functionals $l : L_k \to \mathbb C$ of type
$$ l[\bar u] := \oint {\rm tr}( {\rm pol}( \partial^{\alpha}_x(u_j))), $$ where ${\rm pol}( \partial^{\alpha}_x(u_j))$ is a differential polynomial on $u_j (x).$}
{The tangent space to $L_k$ consists of differential operators of order $k-1$ and the cotangent space $T^{\ast} L_k$ can be identified with the quotient 
$IO (S^1,V)/IO_{-k}(S^1,V): $ via the coupling $T^{\ast} L_k \times T L_k \to {\mathcal C}(L_k),\quad  \langle X, V\rangle = {\rm Tr}(X\circ V).$ Here $X\in T^{\ast} L_k$ is the set of "covectors"  of the type $X = \sum_{j=1}^{k} \partial_x^{-j} \circ p_j, \, p_j \in{\rm pol}( \partial^{\alpha}_x(u_j).$}
{We shall remind the definition of {\it variational derivative} of a functional $l[\bar u] \in {\mathcal C}(L):$
$$
\frac{\delta l[\bar u]}{\delta u_j}(x)_{pq}=\sum_{s=o}^{\infty}(-1)^{s} \frac{d^{r}}{dx^{r}}\left( \frac{\partial {\rm tr}(\rm pol)(\bar u)(x)}{\partial(u_j^{(s)})_{pq}}\right), 1\leq p,q \leq n.
$$
The variational derivative assigns to each functional
$l[\bar u]\in \mathcal C(L)$ the pseudo-differential operator $$X_l = \sum_{r=0}^{k}\partial^{-r}\left(\frac{\delta l[\bar u]}{\delta u_{k+1-r}}\right).$$}
{Let $X_{l_{1,2}}$ be two such operators which can be interpreted as two covectors on $T^{\ast}L_k$. We define a family of brackets 
$$\{-,-\}_{\lambda} :\mathcal C(L)\times \mathcal C(L) \to \mathcal C(L) :$$
$$\{l_1, l_2\}_{\lambda}(L) = \oint tr_n ({\rm Res}((L+\lambda)(X_{l_1}(L+\lambda))_{+}X_{l_2} - ((L+\lambda)X_{l_1})_{+}(L+\lambda)X_{l_2}) =$$
$$\{l_1, l_2\}_2(L) +\lambda\{l_1, l_2\}_1(L)=$$
$$ \oint tr_n ({\rm Res}(L(X_{l_1}L)_{+}X_{l_2} - (LX_{l_1})_{+}LX_{l_2})) +\lambda \oint tr_n ({\rm Res}([L,X_{l_1}]_{+}X_{l_2}).$$}
{
\begin{Theorem}\label{AGD} (Adler-Gelfand-Dickey)
\begin{enumerate}
\item The family $\{-,-\}_{\lambda}$ is a family of Poisson structures on ${\mathcal C}(L)$;
\item The corresponding Hamiltonian map $ H_{\lambda} : T^{\ast} L_k \to T L_k :$ is given by
$$H_{\lambda}(X) = (LX)_{+}L  - L(XL)_{+}  +\lambda [L,X]_{+}, \quad X\in  T^{\ast} L_k, \, L\in L_k.$$
\item $H_{\lambda}(X) =H_{2}(X) +\lambda H_{1}(X)$ and each $V_i, i=1,2$ are Hamiltonian mappings.
\item The Hamiltonian maps $H_i$ relate to the Poisson brackets via 
$$ \{l_1, l_2\}_{\lambda}(L) = H_{\lambda}(\delta l_1)(l_2).$$
\item Covector fields $T^{\ast}L_k$ carry a Lie algebra structure with the bracket 
$$[X, Y] = [(XL)_{+}Y+(YL)_{-}X - X(LY)_{-} - Y(LX)_{+} + H_2(X)Y)-H_2(Y)(X)]_{-}$$
which will be called {\rm the second Gelfand-Dikii algebra} $GD_2.$
\end{enumerate}
\end{Theorem}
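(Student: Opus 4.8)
# Proof Plan for Theorem \ref{AGD} (Adler-Gelfand-Dickey)

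\begin{proof}[Proof strategy]
The result is classical (see e.g. \cite{AdBilal,D}) and its proof is organized around the classical $r$-matrix $R = \frac{1}{2}(\pi_{+} - \pi_{-})$ on the associative algebra $\Psi DO(S^1,V)$, where $\pi_{+},\pi_{-}$ are the projections onto $DO(S^1,V)$ and $IO(S^1,V)$ of the Manin triple of section \ref{ss:Lie}; since both pieces are Lie subalgebras, $R$ solves the modified classical Yang--Baxter equation. Throughout one uses: (i) ${\rm Tr}(A\circ B) = {\rm Tr}(B\circ A)$, equivalently ${\rm Tr}([A,B]) = 0$; (ii) ${\rm Tr}(A\circ B) = 0$ when $A,B$ both lie in $DO(S^1,V)$ (their product is in $DO$) or both lie in $IO(S^1,V)$ (their product has order $\le -2$), which lets the projections $(\cdot)_\pm$ be shuttled across the pairing; (iii) the identification of $T^\ast L_k$ with $IO(S^1,V)/IO_{-k}(S^1,V)$ through $\langle X,V\rangle = {\rm Tr}(X\circ V)$, together with the characterization of $X_l$ as the gradient of $l$, i.e. $\frac{d}{dt}\big|_{t=0}\,l(L+tV) = {\rm Tr}(X_l\circ V)$ for $V\in T_L L_k$.

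Parts (2), (3), (4) are pure bookkeeping. Expanding $\{l_1,l_2\}_\lambda(L)$ and re-ordering the factors by (i)--(ii) exhibits it as the pairing $\langle X_{l_2},\,H_\lambda(X_{l_1})\rangle$ with $H_\lambda$ the affine map displayed in (2); this is (2), and (4) follows from the gradient characterization of $X_l$, while (3) is obtained by separating the coefficient of $\lambda$, giving the quadratic part $H_2$ and the linear part $H_1(X) = [L,X]_+$. One must also check that $H_\lambda$ sends $T^\ast L_k$ into $T_L L_k$, i.e. that $H_\lambda(X)$ is a differential operator of order $\le k-1$: for $[L,X]_+$ this is immediate, since $\mathrm{ord}([L,X])\le k-1$ when $L$ is monic of order $k$ and $\mathrm{ord}(X)\le -1$; for $(LX)_+L - L(XL)_+$ a short principal-symbol computation shows that monicity of $L$ makes the terms of order $\ge k$ cancel.

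The substance is part (1). Skew-symmetry of $\{-,-\}_\lambda$ follows from the rewriting of (2)--(4) together with (i)--(ii): for instance ${\rm Tr}([L,X]_+\circ Y) = {\rm Tr}([L,X]\circ Y) = -{\rm Tr}(X\circ[L,Y]) = -{\rm Tr}([L,Y]_+\circ X)$ for $X,Y \in IO(S^1,V)$, and the quadratic part is handled the same way. For the Jacobi identity I would invoke the $r$-matrix theory of Semenov-Tian-Shansky: because $R$ solves the modified classical Yang--Baxter equation on $\Psi DO(S^1,V)$, the associated quadratic bracket and the linear (Lie--Poisson) bracket are both Poisson, and, being the degree-two and degree-one pieces of the same $R$-matrix construction, they are automatically compatible; hence every $\{-,-\}_\lambda = \{-,-\}_2 + \lambda\{-,-\}_1$ is Poisson. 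An equivalent, more elementary route observes that the affine change of variable $u_k\mapsto u_k+\lambda$ is an isomorphism of $L_k$ carrying $\{-,-\}_2$ onto $\{-,-\}_\lambda$, so it is enough to treat $\lambda = 0$ and the leading-coefficient limit $\lambda = \infty$, i.e. $\{-,-\}_2$ and $\{-,-\}_1$ separately; the Jacobi identity for $\{-,-\}_2$ is then the one genuinely laborious point, carried out following \cite{AdBilal} by substituting $X = X_{l_1}$, using $\langle X_{\{l_1,l_2\}_2},V\rangle = \frac{d}{dt}\big|_{t=0}\{l_1,l_2\}_2(L+tV)$, and verifying that the cyclic sum over $l_1,l_2,l_3$ cancels by the Yang--Baxter relation for $R$. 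This is the step I expect to be the main obstacle; everything else is formal.

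Finally, part (5) specializes the general fact that the $1$-forms on a Poisson manifold form a Lie algebra with $[dl_1,dl_2] = d\{l_1,l_2\}$. Using the identification $T^\ast L_k \simeq IO(S^1,V)/IO_{-k}(S^1,V)$ and computing the gradient of $\{l_1,l_2\}_2$ by the same residue manipulations as in part (1) --- applying $H_2$, the Leibniz rule for the gradient, and the skew pairing identities --- one gets precisely the displayed bracket on covector fields; its Jacobi identity is inherited from (1), and the outermost projection $(\cdot)_-$ records the passage back to the cotangent space. This is the second Gelfand-Dikii algebra $GD_2$.
\end{proof}
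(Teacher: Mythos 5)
The paper does not actually prove Theorem \ref{AGD}: it is presented as a classical result, ``well-known since almost 30 years,'' with a pointer to \cite{AdBilal} (and implicitly \cite{D}), and no proof appears in the Appendix where the authors collect their arguments. So there is no in-paper proof to compare yours against; what can be said is that your outline is the standard route taken in those references, and it is essentially sound. Your use of the $r$-matrix $R=\frac12(\pi_+-\pi_-)$ attached to the Manin splitting, the shuttling of the projections $(\cdot)_\pm$ across the Adler pairing to extract $H_\lambda$ (parts (2)--(4)), and the identification of part (5) with the Koszul bracket of $1$-forms on a Poisson manifold all match the classical treatment. Two small points deserve tightening. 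First, your claim that the linear and quadratic brackets are ``automatically compatible'' because they arise from the same $R$-matrix is not an argument; the actual justification is exactly the shift $L\mapsto L+\lambda$ that you offer as the ``more elementary route'' --- note that this is not an alternative but the content of the paper's own definition, since $\{l_1,l_2\}_\lambda(L)$ is literally $\{l_1,l_2\}_2(L+\lambda)$ expanded to first order in $\lambda$. Second, for the verification that $H_2$ lands in $T_LL_k$, the cleaner observation is the identity $(LX)_+L-L(XL)_+=L(XL)_--(LX)_-L$, which shows the order is $\le k-1$, while the left-hand side written as a difference of products of differential operators shows it is purely differential; a ``principal-symbol cancellation'' alone does not give both facts. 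The genuinely laborious step, the Jacobi identity for the quadratic bracket, you correctly flag and defer to \cite{AdBilal}, which is exactly what the paper itself does.
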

This structure relates in some sense to the Manin triple on $\Psi DO(S^1,V).$ }

{
\subsubsection{Semenov-Tyan-Shansky $r-$matrix  construction}
Let $A_{\pm}$ two elements of the Lie algebra $\Psi DO(S^1,V)$ such that $A_{+} \in DO(S^1,V)$ and $A_{-}\in IO(S^1,V).$ Then one can identify $\Psi DO(S^1,V)\otimes \Psi DO(S^1,V)$ with ${\rm Hom} (\Psi DO(S^1,V),\Psi DO(S^1,V)$ using the inner product on $\Psi DO(S^1,V).$ Therefore, if  we consider the bi-vector ${\bold r} \in \Lambda^{2} (\Psi DO(S^1,V))$ such that
$\langle {\bold r}, A^{\ast}_{+}\wedge A^{\ast}_{-}\rangle = (A_{+}, A_{-}) = {\rm Tr}(A_{+}\circ A_{-}),$
where $A^{\ast}$ is a dual to $A$ with respect to the inner product., then we can identify it with the operator ${\tilde{\bold r}}\in {\rm End}(\Psi DO(S^1,V))$ such that
${\tilde{\bold r}}\vert_{DO(S^1,V)} = 1,\quad {\tilde{\bold r}}\vert_{IO(S^1,V)} = -1.$}
\subsubsection{Analogues of splittings}
Back to $\F Cl(S^1,V),$ the maps $$ A \in \F Cl(S^1,V) \mapsto \sum_{k \in \Z} \sigma_{k}(A)(x,1) \partial^k$$
and  $$ A \in \F Cl(S^1,V) \mapsto \sum_{k \in \Z} \sigma_{k}(A)(x,-1) \partial^k,$$ identify $\Psi DO(S^1,V)$ with $\F Cl_+(S^1,V)$ for the first one and $\F Cl_-(S^1,V)$ for the second one. 

Thus, there exists a decomposition $\F Cl_+(S^1,V) = \F Cl_{+,D}(S^1,V) \oplus\F Cl_{+,S}(S^1,V) $ and another 
$\F Cl_-(S^1,V) = \F Cl_{-,D}(S^1,V) \oplus\F Cl_{-,S}(S^1,V), $ and setting $$ \F Cl_D(S^1,V) = \F Cl_{+,D}(S^1,V) \oplus\F Cl_{-,D}(S^1,V),  $$
$$ \F Cl_S(S^1,V) = \F Cl_{+,S}(S^1,V) \oplus\F Cl_{-,S}(S^1,V),  $$
we get the vector space decomposition analogous to (\ref{psi-DS}):
$$ \F Cl(S^1,V) = \F Cl_{D}(S^1,V) \oplus\F Cl_{S}(S^1,V),  $$
{
\subsection{Manin pairs on $\F Cl(S^1,V)$} \label{s:Manin}
\subsubsection{Extension of the classical Manin triple to $\F Cl(S^1,V)$} \label{ss:manin}
The Adler trace \cite{Adl} defined by $$ Tr: A = \sum_{k \leq o} a_k \partial^k \mapsto \int_{S^1} tr(a_{-1})$$ is the only non trivial trace on $ \Psi DO(S^1,V).$ Morover, see e.g. \cite{EKRRR1995} and \cite{KZ},
\begin{Theorem}
	$(\Psi DO(S^1,V), { IO}(S^1,V), { DO}(S^1,V), Tr )$ is a Manin triple.
\end{Theorem}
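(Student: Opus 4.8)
The plan is to verify the three defining properties of a Manin triple directly: that $\Psi DO(S^1,V)$ decomposes as a vector space direct sum $IO(S^1,V) \oplus DO(S^1,V)$, that each summand is a Lie subalgebra, and that the Adler trace pairing $\langle A,B\rangle = \mathrm{Tr}(A\circ B)$ is a non-degenerate invariant symmetric bilinear form for which both summands are isotropic. The decomposition is just (\ref{psi-DS}), so nothing is needed there. That $DO(S^1,V)$ is a Lie subalgebra is immediate from the composition formula (\ref{asspsi}), since a product of differential operators is again a differential operator; that $IO(S^1,V)$ is a Lie subalgebra follows because the composition of two operators of strictly negative order again has strictly negative order (the symbol calculus never raises orders when both factors have order $\leq -1$), and likewise for their commutator.

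Next I would check invariance and symmetry of the pairing. Invariance with respect to composition, $\mathrm{Tr}(CA\circ B) = \mathrm{Tr}(A\circ BC)$, and hence ad-invariance $\mathrm{Tr}([C,A]\circ B) = \mathrm{Tr}(A\circ[B,C])$, is exactly the tracial property of the Adler functional already recalled in section \ref{ss:Lie}: the identity $\mathrm{Tr}([A,B]) = \oint \mathrm{tr}_n(\mathrm{Res}[A,B]) = \oint dF = 0$ stated there gives precisely what is needed after rewriting $\mathrm{Tr}(CA\circ B) - \mathrm{Tr}(A\circ BC) = \mathrm{Tr}([C, A\circ B])$. Symmetry $\mathrm{Tr}(A\circ B) = \mathrm{Tr}(B\circ A)$ is the same vanishing identity applied to $[A,B]$. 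For isotropy of $DO(S^1,V)$: if $A,B$ are both differential operators then $A\circ B$ is a differential operator, so its symbol expansion has no $\partial^{-1}$ term, whence $\mathrm{Res}(A\circ B) = 0$ and $\mathrm{Tr}(A\circ B)=0$; dually, if $A,B\in IO(S^1,V)$ then $A\circ B$ has order $\leq -2$, so again $\mathrm{Res}(A\circ B)=0$.

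For non-degeneracy I would argue that the pairing puts $DO(S^1,V)$ and $IO(S^1,V)$ in duality, i.e. the induced map $IO(S^1,V)\to (DO(S^1,V))'$ (continuous linear functionals vanishing... well, the relevant restricted dual) and $DO(S^1,V)\to (IO(S^1,V))'$ are injective. Concretely, given a nonzero $A = \sum_{i=0}^{k} a_i\partial^i \in DO(S^1,V)$ with $a_j\neq 0$ for some $j$, one pairs it against $B = \partial^{-j-1}\circ b \in IO(S^1,V)$; the residue of $A\circ B$ picks out, up to lower-order and derivative corrections, $\oint \mathrm{tr}_n(a_j b)$, and since the trace form on $C^\infty(S^1,M_n(\C))$ via $\oint \mathrm{tr}_n(\cdot)$ is non-degenerate one can choose $b$ so this is nonzero; the symmetric argument handles a nonzero element of $IO(S^1,V)$. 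This establishes that both subalgebras are maximal isotropic and in perfect pairing, which is the definition of a Manin triple in the sense made precise by the remarks of section \ref{ss:Lie} (the caveat that $IO(S^1,V)\to (DO(S^1,V))^\ast$ is not onto is acknowledged there, and we adopt the same abuse of terminology).

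The main obstacle is purely bookkeeping in the non-degeneracy step: the residue of $A\circ B$ is not simply $\sum a_i b_{-i-1}$ but involves the full Leibniz tail $\sum_{\alpha} \frac{(-i)^\alpha}{\alpha!} D_\xi^\alpha\sigma(A) D_x^\alpha\sigma(B)$, so one must check that the extra terms are derivatives (hence die under $\oint$) or can be absorbed by choosing the test operator triangularly — working degree by degree from the top symbol downward. Everything else is a direct consequence of results already in the excerpt: the composition formula (\ref{asspsi}), the tracial vanishing identity $\mathrm{Tr}([A,B])=0$, and the order-counting behavior of the symbol calculus. I would therefore relegate the explicit residue computation to the Appendix and present the body of the proof as the four-point checklist above.
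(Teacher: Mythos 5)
Your verification is correct, but it is worth noting that the paper itself offers no proof of this statement: it is presented as a known classical fact with a pointer to \cite{EKRRR1995} and \cite{KZ}, so there is no in-paper argument to compare against. Your four-point checklist (splitting (\ref{psi-DS}), closure of the two summands under the symbol composition (\ref{asspsi}), symmetry/ad-invariance via $\mathrm{Tr}([A,B])=0$, isotropy by order counting) is exactly the standard verification from those references, and each step is sound. On the one point you defer as ``bookkeeping,'' the computation is actually cleaner than you fear: with the test operator $B=\partial^{-j-1}\circ b$ against $A=\sum_{i=0}^{k}a_i\partial^i$, the terms with $i>j$ compose to genuine differential operators (the Leibniz tail $\partial^{m}\circ b=\sum_{\alpha\le m}\binom{m}{\alpha}b^{(\alpha)}\partial^{m-\alpha}$ terminates for $m\ge 0$, so contributes nothing to the residue), the terms with $i<j$ have order $\le -2$, and the $i=j$ term gives residue exactly $a_j b$ with no correction; dually, for $A\in IO(S^1,V)$ of order $-m$ one pairs against $b\,\partial^{m-1}$ and reads off $a_{-m}b$. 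So no triangular induction is needed, and the non-degeneracy reduces immediately to the non-degeneracy of $(a,b)\mapsto\oint \mathrm{tr}_n(ab)$ on $C^\infty(S^1,M_n(\C))$, with the caveat about the non-surjectivity of $IO(S^1,V)\to (DO(S^1,V))^{\ast}$ handled exactly as in the Remark of section \ref{ss:Lie}. In short: the paper buys the result by citation, your proposal buys it by a direct and correct (if slightly over-cautious) computation.
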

The Wodzicki residue (\cite{Wod1984}, see e.g. \cite{Ka}) is usually known as an ``extension'' of the Adler trace to $\F Cl(S^1,V)$ and hence to $Cl(S^1,V).$ For the sake of deeper insight on what is described in the rest of this paper, we need to precise that the space of traces on $\F Cl(S^1,V)$ is 2-dimensional, generated by two functionals: $$ res_+: A \mapsto \int_{S^1} \sigma_{-1}(A)(x,1) |dx|$$
and  $$ res_-: A \mapsto \int_{S^1} tr( \sigma_{-1}(A))(x,-1) |dx|.$$
The functionals $res_\pm$ are the only non-vanishing traces on $\F Cl_\pm(S^1,V)$ (up to a scalar factor) and are vanishing on $\F Cl_\mp(S^1,V).$ The (classical) Wodzicki residue reads as $res = res_+ + res_-.$
Because the partial symbol $\sigma_{-1}(A)$ of an operator $A \in \Psi DO(S^1,V)$ is skew-symmetric in the $\xi-$variable, $res$ is vanishing on $\Psi DO(S^1,V) = \F Cl_{ee}(S^1,V), $ so that it is superficial to state that the Wodzicki residue is ``simply'' the extension of the Adler trace.  However the two linear functionals already described, namely 
$$ A \in \F Cl(S^1,V) \mapsto \sum_{k \in \Z} \sigma_{k}(A)(x,1) \partial^k$$
and  $$ A \in \F Cl(S^1,V) \mapsto \sum_{k \in \Z} \sigma_{k}(A)(x,-1) \partial^k,$$ identity $res_+$ and $res_-$ respectively with $Tr.$ By the way, we can state:
\begin{Theorem} We have three Manin triples: 
	$$(\F Cl_+(S^1,V), \F Cl_{+,S}(S^1,V), \F Cl_{+,D}(S^1,V), res_+ ), $$ 
	$$(\F Cl_-(S^1,V), \F Cl_{-,S}(S^1,V), \F Cl_{-,D}(S^1,V), res_- )$$ and $$(\F Cl(S^1,V), \F Cl_{S}(S^1,V), \F Cl_{D}(S^1,V), res ).$$
\end{Theorem}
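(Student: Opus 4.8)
The plan is to reduce the three statements to the first one, which is itself essentially a transported copy of the already-recorded classical Manin triple $(\Psi DO(S^1,V), IO(S^1,V), DO(S^1,V), Tr)$. First I would fix the two algebra isomorphisms introduced just above: the map $\Phi_{1,0}\colon \Psi DO(S^1,V)\to \F Cl_+(S^1,V)$ sending $A=\sum_k a_k\partial^k$ to the operator with partial symbols $\sigma_k(x,\xi)=a_k(x)\xi^k$ for $\xi>0$ and $0$ for $\xi<0$, and similarly $\Phi_{0,1}\colon \Psi DO(S^1,V)\to \F Cl_-(S^1,V)$ using $\xi<0$. From the composition rule for partial symbols quoted in section \ref{ss:prel} one checks (this is the routine step) that each $\Phi$ is an isomorphism of associative algebras, hence of Lie algebras; moreover $\Phi_{1,0}$ carries $DO$ onto $\F Cl_{+,D}$ and $IO$ onto $\F Cl_{+,S}$ by the very definitions of those subspaces, and likewise for $\Phi_{0,1}$. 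It then only remains to match the trace functionals: by construction $res_+\circ\Phi_{1,0}= Tr$ and $res_-\circ\Phi_{0,1}=Tr$, since evaluating $\sigma_{-1}$ at $\xi=\pm1$ picks out exactly the coefficient $a_{-1}$, whose integral is the Adler trace.

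Given these identifications, the first two Manin triples are immediate: a Manin triple is preserved under an isomorphism of Lie algebras that intertwines the bilinear forms, and $\Phi_{1,0}$ does exactly this between $(\Psi DO, IO, DO, Tr)$ and $(\F Cl_+, \F Cl_{+,S}, \F Cl_{+,D}, res_+)$, and likewise $\Phi_{0,1}$ for the ``$-$'' triple. I would spell out what must be verified: that $\F Cl_{\pm,D}$ and $\F Cl_{\pm,S}$ are Lie subalgebras, that they are complementary (already noted in section \ref{ss:prel}), that the forms $(X,Y)\mapsto res_\pm(X\circ Y)$ are nondegenerate invariant symmetric, and that $\F Cl_{\pm,D}$, $\F Cl_{\pm,S}$ are isotropic; each of these is the image under $\Phi$ of the corresponding classical fact, so nothing new is computed.

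For the third triple I would use the direct-sum decomposition $\F Cl(S^1,V)=\F Cl_+(S^1,V)\oplus\F Cl_-(S^1,V)$, recalling from section \ref{ss:+-} that $\F Cl_\pm$ are two-sided ideals, so this is a direct sum of Lie algebras and the bracket has no cross terms. Since $res=res_++res_-$ with $res_\pm$ vanishing on $\F Cl_\mp$, the invariant form $(X,Y)\mapsto res(X\circ Y)$ is the orthogonal direct sum of the two forms already handled; hence it is nondegenerate invariant symmetric, and $\F Cl_D=\F Cl_{+,D}\oplus\F Cl_{-,D}$, $\F Cl_S=\F Cl_{+,S}\oplus\F Cl_{-,S}$ are complementary isotropic Lie subalgebras, giving the third Manin triple.

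The only genuinely delicate point, and the one I would dwell on, is the same caveat already flagged in the remark after the classical Manin triple: ``Manin triple'' here is used in the loose (``pseudo-Manin'') sense, because the pairing identifies each isotropic subalgebra with a proper subspace of the topological dual of the other, not with the full dual — not every continuous functional on $C^\infty(S^1,M_n(\C))$ is of integral form. So the hard part is not any computation but making precise in which category these statements hold: I would state that the bilinear form is nondegenerate (separating), that the induced maps $\F Cl_{\pm,D}\to(\F Cl_{\pm,S})^*$ are injective with dense image in the appropriate weak topology, and that this is exactly the sense in which the classical triple is a Manin triple, so that transporting it along the algebra isomorphisms $\Phi_{1,0}$, $\Phi_{0,1}$ loses nothing. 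With that understood, the proof is just ``apply $\Phi_{1,0}$, apply $\Phi_{0,1}$, take the direct sum,'' and I would relegate the symbol-calculus verification that $\Phi_{1,0}$ and $\Phi_{0,1}$ are algebra isomorphisms to the Appendix.
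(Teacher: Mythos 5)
Your proposal is correct and follows essentially the same route as the paper: the theorem is obtained there by transporting the classical Manin triple $(\Psi DO(S^1,V), IO(S^1,V), DO(S^1,V), Tr)$ along the evaluation maps $A \mapsto \sum_k \sigma_k(A)(x,\pm 1)\partial^k$ (the inverses of your $\Phi_{1,0}$, $\Phi_{0,1}$), which identify $res_\pm$ with $Tr$, and then summing the two ideals for the third triple. Your extra care about the ``pseudo-Manin'' duality caveat matches the remark the paper itself makes for the classical triple.
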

\subsubsection{A remark on two "non-invariant Manin triples"}
{ Following \cite{EKRRR1995}, given an operator $\mathbf r$ acting on $\F Cl(S^1,V)$ satisfying ${\bold r}^2=Id,$
	one can form a $\F Cl(S^1,V)-$valued skew-symmetric bilinear form 
	$$ [.,.]_{\mathbf  r} = \frac{1}{2}\left([{\mathbf r}(.),.] + [.,{\mathbf  r}(.)]\right).$$
	In what follows, we concentrate on the cases ${\mathbf  r} = \epsilon(D)\circ (.),$ ${\mathbf  r} = s$ and also ${\mathbf  r} = s'.$ The corresponding brackets will be noted respectively by $[.,.]_{\epsilon(D)}$, $[.,.]_s$ and $[.,.]_{s'}.$}
	
Let us define $(A,B)_{s'} = res(A,s'(B)).$ By direct calculations, we find successively:
}
{
\begin{Lemma}\label{lemma:s'-pair}
	$(.;.)_{s'}$ is non degenerate and symmetric.
\end{Lemma}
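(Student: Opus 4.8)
The claim is that the bilinear form $(A,B)_{s'} = \mathrm{res}(A \circ s'(B))$ on $\F Cl(S^1,V)$ is non-degenerate and symmetric. I would organize the argument around the explicit action of $s'$ on partial symbols, $s': \sum_n \sigma_n(x,\xi) \mapsto \sum_n \sigma_n(x,-\xi)$, together with the decomposition $\mathrm{res} = \mathrm{res}_+ + \mathrm{res}_-$ recorded above, where $\mathrm{res}_\pm(C) = \int_{S^1} \mathrm{tr}\,\sigma_{-1}(C)(x,\pm 1)\,|dx|$ and $\mathrm{res}_\pm$ vanishes on $\F Cl_\mp(S^1,V)$.

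\emph{Symmetry.} First I would reduce to partial symbols. Since $\mathrm{res}_+$ only sees $\sigma_{-1}$ at $\xi = +1$, and the order-$(-1)$ partial symbol of a composition is $\sigma_{-1}(A\circ B) = \sum_{\alpha \in \N}\sum_{m+n-\alpha = -1} \frac{(-i)^\alpha}{\alpha!} D_\xi^\alpha \sigma_m(A)\, D_x^\alpha \sigma_n(B)$, the value at $\xi=1$ is a finite sum of pointwise products of derivatives of partial symbols of $A$ and of $s'(B)$, integrated over $S^1$. The key observation is that $s'$ replaces $\sigma_n(B)(x,\xi)$ by $\sigma_n(B)(x,-\xi)$, so that $D_x^\alpha$ commutes with $s'$ while $D_\xi^\alpha$ picks up a sign $(-1)^\alpha$ under the substitution $\xi \mapsto -\xi$; combined with the fact that $\mathrm{res}_+(A\circ s'(B))$ evaluates at $\xi=1$ whereas the ``other orientation'' contribution evaluates at $\xi=-1$, one matches $\mathrm{res}_+(A \circ s'(B))$ with $\mathrm{res}_-(B \circ s'(A))$ and vice versa. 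More cleanly: $s'$ is an involution that exchanges $\F Cl_+$ and $\F Cl_-$ (stated in the Remark above), and $\mathrm{res} \circ s' = \mathrm{res}$ because swapping $\xi = 1$ and $\xi = -1$ merely permutes the two summands $\mathrm{res}_+$ and $\mathrm{res}_-$. Then
$$(A,B)_{s'} = \mathrm{res}(A \circ s'(B)) = \mathrm{res}\big(s'(A \circ s'(B))\big) = \mathrm{res}\big(s'(A) \circ B\big),$$
using that $s'$ is an algebra automorphism of $\F Cl(S^1,V)$ (it acts on total symbols by $\xi\mapsto -\xi$ without sign twist, which is compatible with the composition formula since that formula is invariant under simultaneously flipping $\xi$ in both factors and in the $D_\xi$'s). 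Now $\mathrm{res}(s'(A)\circ B) = \mathrm{res}(B \circ s'(A))$ by the trace (cyclicity) property of $\mathrm{res}$, and this is exactly $(B,A)_{s'}$.

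\emph{Non-degeneracy.} Suppose $(A,B)_{s'} = 0$ for all $B$; I must show $A = 0$ as a formal symbol, i.e. every $\sigma_k(A) = 0$. Fix $k \in \Z$ and the two evaluation points $\xi = \pm 1$. Given a target component, I would choose $B$ to be a formal pseudo-differential operator of order $-1-k$ whose single nonzero partial symbol $\sigma_{-1-k}(B)(x,\xi)$ is a suitable test function: then the leading $\alpha = 0$ term of $\sigma_{-1}(A \circ s'(B))$ is $\sigma_k(A)(x,\xi)\,\sigma_{-1-k}(B)(x,-\xi)$, and after applying $\mathrm{res}_+$ (evaluation at $\xi = 1$) this is $\int_{S^1}\mathrm{tr}\big(\sigma_k(A)(x,1)\,\sigma_{-1-k}(B)(x,-1)\big)\,|dx|$ plus contributions from $\alpha \geq 1$ which only involve $\sigma_m(A)$ for $m > k$ (already known to vanish by downward induction on the order, starting from the top order of $A$) and $x$-derivatives of $\sigma_{-1-k}(B)$. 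Since $\mathrm{tr}$ on $M_n(\C)$ together with integration over $S^1$ is a non-degenerate pairing on $C^\infty(S^1,M_n(\C))$, varying $\sigma_{-1-k}(B)(\cdot,-1)$ over all smooth matrix-valued functions forces $\sigma_k(A)(x,1) = 0$; using instead an operator supported so that $\mathrm{res}_-$ is the relevant functional forces $\sigma_k(A)(x,-1) = 0$. Because a $k$-homogeneous partial symbol on $T^*S^1 \setminus S^1$ is determined by its two values at $\xi = \pm 1$ (Lemma \ref{l2} is the relevant rigidity statement here, $0$-homogeneous functions being locally constant on each ray), we conclude $\sigma_k(A) = 0$; descending induction on $k$ from the order of $A$ finishes the argument, and the symmetric statement handles degeneracy in the second slot.

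\textbf{Main obstacle.} The routine part is the bookkeeping in the composition formula; the one genuinely delicate point is the induction in the non-degeneracy argument — ensuring that the higher-$\alpha$ terms in $\sigma_{-1}(A \circ s'(B))$ really do only involve strictly-higher-order partial symbols of $A$ (so that the induction closes), and checking that the test operators $B$ can be chosen freely enough (arbitrary order, arbitrary single partial symbol, and in particular realizing both $\mathrm{res}_+$ and $\mathrm{res}_-$ separately) while staying inside $\F Cl(S^1,V)$. I expect no surprises, but this is where the care goes; the symmetry half is essentially formal once one notes $s'$ is an algebra automorphism commuting appropriately with $\mathrm{res}$.
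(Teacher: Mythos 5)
Your non-degeneracy argument is correct (though much longer than needed: since $s'$ is a linear involution, $B\mapsto s'(B)$ is onto, so non-degeneracy is inherited in one line from that of $res$, which is exactly what the paper does; your induction essentially re-proves non-degeneracy of $res$ itself). The problem is in the symmetry half. Your ``more cleanly'' argument hinges on the claim that $s'$ is an algebra automorphism of $\F Cl(S^1,V)$, and that claim is false. In the composition formula $\sigma(A\circ B)\sim\sum_\alpha \frac{(-i)^\alpha}{\alpha!}D^\alpha_\xi\sigma(A)\,D^\alpha_x\sigma(B)$, the substitution $\xi\mapsto-\xi$ makes each $D_\xi$ contribute a factor $-1$, so the $\alpha$-th term of $\sigma\bigl(s'(A)\circ s'(B)\bigr)$ differs from that of $s'\bigl(\sigma(A\circ B)\bigr)$ by $(-1)^\alpha$ --- a sign you yourself note earlier in the same paragraph and then drop. (Concretely, for $A=\partial$ and $B=f(x)$ one has $s'(A\circ B)\neq s'(A)\circ s'(B)$.) It is the twisted reflection $s$, with the extra $(-1)^n$ on the order-$n$ component, that respects the product --- which is precisely why $\F Cl_{ee}=Ker(Id-s)$ is a subalgebra while $s'$ merely swaps $\F Cl_+$ and $\F Cl_-$. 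Hence the step $res\bigl(s'(A\circ s'(B))\bigr)=res\bigl(s'(A)\circ B\bigr)$ is unjustified, even though the identity it is meant to establish happens to be true.

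The correct route is the one you gesture at first and then abandon, and it is the paper's: use that $\F Cl_+(S^1,V)$ and $\F Cl_-(S^1,V)$ are complementary ideals with $\F Cl_+\circ\F Cl_-=0$, so that $A\circ s'(B)=A_+\circ s'(B_-)+A_-\circ s'(B_+)$; then observe that the evaluation maps at $\xi=\pm1$ identify $\F Cl_\pm(S^1,V)$ with $\Psi DO(S^1,V)$, intertwine $s'$ with the exchange of the two copies, and carry $res_\pm$ to the Adler trace $Tr$. This gives $(A,B)_{s'}=Tr(\tilde A_+\tilde B_-)+Tr(\tilde A_-\tilde B_+)$, and symmetry follows from traciality of $Tr$. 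If you want to keep your formula $res(A\circ s'(B))=res(s'(A)\circ B)$ as an intermediate step, it must be proved by this $\pm$-decomposition, not by multiplicativity of $s'$.
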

\begin{Theorem}
	On $\F Cl(S^1,V) = \F Cl_+ (S^1,V) + \F Cl_-(S^1,V),$ $(.;.)_{s'}$ is a non degenerate and symmetric bilinear from for which the Lie algebras $\F Cl_+ (S^1,V)$ and $\F Cl_- (S^1,V)$ are isotropic.
\end{Theorem}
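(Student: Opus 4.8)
The plan is to exhibit the decomposition $\F Cl(S^1,V) = \F Cl_+(S^1,V) \oplus \F Cl_-(S^1,V)$ already established in Section~\ref{ss:+-} as the sought polarization, and to verify the three required properties: (i) $(.;.)_{s'}$ is non-degenerate and symmetric, (ii) each summand $\F Cl_\pm(S^1,V)$ is a Lie subalgebra, and (iii) each summand is isotropic. Property (i) is exactly Lemma~\ref{lemma:s'-pair}, which I would simply invoke. Property (ii) follows immediately from the fact, noted after the Definition of $\F Cl_\pm$, that $p_\pm$ are (non-unital) algebra morphisms and projections, so $\F Cl_\pm(S^1,V) = \mathrm{Im}(p_\pm)$ is closed under the associative product, hence under the commutator bracket. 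So the only substantive point is the isotropy statement.

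For isotropy, first I would recall how $s'$ interacts with the $\pm$ splitting: by the Remark following the Proposition on $s$, one has $s'\bigl(\F Cl_\pm(S^1,V)\bigr) = \F Cl_\mp(S^1,V)$. Now take $A, B \in \F Cl_+(S^1,V)$. Then $s'(B) \in \F Cl_-(S^1,V)$, and since $\F Cl_-$ is an ideal while $\F Cl_+$ is an ideal too (they are complementary ideals by the splitting in Section~\ref{ss:+-}, as $p_\pm$ are algebra morphisms), the product $A \circ s'(B)$ lies in $\F Cl_+(S^1,V) \cap \F Cl_-(S^1,V) = \{0\}$; more carefully, $A \circ s'(B)$ has formal symbol supported on $\xi > 0$ (from $A$) and on $\xi < 0$ (from $s'(B)$) simultaneously, hence vanishes. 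Therefore $(A,B)_{s'} = res(A \circ s'(B)) = res(0) = 0$. The identical argument with the roles of $+$ and $-$ exchanged gives isotropy of $\F Cl_-(S^1,V)$. It remains to observe that $(.;.)_{s'}$ pairs $\F Cl_+(S^1,V)$ with $\F Cl_-(S^1,V)$ non-degenerately: this is forced by the global non-degeneracy from Lemma~\ref{lemma:s'-pair} together with the isotropy of the two summands, since a non-degenerate form vanishing on each of two complementary subspaces must restrict to a perfect pairing between them.

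The main (and essentially only) obstacle is bookkeeping about the support of formal symbols under composition: one must be sure that the asymptotic expansion formula $\sigma(A \circ B) \sim \sum_\alpha \frac{(-i)^\alpha}{\alpha!} D_\xi^\alpha \sigma(A)\, D_x^\alpha \sigma(B)$ preserves the property ``supported in $\xi>0$'' (respectively $\xi<0$), which is clear since each term is a pointwise product involving $\sigma(A)$, and $D_\xi$ does not move the support across $\xi = 0$ for symbols that already vanish identically on a half-line near $0$ --- indeed this is precisely the content of the fact, recorded in Section~\ref{ss:+-}, that $\F Cl_\pm(S^1,V)$ are \emph{ideals}. Once that is in hand, the proof is a two-line computation; I would present it accordingly, citing Lemma~\ref{lemma:s'-pair} for symmetry and non-degeneracy and the splitting of Section~\ref{ss:+-} for the ideal property, and then deriving the perfect pairing between the summands as the closing remark.
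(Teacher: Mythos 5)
Your proposal is correct and follows essentially the same route as the paper: non-degeneracy and symmetry are taken from Lemma \ref{lemma:s'-pair}, and isotropy of $\F Cl_\pm(S^1,V)$ comes from the fact that $s'$ exchanges the two complementary ideals so that $A\circ s'(B)$ has formal symbol supported simultaneously on $\xi>0$ and $\xi<0$, hence vanishes --- exactly the observation implicit in the paper's computation $res(A\,s'(B))=Tr(A_+B_-)+Tr(A_-B_+)$, where the cross terms $A_\pm s'(B_\pm)$ are dropped. Your closing remark that the form restricts to a perfect pairing between the two isotropic summands is a harmless (and correct) addition.
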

Let us define $(A,B)_s = res(A,s(B)).$
\begin{Lemma}\label{lemma:s-pair}
	$(.;.)_{s}$ is non degenerate and skew-symmetric but neither invariant for $[.,.]$ nor for $[.,.]_{\epsilon(D)}.$
\end{Lemma}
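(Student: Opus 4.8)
The plan is to verify the three claimed assertions—non-degeneracy, skew-symmetry, and the failure of invariance for both brackets—by direct symbol computation, exploiting the already-established facts about $s$ and $res$. First I would fix an operator $A \in \F Cl(S^1,V)$ with total symbol $\sigma(A) = \sum_k \sigma_k(A)(x,\xi)$ and write $s(B)$ explicitly via its defining formula $s(B) = \sum_n (-1)^n \sigma_n(B)(x,-\xi)$. Then $(A,B)_s = res(A\circ s(B))$ unfolds, using the composition rule for partial symbols recalled in Section \ref{ss:prel} and the fact that $res = res_+ + res_-$ picks out the $(-1)$-homogeneous component evaluated at $\xi = \pm 1$. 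For \emph{skew-symmetry}, the key observation is that $s$ interchanges $\F Cl_+$ and $\F Cl_-$ (the first bullet of the Proposition preceding Section \ref{ss:complexpow}) while $res_+$ lives on $\F Cl_+$ and $res_-$ on $\F Cl_-$; combined with the known symmetry of the residue pairing $res(A\circ B) = res(B\circ A)$ and the sign $(-1)^{-1} = -1$ coming from the order-$(-1)$ term under $\xi \mapsto -\xi$, one gets $(A,B)_s = -(B,A)_s$. For \emph{non-degeneracy}, I would argue as in Lemma \ref{lemma:s'-pair}: since $s^2 = Id$ and $s$ is a linear isomorphism of $\F Cl(S^1,V)$, and since the residue pairing $res(A\circ B)$ is itself non-degenerate on $\F Cl(S^1,V)$ (this is the content of the Manin-triple statements in Section \ref{ss:manin}, where $\F Cl_S$ and $\F Cl_D$ are dual under $res$), precomposing one slot with the invertible map $s$ preserves non-degeneracy.

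For the two non-invariance claims I would proceed by exhibiting an explicit counterexample rather than a structural argument, since the point is precisely that the expected invariance identity \emph{fails}. Invariance of $(.;.)_s$ for $[.,.]$ would mean $res([C,A]\circ s(B)) = res(A \circ s([C,B]))$ for all $A,B,C$; I would test this on low-order homogeneous operators, e.g. take $A, B, C$ among $\{f(x), f(x)\,\partial^{-1}, \epsilon(D)\}$ for a suitable non-constant $f \in C^\infty(S^1)$, compute both sides using the Wick-ordered composition formula (\ref{asspsi}) transported to symbols, and observe a discrepancy proportional to $\oint f f'' $ or similar non-vanishing differential-polynomial expression. The obstruction to invariance is conceptually the same one noted after the Proposition on $s$: because $s$ acts on the bracket by $s[a,b] = [a,b]_{ee} - [a,b]_{eo}$ with a sign depending on the even-even/even-odd decomposition, $s$ is \emph{not} a Lie algebra morphism, so it cannot intertwine the adjoint action in the way invariance of the pairing would require. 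The same computation, rerun with $[.,.]_{\epsilon(D)}$ in place of $[.,.]$—noting that $\epsilon(D)$ is central so $[.,.]_{\epsilon(D)}$ simplifies to $[\epsilon(D)\circ(.),(.)]$—yields a second, independent discrepancy.

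The main obstacle I anticipate is bookkeeping rather than conceptual: keeping careful track of the two sign sources—the $(-1)^n$ in the definition of $s$ and the parity behavior of partial symbols under $\xi \mapsto -\xi$—through the composition and residue operations, so that skew-symmetry comes out with the correct sign and is not accidentally symmetry. A secondary subtlety is making sure the chosen counterexample operators genuinely lie in $\F Cl(S^1,V)$ and that the residue of the relevant commutator does not vanish for the trivial reason that all traces kill commutators; this is why one must test invariance in the mixed form $res([C,A]\circ s(B))$ versus $res(A\circ s([C,B]))$ rather than looking at $res([C,A]\circ s(A))$ alone, and why $f$ must be chosen non-constant so that the $x$-derivatives in the composition formula contribute. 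Once a single explicit triple produces unequal values, both non-invariance assertions follow, and together with the first two parts the lemma is complete.
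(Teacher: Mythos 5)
Your strategy for non-degeneracy is exactly the paper's (non-degeneracy of $res$ plus invertibility of $s$), and your skew-symmetry argument is a variant of it. The paper does not pass through the $\pm$ splitting here at all: it writes $s(B)=B_{ee}-B_{eo}$, uses the isotropy of $\F Cl_{ee}(S^1,V)$ and $\F Cl_{eo}(S^1,V)$ for $res$ (Theorem \ref{1.25}) to reduce $(A,B)_s$ to $res(A_{eo}B_{ee})-res(A_{ee}B_{eo})$, and concludes by traciality. Your route can be made rigorous, but not quite as stated: the order $-1$ partial symbol of a composition receives contributions from \emph{all} pairs of partial symbols of the factors, so you cannot track ``the $(-1)^{-1}$ of the order $-1$ term'' slot by slot through the product. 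What does work is to note that $s$ is an algebra automorphism and that $res\circ s=-res$ (this is where your sign legitimately lives, since $res$ only sees $\sigma_{-1}$ and $s$ multiplies it by $(-1)^{-1}$ while swapping $\xi=1$ and $\xi=-1$); then $(B,A)_s=res\bigl(B\,s(A)\bigr)=-res\bigl(s(B)\,s^2(A)\bigr)=-res\bigl(A\,s(B)\bigr)$ by traciality. Either argument is fine; the paper's ee/eo one is shorter and avoids having to prove $s$ multiplicative.

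The genuine gap is in the non-invariance part. Your proposed test set $\set{f(x),\,f(x)\partial^{-1},\,\epsilon(D)}$ cannot produce a counterexample: $f(x)$ and $f(x)\partial^{-1}$ lie in $\F Cl_{ee}(S^1,V)=\Psi DO(S^1,V)$, on which $s$ acts as the identity and on which $res$ of any product vanishes by isotropy, while $\epsilon(D)$ is central so every bracket involving it is zero; hence every instance of the invariance identity you would test reads $0=0$, and no discrepancy ``proportional to $\oint ff''$'' can appear. Expanding both sides in ee/eo components (which is all the paper itself does) shows that the invariance defect equals $2\bigl(res(A_{eo}[B_{eo},C_{eo}])-res(A_{ee}[B_{ee},C_{eo}])\bigr)$, so a counterexample must put a \emph{non-central even-odd} operator inside a non-vanishing commutator: for instance $A=a(x)\partial^{-2}$, $B=f(x)\partial$, $C=\epsilon(D)g(x)\partial$ gives defect $\pm 4\oint a(fg'-gf')\neq 0$ for suitable $a,f,g$. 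With that repair your plan does close the argument --- and is in fact more conclusive than the paper's own treatment, which only displays the two expanded expressions without exhibiting a triple on which they differ; the same adjusted triple, or a minor variant, handles the $[.,.]_{\epsilon(D)}$ case as well.
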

\begin{Theorem}
	On $\F Cl(S^1,V) = \F Cl_+ (S^1,V) + \F Cl_-(S^1,V),$ $(.;.)_{s}$ is a non degenerate and skew-symmetric bilinear from for which the Lie algebras $\F Cl_+ (S^1,V)$ and $\F Cl_- (S^1,V)$ are isotropic.
\end{Theorem}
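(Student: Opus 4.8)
The plan is to reduce the statement to Lemma~\ref{lemma:s-pair} together with the algebra of the $\pm$-splitting of $\F Cl(S^1,V)$. Since Lemma~\ref{lemma:s-pair} already gives that $(.;.)_s$ is non degenerate and skew-symmetric, the only point left to establish is that each of $\F Cl_+(S^1,V)$ and $\F Cl_-(S^1,V)$ is isotropic, i.e. that $(A,B)_s=0$ whenever $A$ and $B$ lie in the same summand.

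First I would recall two facts from Section~\ref{ss:+-} and Section~\ref{s:even/odd}: the symmetry $s$ exchanges the two summands, $s\bigl(\F Cl_\pm(S^1,V)\bigr)=\F Cl_\mp(S^1,V)$; and $\F Cl_+(S^1,V)$, $\F Cl_-(S^1,V)$ are two-sided ideals of $\F Cl(S^1,V)$ whose sum is direct, hence $\F Cl_+(S^1,V)\circ\F Cl_-(S^1,V)\subseteq \F Cl_+(S^1,V)\cap\F Cl_-(S^1,V)=\{0\}$, and likewise $\F Cl_-(S^1,V)\circ\F Cl_+(S^1,V)=\{0\}$. (Concretely this last vanishing comes from the fact that the formal symbols of a $+$-operator and of a $-$-operator are supported, in the $\xi$ variable, on the disjoint half-lines $\{\xi>0\}$ and $\{\xi<0\}$, and the composition formula $\sigma(A\circ B)\sim\sum_\alpha\frac{(-i)^\alpha}{\alpha!}D_\xi^\alpha\sigma(A)\,D_x^\alpha\sigma(B)$ is pointwise in these symbols under the standing assumption that coordinate changes are translations; but the ideal argument is shorter.)

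The computation is then immediate: for $A,B\in\F Cl_+(S^1,V)$ we have $s(B)\in\F Cl_-(S^1,V)$, so $A\circ s(B)\in\F Cl_+(S^1,V)\circ\F Cl_-(S^1,V)=\{0\}$, whence $(A,B)_s=res\bigl(A\circ s(B)\bigr)=0$; exchanging the roles of $+$ and $-$ gives the same conclusion on $\F Cl_-(S^1,V)$. Since ideals are in particular Lie subalgebras, $\F Cl_\pm(S^1,V)$ are isotropic Lie subalgebras, and together with Lemma~\ref{lemma:s-pair} this is exactly the assertion.

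I do not expect any real obstacle here: the proof is a one-line consequence of Lemma~\ref{lemma:s-pair} and the formal properties of the splitting, and it runs parallel to the proof of the preceding Theorem for $(.;.)_{s'}$, the only change being that skew-symmetry is imported in place of symmetry. The one place deserving a sentence of care is the vanishing of $\F Cl_+(S^1,V)\circ\F Cl_-(S^1,V)$, which as noted is forced by the ideal structure; for consistency with non-degeneracy one may also observe that the pairing induced between $\F Cl_+(S^1,V)$ and $\F Cl_-(S^1,V)$ is just $res_+$ (equivalently $res_-$) transported along $s$, hence perfect, but this is already subsumed in Lemma~\ref{lemma:s-pair}.
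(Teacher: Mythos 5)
Your proposal is correct and follows essentially the route the paper intends: the paper relegates this statement to ``direct calculations,'' and the underlying computation is exactly the one you give, namely that $s$ exchanges $\F Cl_+(S^1,V)$ and $\F Cl_-(S^1,V)$ while $\F Cl_+(S^1,V)\circ\F Cl_-(S^1,V)=\F Cl_-(S^1,V)\circ\F Cl_+(S^1,V)=\{0\}$ (the same cross-term vanishing that appears implicitly in the paper's proof of Lemma~\ref{lemma:s'-pair}, where $res(A,s'(B))$ reduces to $Tr(A_+B_-)+Tr(A_-B_+)$). Importing non-degeneracy and skew-symmetry from Lemma~\ref{lemma:s-pair} and observing that ideals are Lie subalgebras completes the argument, as you say.
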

\subsubsection{Two other  Manin pairs}
Let us consider the decomposition $$\F Cl(S^1,V) = \F Cl_{ee} (S^1,V) + \F Cl_{eo}(S^1,V),$$
that we equip with the classical Lie bracket $[.,.]$ or with $[.,.]_{\epsilon(D)}.$ and with the bilinear form $(A,B)=res(AB).$
\begin{Theorem} \label{1.25}
	res(AB) is a bilinear, non degenerate, symmetric and invariant form for both brackets, and $\F Cl_{ee} (S^1,V)$ as well as $\F Cl_{eo}(S^1,V)$ are isotropic vector spaces. Moreover, 
	\begin{itemize}
		\item for $[.,.],$ $\F Cl_{ee} (S^1,V)$ is a Lie algebra
		\item for $[.,.]_{\epsilon(D)},$  $\F Cl_{eo}(S^1,V)$ is a Lie algebra.
	\end{itemize}
\end{Theorem}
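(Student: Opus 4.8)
The plan is to treat the assertions in the order: symmetry and invariance, isotropy, non-degeneracy, and then the two Lie-algebra statements, most of it reducing to the tracial identity $res(AB)=res(BA)$ of the Wodzicki residue, the centrality of $\epsilon(D)$, and the even-even/even-odd composition table. Symmetry of $(A,B)\mapsto res(AB)$ is immediate from $res(AB)=res(BA)$, and invariance under the classical bracket is the standard chain $res([A,B]C)=res(ABC)-res(BAC)=res(ABC)-res(ACB)=res(A[B,C])$. To handle the twisted bracket I would first record that, since $\epsilon(D)$ is central and $\epsilon(D)^2=Id$, one has $[\epsilon(D)A,B]=\epsilon(D)[A,B]=[A,\epsilon(D)B]$, hence $[A,B]_{\epsilon(D)}=\epsilon(D)\circ[A,B]$; invariance then follows by sliding the central factor across the residue, $res([A,B]_{\epsilon(D)}C)=res([A,B]\,\epsilon(D)C)=res(A[B,\epsilon(D)C])=res(A\,\epsilon(D)[B,C])=res(A[B,C]_{\epsilon(D)})$. (Symmetry, invariance under $[.,.]$ and non-degeneracy could alternatively be inherited directly from the previously established Manin triple $(\F Cl(S^1,V),\F Cl_S(S^1,V),\F Cl_D(S^1,V),res)$.)

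For isotropy I would use the composition table, which gives that a product of two even-even, or of two even-odd, operators is again even-even class, together with the fact recalled in the excerpt that $res$ vanishes on $\F Cl_{ee}(S^1,V)$ (the order $-1$ partial symbol of such an operator is odd in $\xi$, so that $res_+=-res_-$ there). Thus $res(AB)=0$ as soon as $A,B$ both lie in $\F Cl_{ee}(S^1,V)$ or both lie in $\F Cl_{eo}(S^1,V)$, which is exactly isotropy of the two summands.

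For non-degeneracy I would exploit the bijection $X\mapsto\epsilon(D)\circ X$ from $\F Cl_{ee}(S^1,V)$ onto $\F Cl_{eo}(S^1,V)$. Since $\sigma(\epsilon(D))=\xi/|\xi|$ is $0$-homogeneous, Lemma \ref{l2} gives $\sigma(\epsilon(D)\circ X)=(\xi/|\xi|)\,\sigma(X)$, so that for $X\in\F Cl_{ee}(S^1,V)$ one computes $res(\epsilon(D)\circ X)=res_+(X)-res_-(X)=2\,res_+(X)$, which under the equality $\F Cl_{ee}(S^1,V)=\Psi DO(S^1,V)$ is a nonzero constant multiple of the Adler trace $Tr(X)$. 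Hence for $A,B\in\F Cl_{ee}(S^1,V)$ the number $res(A\circ\epsilon(D)\circ B)=res(\epsilon(D)\circ(A\circ B))$ is proportional to $Tr(A\circ B)$, which is non-degenerate on $\Psi DO(S^1,V)$ by the Manin triple structure recalled earlier; so $res(\cdot\,\cdot)$ restricts to a perfect pairing between $\F Cl_{ee}(S^1,V)$ and $\F Cl_{eo}(S^1,V)$. Combined with isotropy of both summands, this yields non-degeneracy on $\F Cl(S^1,V)=\F Cl_{ee}(S^1,V)\oplus\F Cl_{eo}(S^1,V)$: from $res((A_{ee}+A_{eo})B)=0$ for all $B$, testing against $B\in\F Cl_{eo}(S^1,V)$ forces $A_{ee}=0$ and against $B\in\F Cl_{ee}(S^1,V)$ forces $A_{eo}=0$.

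Finally, $\F Cl_{ee}(S^1,V)$ is closed under $\circ$ and hence is a Lie subalgebra for $[.,.]$ — it is just $\Psi DO(S^1,V)$. For $\F Cl_{eo}(S^1,V)$ equipped with $[.,.]_{\epsilon(D)}$: if $A,B$ are even-odd then $[A,B]=AB-BA$ is even-even by the composition table, so $[A,B]_{\epsilon(D)}=\epsilon(D)\circ[A,B]\in\epsilon(D)\circ\F Cl_{ee}(S^1,V)=\F Cl_{eo}(S^1,V)$, giving stability; and the Jacobi identity for $[.,.]_{\epsilon(D)}$ holds on all of $\F Cl(S^1,V)$, because $[.,.]_{\epsilon(D)}=\epsilon(D)\circ[.,.]$ with $\epsilon(D)$ central and $\epsilon(D)^2=Id$ makes the cyclic sum collapse to $\epsilon(D)^2$ times the ordinary Jacobiator, so it restricts to a Lie bracket on the stable subspace. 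I expect the only genuinely delicate point to be the non-degeneracy, specifically checking that the $\epsilon(D)$-twist turns the restriction of $res$ to $\F Cl_{ee}(S^1,V)$ — which is identically zero — into a nonzero multiple of the honest Adler pairing; once that identification is pinned down, everything else is formal manipulation with the trace property, the centrality of $\epsilon(D)$, and the even-even/even-odd composition rules.
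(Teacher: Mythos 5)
Your proof is correct, and its skeleton coincides with the paper's: both arguments rest on the tracial property of $res$, the centrality of $\epsilon(D)$ together with $\epsilon(D)^2=Id$ (giving $[A,B]_{\epsilon(D)}=\epsilon(D)\circ[A,B]$, whence Jacobi by pulling $\epsilon(D)^2$ out of the cyclic sum, and invariance by sliding the central factor through the trace), and the even-even/even-odd composition rules. There are two points where you genuinely diverge, both to your advantage. First, for the isotropy of $\F Cl_{eo}(S^1,V)$ the paper conjugates by $\epsilon(D)$ to reduce to the isotropy of $\F Cl_{ee}(S^1,V)$, which it in turn imports from Kontsevich--Vishik; you instead observe that a product of two even-odd (or two even-even) operators is even-even and that $res$ vanishes identically on $\F Cl_{ee}(S^1,V)$ because $\sigma_{-1}$ is odd in $\xi$ there --- a more self-contained route that handles both summands at once. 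Second, and more substantially, the paper simply asserts that non-degeneracy of $res$ on $\F Cl(S^1,V)$ is ``well-known,'' whereas you actually prove it: using Lemma \ref{l2} to see that $\epsilon(D)\circ(\cdot)$ acts on symbols by pointwise multiplication by $\xi/|\xi|$, you identify $res(\epsilon(D)\circ X)$ for $X\in\F Cl_{ee}(S^1,V)$ with $2\,res_+(X)$, i.e.\ with a nonzero multiple of the Adler trace, so that $res$ becomes a perfect pairing between the two isotropic summands, and non-degeneracy on the direct sum follows. This fills a gap the paper leaves to the reader and is the most valuable part of your write-up; the remaining steps (closure of $\F Cl_{ee}$ under $\circ$, stability of $\F Cl_{eo}$ under $[.,.]_{\epsilon(D)}$ via the composition table) match the paper's proof.
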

}
\subsection{Polarized Lie bracket} \label{ss:polar}

{
The } modified Yang–Baxter equation gives the condition on $\bold r$ for making $ [.,.]_{\bold r}$ a Lie bracket:
$$  [\mathbf  r X , \mathbf r Y ] - \mathbf  r ([\mathbf r X , Y ] + [X , \mathbf r Y ]) = - [X , Y ].$$
%One has to check the Jacobi identity. %: 
%$$\forall (A,B,C) \in \F Cl(S^1,V)^3, [A,[B,C]_r]_r + [C,[A,B]_r]_r + [B,[C,A]_r]_r = 0.$$
%We expand the expressions:
%\begin{eqnarray*}
%	[A,[B,C]_r]_r & = & \frac{1}{4}\left([r(A),\left([r(B),C] + [B,r(C)]\right)] + [A,r\left([r(B),C] + [B,r(C)]\right)]\right)
%\end{eqnarray*}
%Therefore we gather the three terms:
%\begin{eqnarray*}
%	[A,[B,C]_r]_r + [C,[A,B]_r]_r + [B,[C,A]_r]_r  & = & \\
%	\frac{1}{4} \left\{[r(A),\left([r(B),C] + [B,r(C)]\right)] + [A,r\left([r(B),C] + [B,r(C)]\right)] \right. &&\\ + [r(C),\left([r(A),B] + [A,r(B)]\right)] + [C,r\left([r(A),B] + [A,r(B)]\right)] &&\\ + \left. [r(B),\left([r(C),A] +[C,r(A)]\right)] + [B,r\left([r(C),A] + [C,r(A)]\right)] \right\}
%\end{eqnarray*}
By direct computations, we get the following:
\begin{Theorem}\label{th:brackets}
	On the vector space $\F Cl(S^1,V),$
	\begin{enumerate}
	\item
	$[.,.]_{\epsilon(D)}$ is a Lie bracket  for which $[\F Cl_{ee}(S^1,V),\F Cl_{ee}(S^1,V)]_{\epsilon(D)} \subset \F Cl_{eo}(S^1,V)$ and
	$[\F Cl_{eo}(S^1,V),\F Cl_{eo}(S^1,V)]_{\epsilon(D)} \subset \F Cl_{eo}(S^1,V).$
\item
	$[.,.]_{s}$ and $[.,.]_{s'}$ are not Lie brackets. \end{enumerate}
\end{Theorem}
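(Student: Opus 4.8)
The plan is to verify the modified Yang--Baxter equation (mYBE) for the three candidate $\mathbf r$-matrices $\epsilon(D)\circ(\cdot)$, $s$ and $s'$, using the decomposition $\F Cl(S^1,V) = \F Cl_{ee}(S^1,V)\oplus\F Cl_{eo}(S^1,V)$ together with the composition table recorded in Section~\ref{s:even/odd} and the identity $\epsilon(D)^2 = Id$. For part (1), I would write an arbitrary $X\in\F Cl(S^1,V)$ as $X = X_{ee}+X_{eo}$ and first observe that, since $\epsilon(D)$ is central, $[.,.]_{\epsilon(D)}(X,Y) = \frac12\bigl([\epsilon(D)X,Y]+[X,\epsilon(D)Y]\bigr) = \epsilon(D)\circ[X,Y]$. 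This reduces the mYBE to the identity $\epsilon(D)\circ[\epsilon(D)X,\epsilon(D)Y] - \epsilon(D)\circ\bigl(\epsilon(D)[X,Y]+\epsilon(D)[X,Y]\bigr) = -[X,Y]$; expanding and using $\epsilon(D)^2=Id$ collapses the left side to $[X,Y]-2[X,Y] = -[X,Y]$, which is exactly what is needed. So $[.,.]_{\epsilon(D)}$ is a Lie bracket (antisymmetry and bilinearity being immediate, the Jacobi identity following from the mYBE by the standard Semenov-Tian-Shansky argument). The inclusion statements then follow directly from $[.,.]_{\epsilon(D)} = \epsilon(D)\circ[.,.]$, the composition table (which gives $[\F Cl_{ee},\F Cl_{ee}]\subset\F Cl_{ee}$, $[\F Cl_{ee},\F Cl_{eo}]\subset\F Cl_{eo}$, $[\F Cl_{eo},\F Cl_{eo}]\subset\F Cl_{ee}$), and the fact that left multiplication by $\epsilon(D)\in\F Cl_{eo}(S^1,V)$ swaps the two summands: $\epsilon(D)\circ\F Cl_{ee} = \F Cl_{eo}$ and $\epsilon(D)\circ\F Cl_{eo} = \F Cl_{ee}$.

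For part (2), the strategy is to exhibit explicit witnesses where the mYBE fails. The key structural difference is that $s$ and $s'$ are \emph{not} central: they act on symbols by $(x,\xi)\mapsto(x,-\xi)$ (with or without the sign twist), so $s[X,Y]\neq[sX,Y]$ in general, and in particular the convenient collapse used in part (1) is unavailable. I would use the formula displayed just before Section~\ref{ss:polar}, namely $s[a,b] = \bigl([a_{ee},b_{ee}]+[a_{eo},b_{eo}]\bigr) - \bigl([a_{ee},b_{eo}]+[a_{eo},b_{ee}]\bigr)$, to compute $[sX,sY]$, $[sX,Y]$ and $[X,sY]$ componentwise in the $ee/eo$ decomposition, and then substitute into the mYBE. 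One finds that the mYBE does not reduce to $-[X,Y]$: the "mixed" terms $[X_{ee},Y_{eo}]$ and $[X_{eo},Y_{ee}]$ pick up the wrong signs. A clean way to nail this down is to test on a pair with $X\in\F Cl_{ee}$ (say a nonzero differential operator, e.g. $X=\partial$) and $Y\in\F Cl_{eo}$ (say $Y=|D|$ or $Y=\epsilon(D)\circ\partial$), for which $sX=X$ and $sY=-Y$; then the left-hand side of the mYBE becomes $[X,-Y] - s\bigl([X,-Y]+[X,-Y]\bigr) = -[X,Y]+2s[X,Y] = -[X,Y]-2[X,Y]_{eo}$ using $s|_{\F Cl_{eo}}=-Id$, which equals $-[X,Y]$ only if $[X,Y]_{eo}=0$; but for the chosen $X,Y$ the bracket $[X,Y]$ lies in $\F Cl_{eo}$ and is nonzero, so the equation fails. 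The same witnesses work for $s'$ (where $s'|_{\F Cl_{ee}}$ is not the identity, so one should instead pick $X$ even-even of a fixed order and compute $s'X$ explicitly).

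The main obstacle is purely computational and of the "keeping track of signs" variety in part~(2): one must be careful that the candidate counterexample pair genuinely produces a nonzero bracket in the offending component, and that the full mYBE expression — not just a partial piece of it — is nonzero. This is why it is worth choosing the simplest possible concrete operators ($\partial$ and $|D|$, or $\partial$ and $\epsilon(D)$) rather than arguing abstractly: the brackets are then trivially computable ($[\partial,|D|]=0$ actually, so one must avoid that trap — better to take $Y=\epsilon(D)\circ a$ with $a$ a suitable differential operator not commuting with $\partial$, e.g. $a$ a nonconstant multiplication operator), and the failure of the identity can be displayed in one line. Everything else — bilinearity, antisymmetry, and the passage from the mYBE to the Jacobi identity — is standard $\mathbf r$-matrix formalism and can be quoted. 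I would relegate the detailed sign-chase to the Appendix, as the paper announces for routine computations.
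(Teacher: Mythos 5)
Your part (1) is correct and is essentially the paper's own argument: since $\epsilon(D)$ is central and $\epsilon(D)^2=Id$, one has $[X,Y]_{\epsilon(D)}=\epsilon(D)\circ[X,Y]$, the modified Yang--Baxter identity collapses exactly as you compute (equivalently, the Jacobi identity follows directly as in the paper's proof of Theorem \ref{1.25}), and the two inclusions follow from the $ee/eo$ composition table together with $\epsilon(D)\circ\F Cl_{ee}(S^1,V)=\F Cl_{eo}(S^1,V)$ and $\epsilon(D)\circ\F Cl_{eo}(S^1,V)=\F Cl_{ee}(S^1,V)$.

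Part (2) contains a genuine gap. The modified Yang--Baxter equation is a \emph{sufficient} condition for $[.,.]_{\mathbf r}$ to satisfy the Jacobi identity, not a necessary one: writing $B_{\mathbf r}(X,Y)=[\mathbf r X,\mathbf r Y]-\mathbf r([\mathbf r X,Y]+[X,\mathbf r Y])$, the Jacobi identity for $[.,.]_{\mathbf r}$ is equivalent to the vanishing of the cyclic sum of $[B_{\mathbf r}(X,Y),Z]$, which can hold even when $B_{\mathbf r}(X,Y)\neq -[X,Y]$. So exhibiting a pair on which the mYBE fails does not prove that $[.,.]_s$ is not a Lie bracket; you must produce a \emph{triple} violating the Jacobi identity. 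Worse, your chosen witness does not even falsify the mYBE: for $X\in\F Cl_{ee}(S^1,V)$ and $Y\in\F Cl_{eo}(S^1,V)$ one has $sX=X$ and $sY=-Y$, hence $[sX,Y]+[X,sY]=[X,Y]-[X,Y]=0$ and the left-hand side equals $[sX,sY]-s(0)=-[X,Y]$, which is precisely the right-hand side; your display replaces $[sX,Y]$ by $[X,-Y]$, which is the sign slip. The efficient repair is to use the closed formula $[a,b]_s=[a_{ee},b_{ee}]-[a_{eo},b_{eo}]$ (immediate from $sa=a_{ee}-a_{eo}$ and the composition table), which shows $[a,b]_s\in\F Cl_{ee}(S^1,V)$ always; the Jacobi cyclic sum then reduces to $-\bigl([[a_{eo},b_{eo}],c_{ee}]+[[b_{eo},c_{eo}],a_{ee}]+[[c_{eo},a_{eo}],b_{ee}]\bigr)$, and the triple $a=\epsilon(D)f\partial$, $b=\epsilon(D)g\partial$, $c=h$ a multiplication operator gives $-[[f\partial,g\partial],h]=-(fg'-f'g)h'\neq 0$ for generic $f,g,h$. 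A parallel computation in the $\F Cl_{+}\oplus\F Cl_{-}$ decomposition disposes of $[.,.]_{s'}$. The paper leaves the whole theorem as a ``direct computation,'' but the computation it requires for part (2) is a Jacobi counterexample, not an mYBE one.
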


%\subsection{Testing Rota-Baxter equations and Reynolds operators}
\begin{rem}\textbf{(Testing Rota-Baxter equations and Reynolds operators)}
%Let us analyze whether 
Testing by direct calculations
the Rota-Baxter equations
$$ R(u)R(v) - R(R(u)v) - R(uR(v)) = \lambda R(uv)$$
 %are fulfiled 
 for a weight $\lambda \in \C,$ one finds that $R = \epsilon(D) \circ (.),$ $R=s$ and $R=s'$ do not satisfy the Rota-Baxter equations (i.e. don't define new associative algebra operations) The same calculations show that these are not Reynolds operators (i.e. they do not satisfy the condition
$ R(R(u)v) = R(u)R(v)$ for all $u, v$ in the underlying associative algebra).  \end{rem}

	\subsection{Preliminaries on the KP hierarchy} \label{ss:prelKP}
	Let  $R$ be an algebra of functions equipped with a derivation $\partial.$ For us, $R = C^\infty(S^1,\K)$ with $\K = \R, \C$ and  $\mathbb{H},$ and $\partial = \frac{d}{dx}.$ In this context, where algebras of functions $R$ are Fr\'echet algebras, a natural notion of differentiability occurs, making addition, multiplication and differentiation smooth. By the way, considering addition and multiplication in $\Psi DO(S^1,\K),$ one can say that addition and multiplication in $\Psi DO(S^1,\K)$ by understanding, under this terminology, that, if $A = \sum_{n \in \Z} a_n \partial^n $ and $ B = \sum_{n \in \Z} b_n \partial^n ,$
	setting $A+ B = C = \sum_{n \in \Z} c_n \partial^n$
	and $AB = D = \sum_{n \in \Z} d_n \partial^n$
	the map $$ \left((a_n)_{n \in \Z},(b_n)_{n \in \Z}\right) \mapsto \left((c_n)_{n \in \Z},(d_n)_{n \in \Z}\right)$$ is smooth in the relevant infinite product. We make these precisons in other to circumvent the technical tools recently developed in \cite{ERMR,MR2016} where a fully rigorous framework for smoothness on these objects is described and used.
		Let $T=\{t_n\}_{n \in \N^*}$ be an infinite set of formal (time) variables and let us consider the 
	algebra of formal series 
	$\Psi DO(S^1,\K)[[T]]$ with infinite set of formal variables $t_1,t_2,\cdot$ with $T-$valuation $val$ defined by $val_T(t_n) = n$ \cite{M1}. One can extend naturally on $\Psi DO(S^1,\K)[[T]]$ the notion of smoothness from the same notion on $\Psi DO(S^1,\K),$ see \cite{MR2016} for a more complete description.    
	The Kadomtsev-Petviashvili (KP) hierarchy reads
	\begin{equation} \label{eq:KP}
	\frac{d L}{d t_{k}} = \left[ (L^{k})_{D} , L \right]\; , \quad
	\quad k \geq 1 \; ,
	\end{equation}
	with initial condition $L(0)=L_0  \in \partial + \Psi^{-1} (R)$. The dependent
	variable $L$ is chosen to be of the form
	$L = \partial + \sum_{\alpha \leq -1 } u_\alpha \partial^\alpha \in {\Psi}^1(S^1,\K)[[T]] \; .$
	A standard reference on (\ref{eq:KP}) is L.A. Dickey's treatise \cite{D}, see also 
	\cite{KZ,M1,M3}. 
	In order to solve the KP hierarchy, we need the following groups (see e.g. \cite{MR2016} for a latest adaptation of Mulase's construction \cite{M1,M3}):
	$$ \bar{G} = 1 + \Psi DO^{-1}(S^1,\K)[[T]],$$
	$$ \overline{\Psi} = \left\{ P = \sum_{\alpha \in {\mathbb{Z}}}
	a_{\alpha}\,\partial^{\alpha} \in {\Psi}(S^1,\K)[[T]] \; : 
	\, val_T(a_\alpha)\geq \alpha \hbox{ and } P|_{t = 0} \in 1 + \Psi DO^{-1}(S^1,\K)  \right\}$$ 
	and 
	$$\overline{\D} = \left\{ P= \sum_{\alpha \in \mathbb{Z}}
	a_{\alpha}\,\partial^{\alpha} : P \in\overline{\Psi}(A_t) \mbox{ and }
	a_\alpha=0 \mbox { for } \alpha <0 \right\} \; .$$
	We have a matched pair $ \overline{\Psi} = \bar{G} \bowtie \overline{\D}$ which is smooth under the terminology we gave before.
	The following result, from \cite{MR2016}, gives a synthesied statement of main results on the KP hierarchy (\ref{eq:KP}) and states smooth dependence on the initial conditions in the case where $R$ is commutative (i.e. $R = C^\infty(S^1,\R)$ or $R = C^\infty(S^1,\C)$ in this work).
	
	\begin{Theorem} \label{KPcentral} 
		\cite{MR2016}
			Consider the KP hierarchy 
		\ref{eq:KP}
			with initial condition $L(0)=L_0$. Then,
			\begin{enumerate}
				\item There exists a pair $(S,Y) \in \bar{G} \times \overline{\D}$ 
				%$(S,Y) \in G_{A_t} \times \overline{\D}(A_t)^\times$  
				such that the unique solution to Equation $(\ref{eq:KP})$ with $L|_{t=0}=L_0$ is
				$
					L(t_1,t_2,\cdots)=Y\,L_0\,Y^{-1} = S L_0 S^{-1}$.
				\item The pair $(S,Y)$ is uniquely determined by the smooth decomposition problem 
				$$exp\left(\sum_{k \in \N}\tau_k L_0^k\right) = S^{-1}Y$$ 
				and the solution $L$ depends smoothly on the initial condition $L_0$.
			
				\item The solution operator $L$ is smoothly dependent  on the 
				initial value $L_0.$ 
			\end{enumerate}
	\end{Theorem}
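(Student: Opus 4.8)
The plan is to adapt Mulase's factorization method to the graded ring $\Psi DO(S^1,\K)[[T]]$, using the smooth matched pair $\overline{\Psi} = \bar{G}\bowtie\overline{\D}$ recalled above. First I would verify that
\[
U(t) := \exp\Bigl(\sum_{k\in\N^*} t_k\,L_0^{\,k}\Bigr)
\]
is a well-defined element of $\overline{\Psi}$ with $U|_{t=0}=1$: since $L_0^k$ has order $k$ while $val_T(t_k)=k$, each term $t_kL_0^k$ has non-positive ``order minus $T$-valuation'', so the exponential series is $T$-adically summable; being a power series in $L_0$ it commutes with $L_0$, and it depends smoothly on $L_0$ because it is built from the smooth operations of $\Psi DO(S^1,\K)[[T]]$.

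Next I would apply the matched pair decomposition to obtain the unique $(S,Y)\in\bar{G}\times\overline{\D}$ with $U=S^{-1}Y$, and set $L:=SL_0S^{-1}$. Because $U$ commutes with $L_0$ one gets $YL_0Y^{-1}=SL_0S^{-1}=L$, so $L\in\partial+\Psi DO^{-1}(S^1,\K)[[T]]$; and at $t=0$ the decomposition of $1$ is trivial, hence $S|_{t=0}=Y|_{t=0}=1$ and $L|_{t=0}=L_0$. Differentiating $U=S^{-1}Y$ and using $\partial_{t_k}U=L_0^kU$ gives
\[
-(\partial_{t_k}S)S^{-1}+(\partial_{t_k}Y)Y^{-1}=YL_0^kY^{-1}=L^k .
\]
Here $(\partial_{t_k}S)S^{-1}$ has order $\leq -1$ while $(\partial_{t_k}Y)Y^{-1}$ is a differential operator, so uniqueness of the splitting (\ref{psi-DS}) forces the Sato equations $(\partial_{t_k}Y)Y^{-1}=(L^k)_D$ and $\partial_{t_k}S=-(L^k-(L^k)_D)S$. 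Differentiating $L=YL_0Y^{-1}$ and substituting then yields $\partial_{t_k}L=[(L^k)_D,L]$, so $L$ solves (\ref{eq:KP}) with $L|_{t=0}=L_0$; this gives item (1) together with the factorization $\exp(\sum_k t_kL_0^k)=S^{-1}Y$ asserted in item (2).

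For uniqueness I would start from an arbitrary solution $\tilde L$ of (\ref{eq:KP}) with $\tilde L|_{t=0}=L_0$. The KP equations imply the zero-curvature relations $\partial_{t_j}(\tilde L^k)_D-\partial_{t_k}(\tilde L^j)_D=[(\tilde L^j)_D,(\tilde L^k)_D]$, which make the linear systems $\partial_{t_k}\tilde S=-(\tilde L^k-(\tilde L^k)_D)\tilde S$ and $\partial_{t_k}\tilde Y=(\tilde L^k)_D\tilde Y$ (with $\tilde S|_{t=0}=\tilde Y|_{t=0}=1$) compatible; with respect to the $T$-valuation grading these systems are triangular, hence uniquely solvable in $\bar G$ and $\overline{\D}$. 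A short computation gives $\partial_{t_k}(\tilde S^{-1}\tilde L\tilde S)=0$, so $\tilde L=\tilde SL_0\tilde S^{-1}$, and then $\partial_{t_k}(\tilde S^{-1}\tilde Y)=L_0^k(\tilde S^{-1}\tilde Y)$ with value $1$ at $t=0$, forcing $\tilde S^{-1}\tilde Y=U=S^{-1}Y$. Uniqueness of the matched pair decomposition then gives $\tilde S=S$, $\tilde Y=Y$, whence $\tilde L=L$.

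Finally, the smooth dependence of $L$ on $L_0$ in items (2) and (3) is obtained by composing the smooth map $L_0\mapsto U=\exp(\sum_k t_kL_0^k)$ with the smooth factorization $U\mapsto(S,Y)$ and with $(S,L_0)\mapsto SL_0S^{-1}$. I expect the only genuinely delicate ingredient to be the one imported from the matched pair statement: namely that the Birkhoff-type factorization $\overline{\Psi}=\bar G\bowtie\overline{\D}$, which is elementary to perform order by order, is in fact smooth between the relevant locally convex (Fr\'echet) spaces. Once that is granted --- as it is in \cite{MR2016,ERMR} --- every remaining step is a formal computation in $\Psi DO(S^1,\K)[[T]]$.
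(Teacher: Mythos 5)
Your sketch is correct and follows exactly the route the paper intends: Theorem \ref{KPcentral} is quoted from \cite{MR2016}, and the argument there is precisely the Mulase factorization through the smooth matched pair $\overline{\Psi}=\bar{G}\bowtie\overline{\D}$ recalled just before the statement, with the dressing operator $U=\exp(\sum_k t_k L_0^k)$, the Sato equations obtained from the splitting $\Psi DO = DO\oplus IO$, and smoothness imported from the smoothness of the decomposition. The one genuinely non-formal ingredient you flag --- smoothness of the Birkhoff-type factorization in the relevant Fr\'echet/Fr\"olicher setting --- is indeed the main technical content of \cite{MR2016,ERMR}, so your attribution of the delicate step is accurate.
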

	We now describe the case $\K = \mbH = \R + i\R + j\R + k\R.$ The algebra $\Psi DO(S^1,\mbH)$ is constructed from the non commutative Fr\'echet algebra
	$$ C^\infty(S^1,\mbH) =  C^\infty(S^1,\R) \oplus iC^\infty(S^1,\R) \oplus jC^\infty(S^1,\R) \oplus kC^\infty(S^1,\R).$$ All the constructions before remain valid following \cite{Ku2000,McI2011}, setting $V= \mbH$ as a 4-dimensional $\R-$algebra, and the algebraic description of the solutions of the KP hierarchy (\ref{eq:KP}) with $L_0 \in \Psi DO^1(S^1,\mbH)$ and $L \in \Psi^1 DO(S^1,\mbH)[[T]]$ as before can be completed by stating that the coefficients of the $T-$series of $L$ depend smoothly on the initial value $L_0$ from \cite{ERMR}. 
	\section{Injecting $\Psi DO$ into $\mathcal{F} Cl.$}
	
	\subsection{Injecting $\Psi DO(S^1,\K)$ in $\mathcal{F} Cl(S^1,\K).$} \label{s:inj}
	We already mentionned the identification of 
	$\Psi DO (S^1,\K)$ with $\F Cl_{ee}(S^1,\K),$ present when $\K = \R$ or $\C$ in \cite{MR2018}. We claim here that this identification also applies straightway when $\K = \mbH.$ We denote by $\Phi_{ee}$ this identification, that can be generalized to $$\Phi_{ee,\lambda} : \sum_{k \in \Z} a_{k} \left(\frac{d}{dx}\right)^k \in \Psi DO(S^1,\K) \mapsto \sum_{k \in \Z} a_{k} \left(\lambda\frac{d}{dx}\right)^k \in \F Cl_{ee}(S^1,\K) . $$
	Similar to this identification, we have other { injections} { for } $\lambda \in \R^*:$ 
	$$\Phi_{\epsilon(D),\lambda} : \sum_{k \in \Z} a_{k} \left(\frac{d}{dx}\right)^k \in \Psi DO(S^1,\K) \mapsto \sum_{k \in \Z} a_{k} \left(\lambda\epsilon(D)\frac{d}{dx}\right)^k \in \F Cl(S^1,\K) , \hbox{ and}$$ 
	$$\Phi_{\lambda,\mu} : \sum_{k \in \Z} a_{k} \left(\frac{d}{dx}\right)^k \in \Psi DO(S^1,\K) \mapsto \sum_{k \in \Z} a_{k} \left(\lambda^k\left(\frac{d}{dx}\right)_+^k + \mu^k\left(\frac{d}{dx}\right)_-^k\right)  \in \F Cl(S^1,\K) $$
	for {$(\lambda,\mu) \in \C^2 \backslash \{(0;0)\},$} with unusual convention $0^k = 0$ $\forall k \in \Z.$ 
	\begin{rem}
		$\Phi_{1,1} = \Phi_{ee}$ and $\Phi_{1,-1} = \Phi_{\epsilon(D),1}.$
	\end{rem}
	
	\begin{rem}
		$Im \Phi_{1,0} = \F Cl_+(S^1,\K) $ and $\Phi_{1,0}$ is a isomorphism of algebras from $\Psi DO(S^1,\K)$ to $\F Cl_+(S^1,\K) .$ The same way, $\Phi_{0,1}$ identifies the algebras $\Psi DO(S^1,\K)$ and $\F Cl_-(S^1,\K) .$
	\end{rem}
{
\begin{rem}
	Wa have also to say that the maps $\Phi_{\lambda,\mu}$ are not algebra morphisms unless $(\lambda,\mu) \in \{(1;0),(0;1),(1;1)\}.$ For example, let $\lambda \in \mathbb{C}-\{0;1\}.$ the map $\Phi_{\lambda,0}$ pushes forward the multiplication on $\Psi DO(S^1,\K)$ to a deformed composition $*_k$ on $\F Cl_+(S^1,\K)$ that reads as
	$\sigma(A) *_k \sigma(B) = \sum_{\alpha \in \N} \frac{(-i)^\alpha}{ \alpha!.k^\alpha} D^\alpha_x\sigma(A) D^\alpha_\xi\sigma(B).$
\end{rem} }
	Let us now give some sample images:
	\vskip 12pt
	\begin{tabular}{|c|c|c|c|c|}
		\hline  &&&& \\
		$A \in \Psi DO(S^1,\C)$ &$1$ & $\frac{d}{dx}$& $-\frac{d}{dx} = \Delta$ & $\left(1 +\frac{d}{dx} \right)^{-1}$ \\ &&&& \\
		\hline &&&& \\
		$\Phi_{\epsilon(D),1}(A)$ &$1$&$\epsilon(D)\frac{d}{dx} = i|D|$&$\Delta$& $\left(1 +\epsilon(D)\frac{d}{dx} \right)^{-1}$ \\ &&&& \\
		\hline &&&& \\ $\Phi_{ee,-1}(A)$ &$1$&$-\frac{d}{dx}$&$\Delta$&$\left(1 -\frac{d}{dx} \right)^{-1}$\\ &&&& \\
		\hline &&&& \\
		$\Phi_{1,0}(A)$ &$1_+$&$\left(\frac{d}{dx}\right)_+$&$\Delta_+$&$\left(\left(1 +\frac{d}{dx} \right)^{-1}\right)_+$\\ &&&& \\
		\hline
	\end{tabular}
\vskip 12pt
From our previous remarks, we get:
\begin{Theorem}
	The map $$\Phi_{1,0} \times \Phi_{0,1} : \Psi DO(S^1,\K)^2 \rightarrow \F Cl_+(S^1,\K) \times \F Cl_-(S^1,\K) = \F Cl(S^1,\K)$$ is an isomorphism of algebra.
\end{Theorem}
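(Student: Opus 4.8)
The plan is to verify the isomorphism by checking three things: that $\Phi_{1,0}\times\Phi_{0,1}$ is a bijection of underlying vector spaces onto $\F Cl(S^1,\K)$, that it is multiplicative, and that it respects units. The bijectivity onto the target is essentially free from material already in the excerpt: by the remark immediately preceding the theorem, $\Phi_{1,0}$ is an algebra isomorphism from $\Psi DO(S^1,\K)$ onto $\F Cl_+(S^1,\K)$ and $\Phi_{0,1}$ is an algebra isomorphism onto $\F Cl_-(S^1,\K)$, while the splitting $\F Cl(S^1,\K)=\F Cl_+(S^1,\K)\oplus\F Cl_-(S^1,\K)$ from Section \ref{ss:+-} identifies the product $\F Cl_+(S^1,\K)\times\F Cl_-(S^1,\K)$ with $\F Cl(S^1,\K)$ via $(A_+,A_-)\mapsto A_++A_-$. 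So the content of the theorem is really that the \emph{product} algebra structure on $\Psi DO(S^1,\K)^2$ matches the \emph{direct sum of ideals} structure on $\F Cl(S^1,\K)$.

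First I would recall from Section \ref{ss:+-} that $\F Cl_+(S^1,\K)$ and $\F Cl_-(S^1,\K)$ are two-sided ideals of $\F Cl(S^1,\K)$ with $\F Cl_+\circ\F Cl_-=\{0\}=\F Cl_-\circ\F Cl_+$, since $p_\pm$ are algebra morphisms that are mutually orthogonal projections (multiplication of formal symbols is pointwise in $(x,\xi)$, and the supports $\{\xi>0\}$ and $\{\xi<0\}$ are disjoint). Hence for $(A,B),(A',B')\in\Psi DO(S^1,\K)^2$ one has, writing the image of $(A,B)$ as $\Phi_{1,0}(A)+\Phi_{0,1}(B)$,
\begin{equation*}
\bigl(\Phi_{1,0}(A)+\Phi_{0,1}(B)\bigr)\circ\bigl(\Phi_{1,0}(A')+\Phi_{0,1}(B')\bigr)=\Phi_{1,0}(A)\circ\Phi_{1,0}(A')+\Phi_{0,1}(B)\circ\Phi_{0,1}(B'),
\end{equation*}
the two cross terms vanishing because they lie in $\F Cl_+\circ\F Cl_-$ and $\F Cl_-\circ\F Cl_+$. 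Using that $\Phi_{1,0}$ and $\Phi_{0,1}$ are each algebra morphisms, this equals $\Phi_{1,0}(AA')+\Phi_{0,1}(BB')$, which is the image of $(AA',BB')$, i.e. of the componentwise product $(A,B)\cdot(A',B')$ in $\Psi DO(S^1,\K)^2$. Finally the unit $(1,1)$ of $\Psi DO(S^1,\K)^2$ maps to $1_++1_-$, which is the total symbol identically equal to $\mathrm{Id}$ on all of $T^*S^1\setminus S^1$, hence the unit of $\F Cl(S^1,\K)$ (the table of sample images records $\Phi_{1,0}(1)=1_+$, and symmetrically $\Phi_{0,1}(1)=1_-$).

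There is no serious obstacle here; the only point requiring a little care is the quaternionic case $\K=\mbH$, where $C^\infty(S^1,\mbH)$ is noncommutative, so one must make sure that every step above uses only associativity and the ideal property, never commutativity — which is the case, since the cross-term cancellation and the morphism property of $\Phi_{1,0},\Phi_{0,1}$ are purely associative facts (and the claim that these identifications persist for $\K=\mbH$ is asserted in Section \ref{s:inj}, following \cite{Ku2000,McI2011}). I would therefore present the proof as a two-line computation after recalling the ideal property and the two componentwise isomorphisms, and remark explicitly that commutativity of the base algebra is not used.
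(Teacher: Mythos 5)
Your proof is correct and follows exactly the route the paper intends: the theorem is stated with the preface ``From our previous remarks, we get,'' i.e.\ as an immediate consequence of the remark that $\Phi_{1,0}$ and $\Phi_{0,1}$ are algebra isomorphisms onto the two ideals $\F Cl_\pm(S^1,\K)$ together with the splitting of Section 1.1.2, and no further proof appears in the Appendix. Your write-up simply makes explicit the two details the paper leaves tacit (the vanishing of the cross terms $\F Cl_+\circ\F Cl_-$ from pointwise multiplication of symbols with disjoint supports, and the matching of units), which is a faithful expansion of the same argument.
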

We also remark a new subalgebra of $\F Cl(S^1,\K):$
\begin{Definition}
	Let $\F Cl_\epsilon(S^1,\K)$ be the image of $\Phi_{\epsilon(D),1}$ in $\F Cl(S^1,\K).$
\end{Definition}
We have the obvious identification $\F Cl_\epsilon(S^1,\K) = C^\infty(S^1,\K)((i|D|^{-1}))$ as a vector space. 
	\subsection{Identification of $\F Cl(S^1,\C)$ with $\Psi DO(S^1,\mbH).$ }\label{s:id}	Let $$ i \epsilon(D) = \left( \frac{d}{dx}\right).|D|^{-1} = |D|^{-1}.\left( \frac{d}{dx}\right) .$$ We define the operator $J_1 = i \epsilon(D) \circ (.)$ on  $\mathcal{F}Cl(S^1,V).$
	\begin{Theorem}\label{th:J1}
		The operator $J_1$ defines an integrable almost complex structure on $\mathcal{F}Cl(S^1,V).$$\F Cl(S^1,V) = \Psi DO(S^1,V)\otimes\C$ as a real algebra, identifying $\F Cl_ee(S^1,V)$ with $\Psi DO(S^1,V)$ (real part) and  $\F Cl_eo(S^1,V)$ with $i\Psi DO(S^1,V)$ (imaginary part).
	\end{Theorem}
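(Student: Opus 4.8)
The plan is to establish the three assertions of the theorem in order: first that $J_1$ is an almost complex structure, second that it is integrable, and third that the induced $\C$-structure identifies $\F Cl(S^1,V)$ with $\Psi DO(S^1,V)\otimes_\R\C$ with the stated real and imaginary parts. For the first point, I would use that $i\epsilon(D)$ lies in the center of $\F Cl(S^1,V)$ (recalled in the introduction and following from Lemma~\ref{l1}, since $\sigma(\epsilon)=\xi/|\xi|$ is scalar and $0$-homogeneous, so composition with it leaves every partial symbol invariant up to a sign), so that $J_1$ is $\R$-linear on $\F Cl(S^1,V)$ viewed as a real vector space, and $J_1^2 = (i\epsilon(D))^2\circ(.) = i^2\epsilon(D)^2\circ(.) = -Id$ because $\epsilon(D)^2 = Id$. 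Thus $J_1$ is a genuine almost complex structure on the (real) topological vector space $\F Cl(S^1,V)$.

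For integrability, the natural route is the Newlander--Nirenberg criterion in the algebraic/formal guise: an almost complex structure $J$ on a vector space (or, more generally, on a manifold modeled on one) coming from a \emph{constant} endomorphism is automatically integrable because its Nijenhuis tensor vanishes identically. Concretely, since $J_1$ is left multiplication by a fixed central element, it is a constant linear operator on the vector space $\F Cl(S^1,V)$, hence the Nijenhuis tensor $N_{J_1}(X,Y) = [J_1X, J_1Y] - J_1[J_1X,Y] - J_1[X,J_1Y] - [X,Y]$ — computed here with the ambient Lie bracket — reduces, using centrality of $i\epsilon(D)$ and $J_1^2=-Id$, to $[i\epsilon(D)X, i\epsilon(D)Y] + [X,Y] - i\epsilon(D)[i\epsilon(D)X,Y] - i\epsilon(D)[X,i\epsilon(D)Y] = -[X,Y] + [X,Y] - [X,Y] \cdot$ (the cross terms also collapsing because $i\epsilon(D)$ commutes out of each bracket), which is $0$. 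I would phrase this as: $J_1$ being translation-invariant and $\C$-linear for the complex structure it defines, the formal manifold $\F Cl(S^1,V)$ is a formal complex vector space, and integrability is immediate.

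For the final identification, I would argue as follows. Left multiplication by $\epsilon(D)$ is, by the remarks preceding the theorem, a bijection $\F Cl_{ee}(S^1,V)\to\F Cl_{eo}(S^1,V)$; consequently $J_1 = i\epsilon(D)\circ(.)$ maps $\F Cl_{ee}$ onto $\F Cl_{eo}$ and vice versa (with a sign), and since $\F Cl(S^1,V) = \F Cl_{ee}(S^1,V)\oplus\F Cl_{eo}(S^1,V)$, every $a\in\F Cl(S^1,V)$ writes uniquely as $a = a_{ee} + J_1(b_{ee})$ with $a_{ee}, b_{ee}\in\F Cl_{ee}(S^1,V) = \Psi DO(S^1,V)$. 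This is exactly the statement that, with the complex structure $J_1$, $\F Cl(S^1,V)$ is the complexification of its real part $\F Cl_{ee}(S^1,V)\simeq\Psi DO(S^1,V)$, with imaginary part $\F Cl_{eo}(S^1,V) = i\Psi DO(S^1,V)$. Finally I would check this identification respects the algebra structure: because $i\epsilon(D)$ is central and squares to $-Id$, the composition law on $\F Cl(S^1,V)$ matches the $\C$-bilinear extension of the composition on $\Psi DO(S^1,V)$, using the composition table for the even-even/even-odd classes given earlier (even-even $\times$ even-even is even-even, even-even $\times$ even-odd is even-odd, even-odd $\times$ even-odd is even-even), which is precisely the sign rule $i\cdot i = -1$.

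\textbf{Main obstacle.} The genuinely delicate point is not any single computation but making precise in what category the phrase ``integrable almost complex structure'' is being used: $\F Cl(S^1,V)$ is an infinite-dimensional (Fréchet, or inductive-limit-of-Fréchet) space rather than a finite-dimensional manifold, so one must either invoke a version of Newlander--Nirenberg valid in this setting or — which is what I expect the cleanest argument to be — observe that because $J_1$ is a \emph{constant-coefficient} (translation-invariant, continuous, $\R$-linear) operator with $J_1^2=-Id$, the Nijenhuis tensor vanishes for the trivial reason that the bracket terms cancel by centrality, so no hard analysis is needed and ``integrable'' just means ``the formal manifold is a formal complex vector space.'' I would be careful to state the centrality of $i\epsilon(D)$ cleanly (citing Lemma~\ref{l1} and the introduction) since every step leans on it.
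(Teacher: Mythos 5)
Your proposal is correct and follows essentially the same route as the paper: both arguments rest on the centrality of $i\epsilon(D)$ to make the integrability condition vanish (you via the Nijenhuis tensor, the paper via the equivalent Malgrange-form identity $[u,J_1v]+[J_1u,v]=J_1[u,v]-J_1[J_1u,J_1v]$, with both sides collapsing to $2J_1[u,v]$), and both obtain the complexification statement from the bijection $\epsilon(D)\circ(\cdot):\F Cl_{ee}(S^1,V)\to\F Cl_{eo}(S^1,V)$ together with the splitting $\F Cl(S^1,V)=\F Cl_{ee}(S^1,V)\oplus\F Cl_{eo}(S^1,V)$. One cosmetic remark: in your intermediate Nijenhuis expression the final term should read $-[X,Y]$ rather than $+[X,Y]$, though the cancellation you describe is the intended one.
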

	
	We now identify two other almost complex structures:
	$J_2 = i s(.),$ $ J_3 = is'$ and 
	Clearly, $\forall i \in \{2;3\}, J_i^2 = -Id$ and we have also:
	\begin{Proposition}\label{prop:J1J2}
		$J_1 \circ J_2 = - J_2 \circ J_1$	
	\end{Proposition}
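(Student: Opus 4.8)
\textbf{Proof proposal for Proposition \ref{prop:J1J2}.}

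The plan is to reduce the identity $J_1 \circ J_2 = - J_2 \circ J_1$ to a pointwise computation on formal symbols, exploiting the fact that both operators act very simply at the level of symbols. First I would unwind the definitions: $J_1(A) = i\epsilon(D) \circ A$, whose effect on the total symbol is $\sigma(J_1 A)(x,\xi) = i\,\frac{\xi}{|\xi|}\,\sigma(A)(x,\xi)$, since $\epsilon(D)$ lies in the center of $\F Cl(S^1,V)$ (Lemma \ref{l1} gives $\sigma(\epsilon) = \xi/|\xi|$, and centrality means composition by $\epsilon(D)$ is just multiplication of the symbol by $\xi/|\xi|$, with no lower-order correction terms from the Leibniz expansion). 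Meanwhile $J_2(A) = i\,s(A)$, where $s$ sends $\sum_n \sigma_n(x,\xi)$ to $\sum_n (-1)^n \sigma_n(x,-\xi)$.

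The heart of the argument is then to track a single partial symbol $\sigma_n(A)(x,\xi)$ through both compositions. Applying $J_2$ first: $\sigma_n$ becomes $(-1)^n i\,\sigma_n(x,-\xi)$; then applying $J_1$ multiplies by $i\,\frac{\xi}{|\xi|}$, giving $(-1)^n i^2 \frac{\xi}{|\xi|}\sigma_n(x,-\xi) = -(-1)^n \frac{\xi}{|\xi|}\sigma_n(x,-\xi)$. Applying $J_1$ first: $\sigma_n$ becomes $i\frac{\xi}{|\xi|}\sigma_n(x,\xi)$; applying $J_2$ then substitutes $\xi \mapsto -\xi$ and multiplies by $(-1)^n i$, and here one must be careful that the factor $\frac{\xi}{|\xi|}$ is itself part of the symbol being acted on by $s$, so it becomes $\frac{-\xi}{|-\xi|} = -\frac{\xi}{|\xi|}$. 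This yields $(-1)^n i \cdot i \cdot \left(-\frac{\xi}{|\xi|}\right)\sigma_n(x,-\xi) = -(-1)^n (-1)\frac{\xi}{|\xi|}\sigma_n(x,-\xi) = (-1)^n\frac{\xi}{|\xi|}\sigma_n(x,-\xi)$. Comparing the two results, they are negatives of one another, which is exactly $J_1 J_2 = -J_2 J_1$ on each partial symbol, hence on all of $\F Cl(S^1,V)$ by linearity and continuity.

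The main subtlety — and the step I would be most careful to get right — is the bookkeeping of which factors the symmetry $s$ is allowed to see: $J_1$ introduces the factor $\frac{\xi}{|\xi|}$ into the symbol, and when $J_2 = i\,s$ is applied afterward, $s$ acts on that factor as well, flipping its sign, whereas the extra grading sign $(-1)^n$ attached to $s$ is governed by the homogeneity degree $n$ of the original partial symbol and is unaffected by the degree-$0$ multiplier $\xi/|\xi|$. It is precisely this asymmetry in how the two orderings interact with the $\xi \mapsto -\xi$ reflection that produces the anticommutation rather than commutation. A secondary point worth a line is that $i\epsilon(D)$ being central guarantees $J_1$ is genuinely multiplication by $i\xi/|\xi|$ on symbols with no Leibniz tail, so no cross terms arise to spoil the clean cancellation; once that is noted, the computation is entirely elementary.
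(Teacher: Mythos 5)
Your proposal is correct and follows essentially the same route as the paper: both arguments reduce to a pointwise symbol computation, using that composition with the central element $\epsilon(D)$ is plain multiplication by $\xi/|\xi|$ with no Leibniz tail, and that $s$ flips the sign of this degree-$0$ factor (the paper encodes this as $s(\epsilon(D))=-\epsilon(D)$) while the grading sign $(-1)^n$ is unchanged. Your partial-symbol bookkeeping reproduces the paper's calculation exactly, so nothing further is needed.
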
 
	\begin{Theorem} \label{th:J2int}
		The operator $J_2$ defines a non integrable  almost complex structure on $\mathcal{F}Cl(S^1,V).$ Hence, gathering all these results, we get that 	the almost quaternionic structure$(J_1,J_2)$ is non integrable.
	\end{Theorem}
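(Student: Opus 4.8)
The plan is to establish non-integrability of $J_2$ by computing its Nijenhuis tensor on the Lie algebra $\bigl(\F Cl(S^1,V),[\cdot,\cdot]\bigr)$ — whose vanishing is equivalent to integrability of $J_2$ in the same sense as for $J_1$ in Theorem \ref{th:J1} — and then to read off the almost quaternionic claim from this together with Theorem \ref{th:J1} and Proposition \ref{prop:J1J2}. Concretely I would study
$$N_{J_2}(A,B)=[J_2A,J_2B]-J_2[J_2A,B]-J_2[A,J_2B]-[A,B].$$

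The first step is to note that $s$ is an automorphism of the associative algebra $\F Cl(S^1,V)$ with $s^2=Id$: since $s$ acts as $+Id$ on $\F Cl_{ee}(S^1,V)$ and as $-Id$ on $\F Cl_{eo}(S^1,V)$, the identity $s(A\circ B)=s(A)\circ s(B)$ follows by bilinearity from the four entries of the composition/parity table of section \ref{s:even/odd}; in particular $s$ is a Lie-bracket automorphism. Since the scalar $i$ is central, $J_2^2=-Id$ and
$$[J_2A,J_2B]=-s[A,B],\qquad J_2[J_2A,B]=-[A,s(B)],\qquad J_2[A,J_2B]=-[s(A),B].$$
Substituting these into $N_{J_2}$, decomposing $A=A_{ee}+A_{eo}$ and $B=B_{ee}+B_{eo}$ along the $(\pm 1)$-eigenspaces of $s$, and using the parity table once more ($[\F Cl_{ee},\F Cl_{ee}]$ and $[\F Cl_{eo},\F Cl_{eo}]$ land in $\F Cl_{ee}$, the mixed brackets in $\F Cl_{eo}$), all $\F Cl_{ee}$-contributions cancel and one is left with
$$N_{J_2}(A,B)=-4\,[A_{eo},B_{eo}]=-4\,[\phi(A),\phi(B)].$$

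Since $\phi$ is surjective onto $\F Cl_{eo}(S^1,V)$, it then remains to exhibit $a,b\in\F Cl_{eo}(S^1,V)$ with $[a,b]\neq 0$: take $a=|D|$ and $b=\epsilon(D)\circ f$ for a non-constant $f\in C^\infty(S^1,M_n(\C))$, both even-odd by the example and composition table of section \ref{s:even/odd}. As $\epsilon(D)$ is central, $[a,b]=\epsilon(D)\circ[|D|,f]$, and a short symbol computation gives $[|D|,f]=-i\,\epsilon(D)\circ f'$, whence $[a,b]=-i\,f'\neq 0$ and $N_{J_2}\bigl(|D|,\epsilon(D)\circ f\bigr)=4i\,f'\neq 0$. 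Thus $J_2$ is non-integrable. Finally, $J_1^2=J_2^2=-Id$ together with $J_1J_2=-J_2J_1$ makes $(J_1,J_2,J_1J_2)$ an almost hypercomplex structure, and such a structure is integrable only if each of its three almost complex structures is; as $J_2$ is not, the almost quaternionic structure $(J_1,J_2)$ is non-integrable.

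There is no deep obstacle here: the real content is the sign bookkeeping that collapses the four-term expression to $-4[\phi(A),\phi(B)]$ — care is needed only with the conventions fixed for $N_{J_2}$, for $D$, and for the parity grading — after which non-integrability reduces to the elementary non-commutativity of $\F Cl_{eo}(S^1,V)$. The one point worth an explicit sentence in the final write-up is that, as for $J_1$, ``integrable'' is meant in the sense that the Nijenhuis tensor of the left-invariant almost complex structure induced on $\F Cl^*(S^1,V)$ vanishes.
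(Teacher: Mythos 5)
Your proposal is correct and follows essentially the same route as the paper: both compute the Nijenhuis-type obstruction for $J_2$ via the decomposition $A=A_{ee}+A_{eo}$, find that it collapses to a multiple of $[A_{eo},B_{eo}]$, and conclude by exhibiting two non-commuting even-odd operators (the paper takes $u=\epsilon(D)X$, $v=\epsilon(D)Y$ for vector fields with $[X,Y]\neq 0$, while your pair $|D|$, $\epsilon(D)\circ f$ works just as well). Your observation that $s$ is the parity automorphism of the $\mathbb{Z}/2$-graded algebra is a slightly cleaner way to organize the same sign bookkeeping, but it is not a different argument.
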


\begin{Proposition}\label{prop:J1J3}
$J_1 \circ J_3 = - J_3 \circ J_1$	
\end{Proposition}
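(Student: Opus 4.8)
The plan is to reduce the identity $J_1\circ J_3=-J_3\circ J_1$ to two elementary facts. First, left composition by $\epsilon(D)$ is, at the level of formal symbols, nothing but pointwise multiplication by $\xi/|\xi|$: since $\sigma(\epsilon(D))(x,\xi)=\xi/|\xi|$ is $0$-positively homogeneous in $\xi$, Lemma~\ref{l2} forces $D_\xi^\alpha\sigma(\epsilon(D))=0$ for every $\alpha\geq 1$, so in the composition expansion $\sigma(\epsilon(D)\circ A)\sim\sum_{\alpha}\frac{(-i)^\alpha}{\alpha!}D_\xi^\alpha\sigma(\epsilon(D))\,D_x^\alpha\sigma(A)$ only the term $\alpha=0$ survives and $\sigma(\epsilon(D)\circ A)(x,\xi)=\tfrac{\xi}{|\xi|}\,\sigma(A)(x,\xi)$; equivalently, in the notation of Section~\ref{ss:+-}, $\epsilon(D)\circ A=A_+-A_-$, i.e. $\epsilon(D)\circ(\cdot)=p_+-p_-$. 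Second, $s'$ interchanges the ideals $\F Cl_+(S^1,V)$ and $\F Cl_-(S^1,V)$ (the Remark on $s'$ in Section~\ref{s:even/odd}), so $\big(s'(A)\big)_+=s'(A_-)$ and $\big(s'(A)\big)_-=s'(A_+)$; moreover $s'$ acts only on the $\xi$-dependence of a symbol, hence commutes with multiplication by the scalar $i$, and in particular $J_1(A)=i\epsilon(D)\circ A=i(A_+-A_-)$.

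Given these two facts, the proof is a direct chase of definitions. On the one hand $J_1\circ J_3(A)=i\epsilon(D)\circ\big(i\,s'(A)\big)=-\,\epsilon(D)\circ s'(A)=-\big(\,(s'(A))_+-(s'(A))_-\big)=-\big(s'(A_-)-s'(A_+)\big)$. On the other hand $J_3\circ J_1(A)=i\,s'\big(i\epsilon(D)\circ A\big)=-\,s'\big(\epsilon(D)\circ A\big)=-\,s'(A_+-A_-)=-\big(s'(A_+)-s'(A_-)\big)$. Comparing the two expressions gives $J_1\circ J_3(A)=s'(A_+)-s'(A_-)=-\,J_3\circ J_1(A)$, which is the assertion.

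This mirrors exactly the proof of Proposition~\ref{prop:J1J2}, with $s'$ playing the role of $s$, and I do not expect any real obstacle: the only point requiring care is that the sign producing the anticommutation comes from $\epsilon(D)$ alone---through $\epsilon(D)\circ(\cdot)=p_+-p_-$ together with $s'(\F Cl_\pm(S^1,V))=\F Cl_\mp(S^1,V)$---and never from an interaction of $s'$ with the algebra product, so that one never needs $s'$ to be an algebra (anti)morphism; indeed $s'$ is applied only to $\epsilon(D)\circ A$, whose symbol is the explicit pointwise product above. If an invariant formulation is preferred, the same computation yields the stronger operator identity $s'\circ\big(\epsilon(D)\circ(\cdot)\big)=-\,\epsilon(D)\circ\big(s'(\cdot)\big)$ on $\F Cl(S^1,V)$, from which $J_1\circ J_3=-J_3\circ J_1$ follows at once since $i$ is central.
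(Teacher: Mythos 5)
Your proof is correct and follows essentially the same route as the paper's: both reduce the anticommutation to the two facts that $\epsilon(D)\circ(\cdot)$ acts as $p_+-p_-$ (pointwise multiplication of symbols by $\xi/|\xi|$) and that $s'$ exchanges $\F Cl_+(S^1,V)$ and $\F Cl_-(S^1,V)$, then compare $J_1\circ J_3$ and $J_3\circ J_1$ componentwise. The only difference is in the justification of one step: where the paper invokes ``$s'$ is a morphism of algebra'' together with $s'(\epsilon(D))=-\epsilon(D)$ to move $s'$ past the composition with $\epsilon(D)$, you derive the same identity directly from Lemma~\ref{l2} and the vanishing of the $\alpha\geq 1$ terms in the composition expansion --- which is, if anything, the more careful formulation, since only the zeroth-order (pointwise) part of the product with $\epsilon(D)$ is actually needed.
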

\begin{Proposition}\label{prop:J3J2}
	$J_3 \circ J_2 = J_2J_3 \neq - J_2 \circ J_3$	
\end{Proposition}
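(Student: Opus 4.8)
\emph{Proof plan.} The plan is to push everything down to the symbol level and reduce the two composites to a single ``parity'' operator. Recall that $J_2(A) = i\, s(A)$ and $J_3(A) = i\, s'(A)$, and that $s$ and $s'$ act only on formal symbols, so in particular they commute with multiplication by the scalar $i$. First I would record the two elementary computations
$$ J_3\circ J_2 (A) = i\, s'\!\bigl(i\, s(A)\bigr) = -\, s'\!\bigl(s(A)\bigr), \qquad J_2\circ J_3 (A) = i\, s\!\bigl(i\, s'(A)\bigr) = -\, s\!\bigl(s'(A)\bigr), $$
so that the first half of the statement, $J_3\circ J_2 = J_2\circ J_3$, is equivalent to the commutation $s\circ s' = s'\circ s$.

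To prove that commutation I would introduce the operator $m$ on $\F Cl(S^1,V)$ given by $m\colon \sum_n \sigma_n(x,\xi) \mapsto \sum_n (-1)^n\sigma_n(x,\xi)$, which is $\C$-linear, depends only on the $\Z$-grading, and satisfies $m^2 = Id$. From the definitions of $s$ and $s'$ one reads off $s = m\circ s' = s'\circ m$, the point being that $m$ commutes with the substitution $\xi\mapsto -\xi$ since that substitution preserves the order of each partial symbol. Using $s'^2 = Id$ (already noted in the excerpt) this gives $s\circ s' = m\circ (s')^2 = m$ and $s'\circ s = m\circ (s')^2 = m$; hence $s\circ s' = s'\circ s = m$. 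Combined with the display above, this simultaneously establishes $J_3\circ J_2 = J_2\circ J_3$ and identifies the common value as $-m$.

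Finally, to get $J_3\circ J_2 \neq -\,J_2\circ J_3$, it suffices to observe that this would mean $-m = m$, i.e. $2m = 0$, hence $m = 0$; but $m$ is an involution, hence invertible, so $m \neq 0$ (for instance $m(Id) = Id$). That contradiction closes the argument. The whole proof is bookkeeping, and the only genuinely non-formal point — the one I expect to be the delicate step to state cleanly — is the commutation $s\circ s' = s'\circ s$, which is not apparent from the raw formulas and really does rely on the factorization $s = m\circ s'$ together with the order-preservation of $\xi\mapsto -\xi$; the rest is just keeping the two sign sources, the scalar $i$ and the parity factor $(-1)^n$, from getting tangled.
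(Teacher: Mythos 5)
Your proof is correct and follows essentially the same route as the paper's, which simply asserts $ss'=s's$ ``by straightforward computations'' and concludes $J_2J_3=J_3J_2$. Your factorization $s=m\circ s'=s'\circ m$ through the parity operator $m:\sum_n\sigma_n\mapsto\sum_n(-1)^n\sigma_n$ is a clean way to carry out that computation, and you additionally verify the second claim $J_3\circ J_2\neq -J_2\circ J_3$, which the paper's proof leaves implicit.
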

\begin{Theorem} \label{th:J3int}
	The operator $J_3$ defines a non integrable  almost complex structure on $\mathcal{F}Cl(S^1,V).$
		The almost quaternionic structure $(J_1,J_3)$ is non integrable.
	\end{Theorem}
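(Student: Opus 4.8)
The plan is to prove Theorem~\ref{th:J3int} in two parts, mirroring the structure already established for $J_2$ in Theorem~\ref{th:J2int}. First I would verify that $J_3 = i s'$ is an almost complex structure: since $s'^2 = Id$ (stated in the remark following the ``odd-even'' splitting) and $i^2 = -1$ acting as a scalar that commutes with $s'$, we get $J_3^2 = i^2 s'^2 = -Id$. The map is clearly $\R$-linear and continuous for the locally convex topology on $\F Cl(S^1,V)$, so it is genuinely an almost complex structure in the relevant sense. The content of the theorem is the \emph{non-integrability}, and the natural tool is the Nijenhuis tensor
$$ N_{J_3}(X,Y) = [J_3 X, J_3 Y] - J_3[J_3 X, Y] - J_3[X, J_3 Y] - [X,Y], $$
which must be shown not to vanish identically. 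Integrability (in the formal Newlander--Nirenberg sense used here, or equivalently the closedness of the $+i$-eigenspace of $J_3$ under the bracket) is equivalent to $N_{J_3} \equiv 0$.

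Next I would exploit the decomposition $\F Cl(S^1,V) = \F Cl_{ee}(S^1,V) \oplus \F Cl_{eo}(S^1,V) = Ker(Id-s)\oplus Ker(Id+s)$ together with the transformation rules of $s'$ on $\F Cl_\pm(S^1,V)$. The key difference between $s$ and $s'$ is the sign $(-1)^n$ on partial symbols: $s'$ sends $\sum_n\sigma_n(x,\xi)$ to $\sum_n\sigma_n(s(x,\xi))$ without the sign. One checks, as in the display $ s[a,b] = \left([a_{ee},b_{ee}] + [a_{eo}, b_{eo}]\right) -\left([a_{ee},b_{eo}] + [a_{eo}, b_{ee}] \right)$, that $s'$ is \emph{not} an algebra or Lie-algebra automorphism (this is implicit in Lemma~\ref{lemma:s'-pair} and the surrounding discussion), so the bracket $[.,.]_{s'}$ fails the modified Yang--Baxter equation — which is precisely part (2) of Theorem~\ref{th:brackets}. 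The plan is to package this obstruction into a computation of $N_{J_3}$. Concretely, I would write $N_{J_3}$ in terms of $s'$, expand $[is'X, is'Y] = -[s'X,s'Y]$, and collect terms; the result should be proportional to the failure of $s'$ to commute with the bracket, i.e. to an expression like $s'[X,Y] - [s'X,s'Y]$ plus cross terms, which is generically nonzero.

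The cleanest route to a \emph{concrete} non-vanishing is to exhibit an explicit pair $X, Y \in \F Cl(S^1,V)$ with $N_{J_3}(X,Y) \neq 0$, as is presumably done for $J_2$. Good candidates are low-order operators where the symbol calculus is short: take $X$ and $Y$ among $\{|D|, \epsilon(D), x\mapsto f(x)$ multiplication operators, $\frac{d}{dx}\}$, or their $\pm$-components $\left(\frac{d}{dx}\right)_\pm$, since $s'$ swaps $\F Cl_+$ and $\F Cl_-$. For instance, testing $X = a\,\frac{d}{dx}$ for a non-constant $a \in C^\infty(S^1)$ against $Y = |D|^{-1}$ or against a multiplication operator should produce a computable nonzero first-order symbol term in $N_{J_3}(X,Y)$, because the derivative-in-$x$ pieces of the composition rule \eqref{asspsi} do not respect the $\xi\mapsto-\xi$ flip without the homogeneity sign. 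One then reads off a partial symbol that is nonzero at, say, $\xi=1$.

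The main obstacle will be the bookkeeping in the symbol calculus: $N_{J_3}$ involves four brackets, each bracket is a difference of two compositions, and each composition is an asymptotic series $\sum_\alpha \frac{(-i)^\alpha}{\alpha!}D^\alpha_\xi\sigma(A)D^\alpha_x\sigma(B)$, so naive expansion is large. I would manage this by choosing test operators for which most terms truncate — e.g. one factor a multiplication operator (so $D_\xi$ kills it beyond order zero) — and by using Proposition~\ref{pag} and Lemma~\ref{l1} to handle $\epsilon(D)$ and $p_{E_\pm}$ symbolically. A secondary point to be careful about is the precise meaning of ``integrable'' here for a formal/locally-convex structure; I would invoke the same formal Newlander--Nirenberg criterion (vanishing of $N$) already used implicitly for $J_1$ in Theorem~\ref{th:J1} and for $J_2$ in Theorem~\ref{th:J2int}, so that the logical standard is uniform across the three cases. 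Finally, the last sentence of the theorem, non-integrability of the almost quaternionic pair $(J_1,J_3)$, follows immediately: an almost quaternionic structure built from an anticommuting pair is integrable only if both constituents are, and $J_3$ is not — here Proposition~\ref{prop:J1J3} supplies the anticommutation $J_1 J_3 = -J_3 J_1$, closing the argument.
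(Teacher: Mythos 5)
Your overall strategy is the one the paper uses: invoke the formal Newlander--Nirenberg criterion (the paper writes it in the Malgrange form $[u,J_3v]+[J_3u,v]=J_3[u,v]-J_3[J_3u,J_3v]$, which is equivalent to the vanishing of your Nijenhuis tensor $N_{J_3}$), exploit the fact that $s'$ swaps $\F Cl_+(S^1,V)$ and $\F Cl_-(S^1,V)$, and exhibit an explicit pair on which the identity fails; the quaternionic statement is then dispatched exactly as you say, via Proposition \ref{prop:J1J3}. Your side observation that $N_{is'}\equiv 0$ is literally equivalent to $s'$ satisfying the modified Yang--Baxter equation, so that Theorem \ref{th:brackets}(2) already encodes the non-integrability, is correct and is a nice way to see why the two statements are proved by the same computation.

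The one substantive problem is with the test operators you commit to. Writing $u=(u_+,u_-)$ under the identification $\F Cl(S^1,V)\cong \Psi DO(S^1,V)^2$, the two sides of the integrability identity are $i\left([u_+,v_-]+[u_-,v_+]\right)$ (in each slot) versus $i\left([u_+,v_+]+[u_-,v_-]\right)$, so their difference is governed by $[u_+-u_-,\,v_+-v_-]$. This vanishes identically as soon as \emph{one} of the two arguments is even-even, i.e.\ satisfies $u_+=u_-$. Your leading candidates --- $a\frac{d}{dx}$ with $a$ non-constant, tested against $|D|^{-1}$ or against a multiplication operator --- therefore all give $N_{J_3}=0$: $a\frac{d}{dx}$ and multiplication operators are even-even, and pairing an even-even with an even-odd operator also kills both sides. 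To get a genuine counterexample you must take both operators with nonzero even-odd part; the paper takes $u=X_+$, $v=Y_+$ for vector fields $X=f\partial$, $Y=g\partial$ with $[X,Y]\neq 0$, for which the cross terms vanish while the diagonal term is $[X,Y]_+\neq 0$. You do list the $\pm$-components among your candidates, so the plan is salvageable, but as written the computation you would actually carry out first would come back zero and not prove the theorem.
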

Let us now define $J_4 = J_1J_3.$
\begin{Proposition}\label{Prop:J4}
	We have:
	\begin{itemize}
		\item $J_4^2 = - Id.$
		\item $J_2J_4 = - J_4J_2.$
		\item $J_1J_4 = - J_4J_1.$
	\end{itemize}
\end{Proposition}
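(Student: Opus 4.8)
The plan is to verify the three identities by reducing everything to the known relations among $J_1$, $J_2$, $J_3$ established in Propositions \ref{prop:J1J2}, \ref{prop:J1J3} and \ref{prop:J3J2}, together with the facts that $J_1^2 = J_2^2 = J_3^2 = -\mathrm{Id}$ and, crucially, that $J_1 = i\epsilon(D)\circ(.)$ is \emph{central} composition: $J_1$ commutes with $J_2$ and $J_3$ up to sign, and since $\epsilon(D)$ lies in the center of $\F Cl(S^1,V)$ and squares to $\mathrm{Id}$, left-composition by $i\epsilon(D)$ is a well-behaved involution-like operator. I would first record $J_1 J_3 = - J_3 J_1$ (Proposition \ref{prop:J1J3}) and $J_1 J_2 = - J_2 J_1$ (Proposition \ref{prop:J1J2}) as the inputs, and note $J_2 J_3 = J_3 J_2$ (Proposition \ref{prop:J3J2}).

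For the first bullet, $J_4^2 = (J_1 J_3)(J_1 J_3) = J_1 (J_3 J_1) J_3 = J_1(-J_1 J_3)J_3 = -J_1^2 J_3^2 = -(-\mathrm{Id})(-\mathrm{Id}) = -\mathrm{Id}$, using anticommutativity of $J_1$ and $J_3$ once to pull $J_1$ through $J_3$. For the third bullet, $J_1 J_4 = J_1 J_1 J_3 = J_1^2 J_3 = -J_3$, while $J_4 J_1 = J_1 J_3 J_1 = J_1(-J_1 J_3) = -J_1^2 J_3 = J_3$, so $J_1 J_4 = -J_4 J_1$. For the second bullet, $J_2 J_4 = J_2 J_1 J_3 = (J_2 J_1) J_3 = (-J_1 J_2) J_3 = -J_1 (J_2 J_3) = -J_1 (J_3 J_2) = -(J_1 J_3) J_2 = -J_4 J_2$, where I have used $J_2 J_1 = -J_1 J_2$, then $J_2 J_3 = J_3 J_2$, then regrouped.

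All three computations are short algebraic manipulations in $\mathrm{End}(\F Cl(S^1,V))$, so there is no genuine obstacle; the only point requiring a sentence of care is that the operators $J_i$ are indeed associative under composition and that the relations of the three preceding propositions are being applied as equalities of endomorphisms of $\F Cl(S^1,V)$ (not merely pointwise on generators). Since each $J_i$ is defined globally as composition by a fixed formal operator or by the symmetry $s$, $s'$, this associativity is automatic, and I would simply remark that the computation takes place in the associative algebra $\mathrm{End}(\F Cl(S^1,V))$. Thus the proposition follows directly, and in particular $(J_2, J_4)$ is another anticommuting pair, completing the quaternionic picture together with $J_1$.
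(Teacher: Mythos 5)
Your proposal is correct and follows essentially the same route as the paper: all three identities are derived by short associative manipulations using $J_1^2=J_3^2=-\mathrm{Id}$, the anticommutation relations $J_1J_2=-J_2J_1$, $J_1J_3=-J_3J_1$, and the commutation $J_2J_3=J_3J_2$. Your extra remark that the computation lives in the associative algebra $\mathrm{End}(\F Cl(S^1,V))$ is a harmless clarification the paper leaves implicit.
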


	\section{KP hierarchy with integer and complex order Lax operators in  $\F Cl(S^1,\C)$ and $\Psi DO(S^1,\mbH).$}\label{s:KP}

	\subsection{Multiple classical KP hierarchies on $\mathcal{F}Cl(S^1,\K)$} \label{ss:multKP}
	
	The (classical) KP hierarchy on $\Psi DO(S^1,\K)$ can then push-forward on $\mathcal{F}Cl$-classes of operators by various ways: 
	\begin{itemize}
		\item via identifications of $\Psi DO(S^1,\K)$ with subalgebras or ideals of $\mathcal{F}Cl(S^1,\K),$ for $\K = \R, \C$ or $\mathbb{H}.$
		\item by changing the standard multiplication of $\mathcal{F}Cl(S^1,\K)$ for $\K = \R, \C$ or $\mathbb{H},$ by ``twisting it'' by the operator $\epsilon(D)$ or $i\epsilon(D).$
		\item via the almost quaternionic structures that we identified on $\mathcal{F}Cl(S^1,\C)$ in order to identify it with $\Psi DO(S^1,\mathbb{H})$	 
	\end{itemize} 
	Let us describe in a detailed way these different approaches. 
	\paragraph{\bf Push-Forward via $\Phi_{\lambda,\mu}$ maps}
	Let $\K =  \C$ or $\mathbb{H}.$ For each choice of $(\lambda,\mu) \in \C^2\backslash \{0;0\}$ identifies $\frac{d}{dx} \in \Psi DO(S^1,\K)$ with an operator in $\mathcal{F}Cl(S^1,\K)$ with the same algebraic properties. 
	\vskip 6pt
	\noindent
	\textbf{Notation:} $\partial_{\lambda,\mu} = \Phi_{\lambda,\mu}\left(\frac{d}{dx}\right)$ and $\F Cl_{\lambda,\mu}(S^1,\K) = Im \Phi_{\lambda,\mu}.$
	\vskip 6pt
	Then we can develop the KP hierarchy on $\F Cl_{\lambda,\mu}(S^1,\K).$ We first remark that, since each map $\Phi_{\lambda,\mu}$ is a degree $0$ morphism of filtered algebras, each push-forward of the unique solotion $L$ of the KP hierachy (\ref{eq:KP}) generates a solution of the corresponding equation in $\F Cl(S^1,\K)$ which reads the same way: 
	$$\frac{d L}{d t_{k}} = \left[ (L^{k})_{D} , L \right]\; , \quad
	\quad k \geq 1 \; ,$$
	where solutions operators now belong to $\F Cl^1(S^1,\K)[[T]]$ and where each initial value $\Phi_{\lambda,\mu}(L_0) \in \partial_{\lambda,\mu} + \F Cl_{\lambda,\mu}^{-1}(S^1,\K)$ with obvious extension of notations.
	 Therefore, for any initial value $L_0 \in \Psi DO(S^1,\K),$ we get a family of operators $$L_{\lambda,\mu} \in\F Cl_{\lambda,\mu}^1(S^1,\K)[[T]] \subset \F Cl^1(S^1,\K)[[T]]$$ parametrized by the complex parameters $\lambda$ and $\mu$ chosen as before, which satisfies the KP hierarchy in $\F Cl(S^1,\K)$ and with initial values $\Phi_{\lambda,\mu}(L_0).$ 
	 \paragraph{\bf Existence, uniqueness and well-posedness of the KP system in $\F Cl(S^1,\K).$}
	 We adapt here the $r-$matrix approach for the construction of the solutions, along the lines of \cite{ERMR} with the following specific choices:
	 \begin{itemize}
	 	\item The algebra of smooth coefficients for formal pseudo-differential operators is $  \mathcal{R} = C^\infty(S^1,M_n(\K)) \oplus \epsilon(D)C^\infty(S^1,M_n(\K))$ with multiplication rules inherited from $Cl(S^1,\K^n).$
	 	\item The differential operator is $\partial = \frac{d}{dx}.$
	 \end{itemize}
 \begin{Proposition}\label{prop:R-Kn}
 	$\Psi DO(\mathcal{R}) = \F Cl(S^1,\K^n)$ and there is an identification of the Manin triples $(\Psi DO(\mathcal{R}), DO(\mathcal{R}), IO(\mathcal{R}))$ with $(\F Cl(S^1,\K^n),\F Cl_D(S^1,\K^n),\F Cl_S(S^1,\K^n)).$
 \end{Proposition}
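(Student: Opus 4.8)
The plan is to reduce the statement to two facts already available: the theorem of Section~\ref{s:inj} asserting that $\Phi_{1,0}\times\Phi_{0,1}\colon \Psi DO(S^1,\K^n)^2\to\F Cl(S^1,\K^n)$ is an algebra isomorphism, and the fact, established in Section~\ref{ss:manin}, that $(\F Cl(S^1,\K^n),\F Cl_S(S^1,\K^n),\F Cl_D(S^1,\K^n),res)$ is a Manin triple. The only genuinely new input is an analysis of the coefficient ring $\mathcal R$, so that is where I would start.

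First I would show that $\mathcal R$ splits as a differential algebra. Since $\epsilon(D)$ is central in $\F Cl(S^1,\K^n)$ and $\epsilon(D)^2=Id$, expanding the product yields
$$(f+\epsilon(D)g)\circ(f'+\epsilon(D)g')=(ff'+gg')+\epsilon(D)(fg'+gf')$$
for $f,g,f',g'\in C^\infty(S^1,M_n(\K))$, so $\mathcal R$ is a well-defined unital subalgebra of $\F Cl^0(S^1,\K^n)$; and since $\epsilon(D)$ commutes with $\tfrac{d}{dx}$ one gets $[\tfrac{d}{dx},f+\epsilon(D)g]=f'+\epsilon(D)g'$, so the derivation $\partial=[\tfrac{d}{dx},\cdot]$ used to build $\Psi DO(\mathcal R)$ preserves $\mathcal R$. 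Using $\sigma(\epsilon(D))=\xi/|\xi|$ (Lemma~\ref{l1}), evaluating the $0$-homogeneous symbol of an element of $\mathcal R$ at $\xi=1$ and $\xi=-1$ gives an isomorphism of differential algebras $\mathcal R\xrightarrow{\ \sim\ }\bigl(C^\infty(S^1,M_n(\K)),\tfrac{d}{dx}\bigr)\times\bigl(C^\infty(S^1,M_n(\K)),\tfrac{d}{dx}\bigr)$, $f+\epsilon(D)g\mapsto(f+g,\,f-g)$, the derivation going over to the componentwise $\tfrac{d}{dx}$.

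Next, since forming the ring of formal pseudo-differential operators over a differential ring commutes with finite direct products, this yields $\Psi DO(\mathcal R)\cong\Psi DO(S^1,\K^n)\times\Psi DO(S^1,\K^n)$, under which $DO(\mathcal R)$ goes to $DO(S^1,\K^n)^2$ and $IO(\mathcal R)$ to $IO(S^1,\K^n)^2$, the order filtration being preserved componentwise. Composing with $\Phi_{1,0}\times\Phi_{0,1}$ gives an algebra isomorphism $\Theta\colon \Psi DO(\mathcal R)\to\F Cl(S^1,\K^n)$ sending the abstract symbol $\partial$ to $(\tfrac{d}{dx})_++(\tfrac{d}{dx})_-=\tfrac{d}{dx}$, as it must. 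Because $\F Cl_{\pm,D}$ and $\F Cl_{\pm,S}$ are by construction the images of $DO(S^1,\K^n)$ and $IO(S^1,\K^n)$ under the identifications of $\Psi DO(S^1,\K^n)$ with $\F Cl_\pm(S^1,\K^n)$ of Section~\ref{s:poisson} — which agree with $\Phi_{1,0}$, resp. $\Phi_{0,1}$, up to the harmless algebra automorphism $\partial\mapsto i\partial$ of $\Psi DO(S^1,\K^n)$, which preserves both the $DO$ and the $IO$ part — the map $\Theta$ carries $DO(\mathcal R)$ onto $\F Cl_D(S^1,\K^n)$ and $IO(\mathcal R)$ onto $\F Cl_S(S^1,\K^n)$. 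Finally, a short computation of residues shows that $res\circ\Theta$ is, up to a nonzero scalar, the Adler trace of $\Psi DO(\mathcal R)$ associated with the one-form $f+\epsilon(D)g\mapsto\oint_{S^1}tr_n(g)$ on $\mathcal R$; since $res$ is nondegenerate, invariant, and makes $\F Cl_D,\F Cl_S$ isotropic, $\Theta$ is an isomorphism of Manin triples.

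The main obstacle is not a hard estimate but keeping the two "splitting" structures straight. One must verify that the abstract multiplication and derivation defining $\Psi DO(\mathcal R)$ are faithfully realised by composition in $\F Cl(S^1,\K^n)$ — this rests entirely on $\epsilon(D)$ being central with $\epsilon(D)^2=Id$ and on Lemma~\ref{l2}, which makes the symbol-composition formula collapse to $\sigma\bigl(a\circ(\tfrac{d}{dx})^k\bigr)=\sigma(a)(i\xi)^k$ for $a\in\mathcal R$ — and that the $\xi>0/\xi<0$ decomposition underlying $\Phi_{1,0}\times\Phi_{0,1}$ is compatible with the differential/integral decomposition of Section~\ref{s:poisson}, the bookkeeping including the normalisation factors $i^k$. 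The case $\K=\mbH$ requires no new argument, treating $\mbH$ as a $4$-dimensional real algebra as in \cite{Ku2000,McI2011}, and the case $\K=\R$ follows by passing to the real form of $\F Cl(S^1,\C)$.
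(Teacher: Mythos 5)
Your argument is correct, but it reaches the identification by a different route than the paper. The paper's proof works entirely inside the odd--even splitting of section \ref{s:even/odd}: it writes $A=A_{ee}+A_{eo}$ and observes that the $k$-th partial symbol is $\left(a_{k,ee}+i\epsilon(D)a_{k,eo}\right)\partial^k$ with $a_{k,ee}+i\epsilon(D)a_{k,eo}\in\mathcal R$, so that $\F Cl(S^1,\K^n)=\Psi DO(\mathcal R)$ is read off directly from the symbols, and the $D/S$ identification follows because orders of partial symbols are preserved; no detour through $\F Cl_\pm$ is made. You instead diagonalize $\epsilon(D)$, split $\mathcal R$ as a product of two copies of $C^\infty(S^1,M_n(\K))$ via evaluation at $\xi=\pm1$, and factor the identification through $\Psi DO(S^1,\K^n)^2$ and $\Phi_{1,0}\times\Phi_{0,1}$. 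Your route is longer but buys more: it actually verifies that the multiplication of $\mathcal R$ closes and is faithfully realised by symbol composition (via Lemma \ref{l2}), and it checks that $res$ corresponds to an Adler-type trace on $\Psi DO(\mathcal R)$, points the paper's one-line proof leaves implicit; the paper's version buys brevity and avoids the bookkeeping between the $\pm$ and $D/S$ splittings that you rightly flag as the delicate point. One small inaccuracy: the map $\sum a_k\partial^k\mapsto\sum a_k i^k\partial^k$ relating $\Phi_{1,0}$ to the evaluation map of section \ref{s:poisson} is \emph{not} an algebra automorphism of $\Psi DO(S^1,\K^n)$ (it does not intertwine $[\partial,f]=f'$; compare the paper's remark that $\Phi_{\lambda,\mu}$ is a morphism only for $(\lambda,\mu)\in\{(1,0),(0,1),(1,1)\}$). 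This does not damage your conclusion, since all you need there is that the discrepancy preserves the order filtration and hence carries $DO$ to $DO$ and $IO$ to $IO$, but the phrase should be corrected to ``order-preserving linear bijection.''
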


	 Hence, applying the main result of \cite{RS1981} completed, for well-posedness, by \cite[Theorem 4.1]{ERMR} or by \cite[Theorem 4.1]{MR2016} when $\mathcal R = C^\infty(S^1,\K) = M_1(C^\infty(S^1,\K))$ is a commutative algebra, we can state the following: 
	 	
	 	\begin{Proposition} \label{prop:KP-FCl-I}
	 	The Kadomtsev-Petviashvili (KP) hierarchy (\ref{eq:KP}) on $\Psi DO (\mathcal R)$ (resp. $\F Cl(S^1,\K^n)$) 
	 with initial condition $L(0)=L_0  \in \partial + \Psi DO^{-1} (\mathcal R)$ (resp. $\in \partial + \F Cl^{-1}(S^1,\K^n)$) satisfies Theorem \ref{KPcentral}.
	\end{Proposition}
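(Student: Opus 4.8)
The plan is to recognize Proposition \ref{prop:KP-FCl-I} as a transport, through Proposition \ref{prop:R-Kn}, of the Mulase-type factorization theorem for the KP hierarchy over an abstract coefficient algebra, followed by an invocation of the well-posedness statements of \cite{ERMR} and \cite{MR2016}. The first step is to verify that $\mathcal R = C^\infty(S^1,M_n(\K))\oplus \epsilon(D)C^\infty(S^1,M_n(\K))$ is an admissible coefficient algebra: with the product inherited from $Cl(S^1,\K^n)$ it is a (generally non-commutative) Fréchet algebra with jointly continuous multiplication, and $\partial = \frac{d}{dx}$ is a derivation of $\mathcal R$ since $\epsilon(D)$ is central in $\F Cl(S^1,\K^n)$ and commutes with $\frac{d}{dx}$. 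By Proposition \ref{prop:R-Kn} we then identify $\Psi DO(\mathcal R)$ with $\F Cl(S^1,\K^n)$ as filtered algebras, $DO(\mathcal R)$ with $\F Cl_D(S^1,\K^n)$ and $IO(\mathcal R)$ with $\F Cl_S(S^1,\K^n)$; in particular the map $L\mapsto (L^k)_D$ of (\ref{eq:KP}) becomes the projection onto $\F Cl_D(S^1,\K^n)$ along $\F Cl_S(S^1,\K^n)$, so that the KP equation on $\Psi DO(\mathcal R)$ and on $\F Cl(S^1,\K^n)$ are literally the same equation under this identification.

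Second, I would carry over the groups $\bar{G}$, $\overline{\Psi}$, $\overline{\D}$ of section \ref{ss:prelKP} verbatim to $\mathcal R$ and check that $\overline{\Psi} = \bar{G}\bowtie\overline{\D}$ remains a smooth matched pair; this reduces to continuity of addition, multiplication and inversion on the $T$-graded pieces of $\Psi DO(\mathcal R)[[T]]$, which is immediate from the Fréchet-algebra structure of $\mathcal R$. The unique solution of the decomposition problem $\exp\!\left(\sum_{k\in\N}\tau_k L_0^k\right) = S^{-1}Y$ with $(S,Y)\in\bar{G}\times\overline{\D}$ then produces, exactly as in Theorem \ref{KPcentral}, the unique solution $L = SL_0S^{-1} = YL_0Y^{-1}$ of (\ref{eq:KP}) with $L|_{t=0}=L_0$; smooth dependence on $L_0$ is provided by \cite[Theorem 4.1]{ERMR} when $\mathcal R$ is non-commutative and by \cite[Theorem 4.1]{MR2016} (equivalently Theorem \ref{KPcentral}) when $\mathcal R = C^\infty(S^1,\K)$ is commutative, while the $r$-matrix construction of \cite{RS1981} is available because of the Manin triple identification of Proposition \ref{prop:R-Kn}. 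Assembling these gives all three items of Theorem \ref{KPcentral} for $\Psi DO(\mathcal R)$, hence for $\F Cl(S^1,\K^n)$.

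The only step requiring genuine care is checking that $\mathcal R$, which is non-commutative for $n\geq 2$ and for $\K=\mbH$, meets the precise hypotheses under which \cite{ERMR} establishes smooth well-posedness; I expect this to be the main, though still soft, obstacle. I would dispose of it by noting that the central idempotents $\frac{1}{2}(Id\pm\epsilon(D))$ split $\mathcal R$ as a direct product $C^\infty(S^1,M_n(\K))\times C^\infty(S^1,M_n(\K))$ of two copies of the coefficient algebra already treated in \cite{ERMR,MR2016}. Correspondingly $\Psi DO(\mathcal R)\cong \Psi DO(S^1,\K^n)\times\Psi DO(S^1,\K^n)\cong \F Cl_+(S^1,\K^n)\times\F Cl_-(S^1,\K^n)$ via $\Phi_{1,0}\times\Phi_{0,1}$, and the KP flow decouples into two independent copies of the classical KP hierarchy on $\Psi DO(S^1,\K^n)$; existence, uniqueness and smoothness then follow componentwise from the already established results, and the componentwise and the direct $\mathcal R$-presentations agree because $\Phi_{1,0}\times\Phi_{0,1}$ is an isomorphism of algebras.
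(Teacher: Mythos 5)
Your proposal is correct and follows essentially the same route as the paper: the paper justifies the proposition by the identification $\Psi DO(\mathcal R)=\F Cl(S^1,\K^n)$ of Proposition \ref{prop:R-Kn} together with an appeal to \cite{RS1981} and, for well-posedness, to \cite[Theorem 4.1]{ERMR} (or \cite[Theorem 4.1]{MR2016} in the commutative case). Your final paragraph, reducing the non-commutative coefficient algebra $\mathcal R$ to two copies of the already-treated case via the central idempotents $\frac{1}{2}(Id\pm\epsilon(D))$ and the isomorphism $\Phi_{1,0}\times\Phi_{0,1}$, is precisely the alternative justification the paper records in the remark following the proposition, so you have in effect supplied both of the paper's arguments.
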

\begin{rem}
	We have used here, intrinsically, the integrable almost complex structure $J_1.$ Indeed, $\mathcal R = C^\infty(S^1,M_n(\K)) + J_1 C^\infty(S^1,M_n(\K))$ is an algebra.
\end{rem}
\begin{rem}
	There exists another way to justify Proposition \ref{prop:KP-FCl-I}. One can use alternatively the splitting $$\F Cl(S^1,\K^n) = \F Cl_+(S^1,\K^n) \oplus \F Cl_-(S^1,\K^n).$$ Then Equation (\ref{eq:KP}) on $\F Cl(S^1,\K^n)$ splits into two independent equations, similar to Equation (\ref{eq:KP}) on $\F Cl_\pm(S^1,\K^n).$ Through the identification maps $\Phi_{1,0}$ and $\Phi_{0,1}$ of $\F Cl_\pm(S^1,\K^n)$ with $\Psi DO(S^1,\K^n),$ we get existence, uniqueness and well-posedness for  Equation (\ref{eq:KP}) on $\F Cl(S^1,\K^n)$ with initaial value $L_0 \in \partial + \F Cl^{-1}(S^1,\K^n).$ 
\end{rem}

From this last remark, we can generalize the  identification procedure, changing the maps $\Phi_{ee},$ $\Phi_{1,0}$ and $\Phi_{0,1}$ by the family of maps $\Phi_{\lambda,\mu}.$
%\begin{Definition}
%	Let $$\mathcal{R}_{\lambda,\mu}= \Phi_{\lambda,\mu}\left(C^\infty(S^1,M_n(\K))\right) + i\Phi_{\lambda,-\mu}\left(C^\infty(S^1,M_n(\K))\right) \subset \mathcal R.$$
%\end{Definition} 

%\begin{Proposition} \label{prop:Rlambdamu}
%	We have, for $(\lambda, \mu) \neq (0;0),$
%	$$ \mathcal{R}_{\lambda,\mu}= \Phi_{\lambda,\mu}\left(C^\infty(S^1,M_n(\K))\right) + i\epsilon(D)\Phi_{\lambda,\mu}\left(C^\infty(S^1,M_n(\K))\right) \subset \mathcal R.$$
%	We distinguish various cases depending on the values of $(\lambda, \mu) \neq (0;0).$
%	\begin{itemize}
%		\item If $\lambda = 0,$ $$ \mathcal{R}_{\lambda,\mu}= \Phi_{0,1}\left(C^\infty(S^1,M_n(\K))\right) \subset \F Cl_{-}(S^1,\K^n).$$
%			\item If $\mu = 0,$ $$ \mathcal{R}_{\lambda,\mu}= \Phi_{1,0}\left(C^\infty(S^1,M_n(\K))\right) \subset \F Cl_{+}(S^1,\K^n).$$
%			\item If $(\lambda,\mu) \in \C \backslash \left(\R \cup i\R\right),$
%			$$\mathcal{R}_{\lambda,\mu}= \mathcal{R}.$$
%	\end{itemize}
%\end{Proposition}
%\begin{proof}[Proof of Proposition \ref{prop:Rlambdamu}] A function $u$ in $\mathcal R$ reads as $$u = u_+ \left(\frac{d}{dx}_+\right)^0 + u_-\left(\frac{d}{dx}_-\right)^0$$ with $(u_+,u_-)\in C^\infty(S^1,M_n(\K))^2.$
%If $\lambda = 0,$ we get $u_+=0$ and if $\mu = 0,$ we get $u_- = 0.$ In the other cases, solving the system:
%$$\left\{\begin{array}{ccccc} \lambda v & + \right. \end{proof}
	 
	 \begin{Theorem} \label{th:KPlambdamu}
	 	Let $(\lambda,\mu) \in (\C^*)^2 .$ Then the KP equation (\ref{eq:KP}) in $\F Cl(S^1,\K^n)$ with initial value $L_0 \in \partial_{\lambda,\mu} + \F Cl^{-1}(S^1,\K^n)$ has an unique solution $L$ in $\partial_{\lambda,\mu} + \F Cl^{-1}(S^1,\K^n)[[T]]$ and the problem is well-posed: the solution $L$ depends smoothly on $L_0.$ 
	 \end{Theorem}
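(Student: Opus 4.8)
The plan is to reduce the statement to the classical well-posedness result (Theorem \ref{KPcentral}) through two successive reductions: first the splitting of $\F Cl(S^1,\K^n)$ into the two ideals $\F Cl_\pm(S^1,\K^n)$, which decouples (\ref{eq:KP}); then the algebra isomorphisms $\Phi_{1,0}\colon \Psi DO(S^1,\K^n)\to\F Cl_+(S^1,\K^n)$ and $\Phi_{0,1}\colon\Psi DO(S^1,\K^n)\to\F Cl_-(S^1,\K^n)$, which transport each half to $\Psi DO(S^1,\K^n)$ with an initial value whose leading term is a constant multiple of $\partial$; and finally a rescaling of the formal time variables $t_k$ that absorbs that constant and restores monicity. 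This is the precise mechanism behind the informal remark preceding the statement, with $\Phi_{1,0}$ and $\Phi_{0,1}$ in the role there loosely attributed to $\Phi_{\lambda,0}$ and $\Phi_{0,\mu}$.

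First I would decouple the equation. Writing $\partial_\pm=p_\pm(\partial)$ one has $\partial_{\lambda,\mu}=\lambda\partial_++\mu\partial_-$, hence $p_+(L_0)\in\lambda\partial_++\F Cl_+^{-1}(S^1,\K^n)$ and $p_-(L_0)\in\mu\partial_-+\F Cl_-^{-1}(S^1,\K^n)$. The spaces $\F Cl_\pm(S^1,\K^n)$ are ideals in direct sum, so the product of an element of one with an element of the other vanishes; and since $\F Cl(S^1,\K^n)=\F Cl_{+,D}\oplus\F Cl_{+,S}\oplus\F Cl_{-,D}\oplus\F Cl_{-,S}$, the (non-unital) algebra morphisms $p_\pm$ commute with the linear projection $(\,\cdot\,)_D$ onto $\F Cl_D(S^1,\K^n)$, so that $p_\pm\bigl((L^k)_D\bigr)=(p_\pm(L)^k)_D$. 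Applying $p_\pm$ to (\ref{eq:KP}) then shows that $L$ solves (\ref{eq:KP}) with datum $L_0$ if and only if $L_\pm:=p_\pm(L)$ solve the two independent equations of the same form on $\F Cl_\pm(S^1,\K^n)[[T]]$ with data $p_\pm(L_0)$. I expect this step to be purely formal bookkeeping.

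Next I would push the $\F Cl_+$-equation to $\Psi DO(S^1,\K^n)$ through $\Phi_{1,0}$, which is a filtration-preserving algebra isomorphism carrying $DO(S^1,\K^n)$ onto $\F Cl_{+,D}(S^1,\K^n)$ and sending $\partial$ to $\partial_+$; it therefore turns that equation into (\ref{eq:KP}) on $\Psi DO(S^1,\K^n)$ with initial value $N_0:=\Phi_{1,0}^{-1}(p_+(L_0))\in\lambda\partial+\Psi DO^{-1}(S^1,\K^n)$. This is the only point not literally covered by Theorem \ref{KPcentral}, whose hypothesis is that the leading term be $\partial$ itself. To repair it I would write $N_0=\lambda M_0$ with $M_0\in\partial+\Psi DO^{-1}(S^1,\K^n)$, invoke Theorem \ref{KPcentral} to get the unique solution $M$ with $M|_{t=0}=M_0$, and set $N(t_1,t_2,\dots):=\lambda\,M(\lambda t_1,\lambda^2 t_2,\dots)$; a short computation using $(N^k)_D=\lambda^k(M^k)_D$ and the chain rule shows that $N$ solves (\ref{eq:KP}). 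Uniqueness and smooth dependence on $L_0$ then transfer, because the substitution $t_k\mapsto\lambda^k t_k$ is an invertible continuous automorphism of $\Psi DO(S^1,\K^n)[[T]]$ preserving the $val_T$-filtration (it multiplies the $val_T$-homogeneous component of degree $d$ by $\lambda^d$), while $p_\pm$, $\Phi_{1,0}$, $\Phi_{1,0}^{-1}$, $\Phi_{0,1}$, $\Phi_{0,1}^{-1}$ and multiplication by the scalars $\lambda,\mu$ are smooth in the sense of \cite{MR2016,ERMR}, so that $L_0\mapsto N$ is a composite of smooth maps; the same argument applies to the $\F Cl_-$-half with $\Phi_{0,1}$ and $\mu$.

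Recombining $L:=L_++L_-$ then gives the unique solution in $\partial_{\lambda,\mu}+\F Cl^{-1}(S^1,\K^n)[[T]]$, depending smoothly on $L_0$. The genuinely substantive point — everything else being the bookkeeping above — is the treatment of the non-monic leading term by the time rescaling; and this is exactly where the hypothesis $(\lambda,\mu)\in(\C^*)^2$ enters and why it is sharp: if, say, $\lambda=0$, then $t_k\mapsto\lambda^k t_k$ fails to be invertible and the $\F Cl_+$-component of the Lax operator loses its invertible principal symbol, so Theorem \ref{KPcentral} ceases to apply to that half.
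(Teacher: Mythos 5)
Your proposal is correct and follows essentially the same route as the paper's own proof: split the equation along $\F Cl(S^1,\K^n)=\F Cl_+(S^1,\K^n)\oplus\F Cl_-(S^1,\K^n)$, transport each half to $\Psi DO(S^1,\K^n)$ via $\Phi_{1,0}$ and $\Phi_{0,1}$, absorb the non-monic leading coefficient by the rescaling $\tilde L_0=\lambda^{-1}L_0$, $L(t_1,t_2,\dots)=\lambda\tilde L(\lambda t_1,\lambda^2 t_2,\dots)$ (the scaling of \cite{Ma2013}), and recombine. Your write-up merely makes more explicit two points the paper leaves implicit, namely that $p_\pm$ commutes with $(\,\cdot\,)_D$ so the two halves genuinely decouple, and that the time substitution is a smooth invertible operation so well-posedness transfers.
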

	 \paragraph{\bf Twisted KP hierarchy}
	 Let us now change the standard multiplication on $\F Cl(S^1,\K)$ by 
	 $ (A,B) \mapsto \epsilon A B$
	 where $ \epsilon = \epsilon(D)$ or $a\epsilon(D)$ for any $a \in \C^*.$ Since $\epsilon(D)$ commutes with any element of $\F Cl(S^1,\K)$ for the  standard multiplication, this new multiplication defines a new algebra structure on $\F Cl(S^1,\K).$ When necessary we note by $\circ$ the standard multiplication, and by $\circ_\epsilon$ the twisted one.  Associated to this multiplication, we get the deformed Lie bracket $[.,.]_\epsilon.$
	 Then we get again and equation similar to (\ref{eq:KP})
	 \begin{equation} \label{eq:KPepsilon}
	 \frac{d L}{d t_{k}} = \left[ \epsilon^{k-1}(L^{k})_{D} , L \right]_\epsilon =\epsilon^{k}\left[ (L^{k})_{D} , L \right] \; , \quad
	 \quad k \geq 1 \; ,
	 \end{equation}
	 where powers in this equation are taken with respect to $\circ.$
	 
	 \begin{Theorem} \label{th:KPepsilon}
	 	The Let $L_0$ such that $L_0 \in \partial_{\lambda,\mu} + \F Cl^{-1}(S^1,\K^n),$ with $(\lambda,\mu)\in (\C^*)^2.$ Then the $\epsilon-$KP hierarchy (\ref{eq:KPepsilon}) with initial value $L_0$ has an unique solution. Moreover, the problem is well-posed.
	 \end{Theorem}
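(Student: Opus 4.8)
The plan is to reduce Theorem \ref{th:KPepsilon} to Theorem \ref{th:KPlambdamu} by exhibiting an algebra isomorphism that intertwines the twisted multiplication $\circ_\epsilon$ with the standard multiplication $\circ$, and then pushing the existence/uniqueness/well-posedness statement across. The key observation is that $\epsilon = \epsilon(D)$ (or $a\epsilon(D)$) is central, satisfies $\epsilon^2 = Id$ (up to the scalar $a^2$), and lies in $\F Cl_{eo}(S^1,\K^n)$; multiplication by a fixed central invertible element is the cleanest possible ``deformation'' of an associative product. Concretely, for $(\F Cl(S^1,\K^n),\circ_\epsilon)$ the map $\Theta : A \mapsto \epsilon \circ A$ (inverse $A \mapsto \epsilon^{-1}\circ A = a^{-2}\epsilon\circ A$) should satisfy $\Theta(A \circ_\epsilon B) = \epsilon \circ (\epsilon \circ A \circ B) = (\epsilon\circ A)\circ(\epsilon\circ B) = \Theta(A)\circ\Theta(B)$, using centrality and $\epsilon^2$ scalar; so $\Theta$ is an isomorphism of associative algebras from $(\F Cl,\circ_\epsilon)$ onto $(\F Cl,\circ)$ (rescaling $a$ if one wants $\epsilon^2 = Id$ exactly). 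It automatically intertwines the Lie brackets $[.,.]_\epsilon$ and $[.,.]$ by Theorem \ref{th:brackets}(1), and since $\epsilon(D)$ has order $0$, $\Theta$ is a degree-$0$ morphism of filtered algebras.

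Next I would check that $\Theta$ carries the twisted KP equation (\ref{eq:KPepsilon}) to the untwisted KP equation (\ref{eq:KP}), carefully tracking the ``differential part'' projection $(\cdot)_D$. This is the point that requires a little care: the splitting $\F Cl = \F Cl_D \oplus \F Cl_S$ underlying $(\cdot)_D$ is defined via the identification of $\F Cl_\pm$ with $\Psi DO(S^1,\K^n)$, and one must verify that, under $\Theta$, the $\circ_\epsilon$-powers and the $\circ_\epsilon$-differential projection correspond to $\circ$-powers and the $\circ$-differential projection of $\Theta(L)$. Because (\ref{eq:KPepsilon}) is written with powers and $(\cdot)_D$ taken with respect to $\circ$ and the bracket $[.,.]_\epsilon = \frac12([\epsilon(\cdot),\cdot] + [\cdot,\epsilon(\cdot)])$ expands to $\epsilon[\cdot,\cdot]$ by centrality, the right-hand side of (\ref{eq:KPepsilon}) is $\epsilon^k[(L^k)_D,L]$ with everything in terms of $\circ$; applying $\Theta$ and setting $M = \Theta(L)$, centrality lets me pull the scalars/$\epsilon$-factors through, and I expect to land exactly on $\frac{dM}{dt_k} = [(M^k)_D, M]$ after absorbing the powers of $\epsilon$ (which are $\pm$ scalars when $\epsilon^2$ is scalar). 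I would do this computation once explicitly for general $k$ rather than leave it as ``routine,'' since it is the conceptual heart of the reduction.

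Then the transfer is immediate: given $L_0 \in \partial_{\lambda,\mu} + \F Cl^{-1}(S^1,\K^n)$ with $(\lambda,\mu)\in(\C^*)^2$, the element $\Theta(L_0)$ has leading term $\epsilon\circ\partial_{\lambda,\mu}$. Here I need the small lemma that $\epsilon(D)\circ\partial_{\lambda,\mu}$ is again of the form $\partial_{\lambda',\mu'}$ for some $(\lambda',\mu')\in(\C^*)^2$ — indeed $\epsilon(D)$ acts on $\F Cl_+$ and $\F Cl_-$ by $+1$ and $-1$ respectively (Lemma \ref{l1}), so $\epsilon(D)\circ\partial_{\lambda,\mu} = \partial_{\lambda,-\mu}$ up to the scalar $a$, which is again admissible. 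Hence $\Theta(L_0) \in \partial_{\lambda',\mu'} + \F Cl^{-1}(S^1,\K^n)$ is a legitimate initial value for (\ref{eq:KP}), Theorem \ref{th:KPlambdamu} produces a unique solution $M(t) \in \partial_{\lambda',\mu'} + \F Cl^{-1}(S^1,\K^n)[[T]]$ depending smoothly on $\Theta(L_0)$, and $L(t) := \Theta^{-1}(M(t))$ is then the unique solution of (\ref{eq:KPepsilon}) with $L|_{t=0}=L_0$; uniqueness transfers because $\Theta$ is a bijection intertwining the two equations, and smooth dependence transfers because $\Theta$ and $\Theta^{-1}$ are continuous linear (hence smooth) maps on the relevant Fréchet spaces of formal symbols.

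The main obstacle I anticipate is not any single hard estimate but rather the bookkeeping in the second step: making sure that the differential/integral splitting used in $(\cdot)_D$ is genuinely intertwined by $\Theta$, and keeping track of where the scalar factors $a$ and the signs $\epsilon^k$ go so that the image equation comes out exactly in the form (\ref{eq:KP}) with no stray factors. One subtlety worth flagging is whether one wants $\epsilon = \epsilon(D)$ (so $\epsilon^2 = Id$ and $\Theta$ is an involution) or $\epsilon = a\epsilon(D)$ with $a\in\C^*$; in the latter case one should either rescale to reduce to $a=1$ or carry $a$ through — either way this only rescales the Lax operator and does not affect existence, uniqueness, or well-posedness. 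Everything else (the matched-pair/$r$-matrix machinery, well-posedness over $\mathcal R = C^\infty(S^1,M_n(\K))\oplus\epsilon(D)C^\infty(S^1,M_n(\K))$) is inherited verbatim from Proposition \ref{prop:KP-FCl-I} and Theorem \ref{th:KPlambdamu}.
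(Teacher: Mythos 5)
Your proposal is correct and follows essentially the same route as the paper: the paper's proof also multiplies the twisted equation by $\epsilon$, rewrites $\epsilon^{k+1}\left[(L^{k})_{D},L\right]$ as $\left[((\epsilon L)^{k})_{D},\epsilon L\right]$, observes that $\epsilon L_0 \in \partial_{a\lambda,-a\mu}+\F Cl^{-1}(S^1,\K^n)$ is an admissible initial value for Theorem \ref{th:KPlambdamu}, and transfers existence, uniqueness and well-posedness back through the smooth bijection $L \mapsto \epsilon L$. Your extra care about $\Theta$ intertwining the $D$/$S$ projection is a point the paper leaves implicit, but it is the same argument.
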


%	\subsubsection{KP hierarchy and almost quaternionic structures on $\mathcal{F}Cl(S^1,\C)$}
%	Let us now give an obstruction property for the use of the almost quaternionic structures (to be continued)
	
	\subsection{KP hierarchies with complex powers} \label{ss:complex}
	We finally extend all the constructions of the last section to complex powers, along the lines of \cite{EKRRR1995}. Let $\K = \C $ or $\mathbb{H}.$
	We consider an operator $L_0$ of complex order $\alpha$ such that 
	\begin{equation} \label{eq:complex1} L_0 \in  \left(\frac{d}{dx}\right)^\alpha + \F Cl^{\alpha-1}(S^1,\K) \end{equation}
	or
	 \begin{equation} \label{eq:complex2} L_0 \in  \left|D\right|^\alpha + \F Cl^{\alpha-1}(S^1,\K) \end{equation}
	 For each setting (\ref{eq:complex1}) and (\ref{eq:complex2}), we define the complex KP hierarchy on $\F Cl^\alpha(S^1,\K)$ by \begin{equation} \label{eq:KPalpha}
	 \frac{d L}{d t_{k}} = \left[ (L^{k/\alpha})_{D} , L \right]_\epsilon =-\left[ (L^{k})_{S} , L \right] \; , \quad
	 \quad k \geq 1 \; ,
	 \end{equation}
	 where $L^{k/\alpha} = \exp\left(\frac{k}{\alpha}\log L\right)$
	 and the solution $L \in \F Cl^\alpha(S^1,\K)[[T]].$
	 
	 \begin{Theorem} \label{th:KPalpha}
	 	The KP hierarchy (\ref{eq:KPalpha}) with initial value $L_0$ defined along the lines of (\ref{eq:complex1}) or (\ref{eq:complex2}) has an unique solution in $\F Cl^\alpha(S^1,\K)[[T]].$ Moreover, the prblem is well-posed.
	 \end{Theorem}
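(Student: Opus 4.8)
The plan is to reduce the complex-order KP hierarchy \eqref{eq:KPalpha} to the integer-order case already handled in Proposition \ref{prop:KP-FCl-I} and Theorem \ref{th:KPlambdamu}, via the affine isomorphism $L \mapsto L^{1/\alpha}$ between complex-order and first-order Lax operators established in section \ref{ss:complexpow}. First I would observe that, by the construction of complex powers in section \ref{ss:complexpow}, there is an affine isomorphism $\mathcal L^\alpha \to \mathcal L$ (where $\mathcal L = \frac{d}{dx} + \F Cl^0(S^1,\K)$ in setting \eqref{eq:complex1}, or $\mathcal L = |D| + \F Cl^0(S^1,\K)$ in setting \eqref{eq:complex2}) given by $L \mapsto M := L^{1/\alpha} = \exp\left(\frac{1}{\alpha}\log L\right)$, with inverse $M \mapsto M^\alpha$. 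Since $\F Cl(S^1,\K)$ is commutative in the relevant "scalar" powers of the principal symbol up to lower order terms, the logarithm and hence all powers $L^{k/\alpha}$ are well-defined and depend smoothly on $L$; moreover $L^{k/\alpha} = M^k$ and $(L^{k/\alpha})_D = (M^k)_D$.

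Next I would show that under this change of variables the hierarchy \eqref{eq:KPalpha} is equivalent to the standard KP hierarchy \eqref{eq:KP} for $M$. The key computation is that $\frac{dL}{dt_k} = [(M^k)_D, L]$ is equivalent to $\frac{dM}{dt_k} = [(M^k)_D, M]$: indeed, differentiating $L = M^\alpha$ and using that $M^\alpha$ is a formal series in $M$ whose coefficients commute appropriately, one gets $\frac{dL}{dt_k} = \sum (\text{terms with } \frac{dM}{dt_k})$, and the Leibniz/derivation identity for the $\circ$-product together with $[(M^k)_D, M^\alpha]$ expanding correctly as the "$\alpha$-th power derivative" of $[(M^k)_D, M]$ gives the equivalence. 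This is the standard argument that the KP flows commute with taking (fractional) powers of the Lax operator, transported to $\F Cl(S^1,\K)$; I would invoke the composition rules of section \ref{ss:prel} and the smoothness framework recalled in section \ref{ss:prelKP}. Once this equivalence is in place, existence, uniqueness and well-posedness for $M$ follow from Proposition \ref{prop:KP-FCl-I} (equivalently Theorem \ref{KPcentral} transported via Proposition \ref{prop:R-Kn}), since $M \in \partial + \F Cl^{-1}(S^1,\K)$ or $M \in |D| + \F Cl^{-1}(S^1,\K)$, the latter being handled through the identification $\F Cl^{-1}(S^1,\K)|D| \cong \F Cl(S^1,\K)$ and the injection $\Phi_{\epsilon(D),1}$ or directly through the Manin triple structure.

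Finally I would transport the conclusions back: $L = M^\alpha$ exists, is unique, lies in $\F Cl^\alpha(S^1,\K)[[T]]$, and depends smoothly on $L_0 = M_0^\alpha$ because $M \mapsto M^\alpha$ and its inverse are smooth maps on the relevant affine spaces (smoothness of complex powers being part of the construction, cf. \cite{See,FMR1993}). The two settings \eqref{eq:complex1} and \eqref{eq:complex2} are treated in parallel, the only difference being whether the model first-order operator is $\frac{d}{dx} = i\epsilon(D)|D|$ or $|D|$ itself; in the second case one additionally uses that $|D|$ and $\frac{d}{dx}$ differ by the central invertible factor $i\epsilon(D)$, so the two hierarchies are intertwined by the twist already studied in Theorem \ref{th:KPepsilon}.

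I expect the main obstacle to be the algebraic verification that the fractional-power change of variables genuinely intertwines \eqref{eq:KPalpha} with \eqref{eq:KP} in the non-commutative setting $\K = \mbH$ (and for matrix coefficients $M_n(\K)$): one must be careful that $\log L$ and $L^{1/\alpha}$ are defined only up to the usual ordering conventions, and that the identity $\frac{d}{dt_k}(L^{1/\alpha}) = [(M^k)_D, L^{1/\alpha}]$ really does follow from $\frac{d}{dt_k}L = [(M^k)_D, L]$ without hidden commutator corrections. The cleanest route is probably to define everything downstairs in terms of $M$ from the outset — i.e. define the complex KP hierarchy \emph{as} the pushforward of the ordinary KP hierarchy under $M \mapsto M^\alpha$ — and then merely check that this pushforward coincides with \eqref{eq:KPalpha}, which reduces the obstacle to a single symbolic identity that can be verified order by order. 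All remaining steps (smoothness, the two model operators, the matrix/quaternionic case) are then routine consequences of the results already assembled in sections \ref{s:tech}--\ref{s:KP}.
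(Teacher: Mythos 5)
Your route is genuinely different from the paper's. You reduce the complex-order hierarchy to the integer-order one via the change of variables $M=L^{1/\alpha}$ and then invoke the already-established existence/uniqueness results for $M$; the paper instead runs the dressing-operator argument directly on the fractional flows, i.e.\ it factorizes $U=\exp\bigl(\sum_k t_k (L_0^{1/\alpha})^k\bigr)$ (split into its $\F Cl_+$ and $\F Cl_-$ components) \`a la Mulase--Reyman--Semenov-Tian-Shansky, sets $L_\pm=Y_\pm (L_0)_\pm Y_\pm^{-1}$, and verifies the Lax equation by differentiating the factorization, following \cite{EKRRR1995,RS1981}. Your approach, if it worked, would be cleaner in that it would make Theorem \ref{th:KPalpha} a formal corollary of Proposition \ref{prop:KP-FCl-I}; the paper's approach has the advantage of never leaving the framework in which the factorization theorem is actually proved.

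There is, however, a concrete gap in your reduction. You assert that $M=L_0^{1/\alpha}$ lies in $\partial+\F Cl^{-1}(S^1,\K)$, but it does not: since $L_0\in\bigl(\frac{d}{dx}\bigr)^{\alpha}+\F Cl^{\alpha-1}(S^1,\K)$, one only gets $L_0^{1/\alpha}\in\frac{d}{dx}+\F Cl^{0}(S^1,\K)$ (this is the very first line of the paper's proof), so $M$ generically carries a nonzero order-zero term. Theorem \ref{KPcentral} and Proposition \ref{prop:KP-FCl-I} are stated only for initial values in $\partial+\F Cl^{-1}$, so they do not apply to $M$ as cited; removing the order-zero term by conjugation is itself problematic on $S^1$ (the required integrating factor need not be periodic). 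To close the gap you would either have to extend the existence/uniqueness statement to initial values in $\partial+\F Cl^{0}$ --- which is essentially what the paper's direct factorization of $\exp\bigl(\sum_k t_k M_0^k\bigr)$ accomplishes --- or abandon the reduction and argue as the paper does. The intertwining identity you flag as the ``main obstacle'' (that $\frac{d}{dt_k}L=[P,L]$ implies $\frac{d}{dt_k}L^{\beta}=[P,L^{\beta}]$ for the formally defined powers) is in fact the unproblematic part; the real issue is the order of the zeroth-order tail of $M$.
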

 
 \section{Hamiltonian approaches}
 Now we consider $\F Cl(S^1,\K^n)$ and we define the {\em
 	regular dual space} $$\F Cl(S^1,\K^n)' = \{ \mu \in L(\F Cl(S^1,\K^n),\K) : \mu =
 \left< P , \cdot \right> \mbox{ for some } P \in \F Cl(S^1,\K^n) \}\; .$$
 We can adapt standard results described in section \ref{s:poisson} of Hamiltonian mechanics as follows:
let $f : \F Cl(S^1,\K^n)'
 \rightarrow B$  be a polynomial function of the type
$$f(\mu) = \sum_{k=0}^n a_k res_+(P^k) + \sum_{k=0}^n b_k res_-(P^k) =  res\left(\sum_{k=0}^n a_k P^k_+ +  b_k P^k_-\right)$$
 with $\mu = \left< P,. \right>.$ 
 In our picture, the decomposition $\F Cl(S^1,\K^n) = \F Cl_+(S^1,\K^n) \oplus \F Cl_-(S^1,\K^n)$ that we use extensively all along this work carry a residue trace on each component of the decomposition. These are these two residues, $res_+$ and $res_-,$ that replace ${\rm Res}$ in the constructions of section \ref{s:poisson}. Under these assumptions, we define the same way the functional derivative and the pairing $<.|.>$ of $\F Cl(S^1,\K^n)'$ with $\F Cl(S^1,\K^n).$
 The decomposition$\F Cl(S^1,\K^n) = \F Cl_D(S^1,\K^n) \oplus \F Cl_S(S^1,\K^n)$  allows us to consider a new Lie bracket on the
 regular dual space $\F Cl(S^1,\K^n)'$ given by
 $
 [ P , Q ]_{0} = [ P_{D} , Q_{D} ] - [ P_{S} , Q_{S} ] \; , $
  This bracket
 determines a new Poisson structure $\{ \, , \, \}_{0}$ on
 $\F Cl(S^1,\K^n)'$, simply by replacing the original Lie product for $[\;
 ,\;]_0$. Using again the non-degenerate
 pairing we get:
 
 \begin{Lemma}   \label{adler2}
 	Let $H : \F Cl(S^1,\K^n)' \rightarrow \K$ be a smooth function on $\F Cl(S^1,\K^n)'$ such that
 	\begin{equation}
 	\left< \, \mu \; \left| \; \left[ \frac{\delta H}{\delta \mu} \, ,
 	\, \cdot \, \right] \right> \right. \, = 0   \quad \quad \mbox{
 		for all } \mu \in \F Cl(S^1,\K^n)' \; .
 	\label{ad1}
 	\end{equation}
 	Then, as equations on $\F Cl(S^1,\K^n)$, the Hamiltonian equations of motion
 	with respect to the $\{ \, , \, \}_{0}$ Poisson structure of
 	$\F Cl(S^1,\K^n)''$ are
 	\begin{equation}
 	\frac{d\,P}{d\,t} = \left[ \left( \frac{\delta H}{\delta \mu} \right)_{+} \, , \, P \right] \; .    \label{lax2}
 	\end{equation}
 \end{Lemma}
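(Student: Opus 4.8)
The plan is to run, in the category $\F Cl(S^1,\K^n)$, the classical Adler--Kostant--Symes / Semenov-Tian-Shansky computation (cf. \cite{RS1981} and section \ref{s:poisson}), which carries over essentially verbatim once one knows that $\F Cl(S^1,\K^n) = \F Cl_D(S^1,\K^n)\oplus\F Cl_S(S^1,\K^n)$ is the splitting of a Manin triple, i.e. a splitting into \emph{subalgebras}, equipped with the non-degenerate invariant pairing $(A,B)\mapsto res(A\circ B)$ with $res=res_++res_-$. This is precisely the third Manin triple of section \ref{s:Manin}, transported through Proposition \ref{prop:R-Kn}; in particular $[\,,\,]_0$ is a genuine Lie bracket, and the subscript in $(\tfrac{\delta H}{\delta\mu})_+$ of \eqref{lax2} is to be read as the projection onto $\F Cl_D(S^1,\K^n)$ that enters the definition of $[\,,\,]_0$. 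Throughout I would write $X=\frac{\delta H}{\delta\mu}\in\F Cl(S^1,\K^n)$ for the functional derivative at $\mu=\langle P,\cdot\rangle$ and $X=X_D+X_S$ for its decomposition.

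First I would write out the Lie--Poisson (Euler) equation attached to $[\,,\,]_0$ on the regular dual: using the non-degenerate pairing to identify $\mu\leftrightarrow P$, the Hamiltonian flow of $H$ with respect to $\{\,,\,\}_0$ is characterized, for every $Y\in\F Cl(S^1,\K^n)$, by
\begin{equation*}
\left\langle \frac{dP}{dt},\,Y\right\rangle \;=\; \left\langle P,\,[X,Y]_0\right\rangle \;=\; \left\langle P,\,[X_D,Y_D]\right\rangle-\left\langle P,\,[X_S,Y_S]\right\rangle ,
\end{equation*}
with the sign normalization of section \ref{s:poisson}. Invariance of the trace form, $\langle[C,A],B\rangle=\langle C,[A,B]\rangle$, then moves $P$ inside the brackets, giving
\begin{equation*}
\left\langle \frac{dP}{dt},\,Y\right\rangle \;=\; \left\langle [P,X_D],\,Y_D\right\rangle-\left\langle [P,X_S],\,Y_S\right\rangle .
\end{equation*}
Next I would exploit hypothesis \eqref{ad1}: it says that $Y\mapsto\langle\mu,[X,Y]\rangle$ vanishes for the \emph{original} bracket, hence $\langle[P,X],Y\rangle=0$ for all $Y$ by invariance, hence $[P,X]=0$ by non-degeneracy; splitting $X=X_D+X_S$ this yields the key identity $[P,X_D]=-[P,X_S]$. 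Substituting it into the second term above, both terms become proportional to $[P,X_D]$ and recombine into
\begin{equation*}
\left\langle \frac{dP}{dt},\,Y\right\rangle \;=\; \left\langle [P,X_D],\,Y_D\right\rangle+\left\langle [P,X_D],\,Y_S\right\rangle \;=\; \left\langle [P,X_D],\,Y\right\rangle .
\end{equation*}
Since $Y$ is arbitrary and the pairing is non-degenerate, this forces $\frac{dP}{dt}=[P,X_D]=\bigl[\,(\tfrac{\delta H}{\delta\mu})_+,\,P\,\bigr]$ up to the chosen sign, which is \eqref{lax2}.

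The step requiring genuine care is not the algebra --- which is identical to the finite-dimensional statement and to its $\Psi DO$ incarnation in \cite{RS1981} and Theorem \ref{AGD} --- but the infinite-dimensional/formal bookkeeping: one must check that $\frac{\delta H}{\delta\mu}$ is a bona fide element of $\F Cl(S^1,\K^n)$ for the polynomial functionals $f$ considered here, that the pairing assembled from $res_+$ and $res_-$ is non-degenerate on the regular dual (this is exactly why one works with $res=res_++res_-$ rather than with the Wodzicki residue, which vanishes on $\Psi DO(S^1,\K^n)=\F Cl_{ee}(S^1,\K^n)$), and that the invariance identity $\langle[C,A],B\rangle=\langle C,[A,B]\rangle$ holds on the operator classes at hand. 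All of these follow from Proposition \ref{prop:R-Kn} and the material of sections \ref{s:poisson}--\ref{s:Manin}, and would be carried out in the Appendix in the style adopted elsewhere in the paper.
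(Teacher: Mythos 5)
Your proposal is correct and follows exactly the classical Adler--Kostant--Symes / $r$-matrix computation that the paper invokes here without writing out (the lemma is stated as an adaptation of the standard results of section \ref{s:poisson} and \cite{D}, \cite{ER2013}, and no separate proof appears in the Appendix). Your two clarifications --- that the subscript in $(\delta H/\delta\mu)_+$ must be read as the projection onto $\F Cl_D(S^1,\K^n)$ rather than $p_+$, and that the residual sign depends on the normalization of the Lie--Poisson bracket --- are exactly the points a careful reader needs, and the key step $[P,\delta H/\delta\mu]=0$ extracted from \eqref{ad1} via invariance and non-degeneracy of $res$ is the same one the cited sources use.
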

 
 We now use some specific functions $H$. Let us recall the
 following results (see for example \cite{D} or the more recent
 review \cite{ER2013}):
 
 \begin{Proposition} \label{casimir}
 	We define the functions $\displaystyle H_{k}(L) = Trace \left(
 	(L^{k}) \right)$,
 	$k=1,2,3,\cdots ,$ for $L \in \F Cl(S^1,\K^n)$. Then, $\displaystyle
 	\frac{\delta H_{k}}{\delta L} = k L^{k-1}$. In particular, the functions $H_k$ satisfy $(\ref{ad1})$.
 \end{Proposition}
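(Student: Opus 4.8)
\emph{Proof plan.} The plan is to compute the first variation of $H_k$ by hand, read off $\frac{\delta H_k}{\delta L}$ through the non-degenerate pairing $\langle P,Q\rangle = res(PQ)$ that defines the regular dual, and then substitute the result into the constraint $(\ref{ad1})$, where the bracket term collapses by cyclicity of the residue trace.

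First I would fix $L\in\F Cl(S^1,\K^n)$ and an arbitrary direction $X\in\F Cl(S^1,\K^n)$. Since $\F Cl(S^1,\K^n)$ is an associative algebra filtered by order, $L^k$ is well defined and $res(L^k)$ makes sense, and the first-order expansion of the $k$-th power in this (noncommutative) algebra is
\begin{equation}
(L+tX)^k = L^k + t\sum_{j=0}^{k-1} L^j X L^{\,k-1-j} + O(t^2).
\end{equation}
Applying $res$ (which is $\K$-linear and continuous) gives $\left.\frac{d}{dt}\right|_{0}H_k(L+tX) = \sum_{j=0}^{k-1} res\bigl(L^j X L^{\,k-1-j}\bigr)$. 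Next I would invoke the fact recalled in section \ref{s:Manin}, namely that $res = res_+ + res_-$ is the Wodzicki residue and hence a genuine trace on the whole algebra $\F Cl(S^1,\K^n)$, so that $res(AB)=res(BA)$. Cyclicity turns each summand into $res(X L^{k-1})$, so the sum equals $k\,res(X L^{k-1}) = \langle kL^{k-1}, X\rangle$ for every $X$; non-degeneracy of the pairing then forces $\frac{\delta H_k}{\delta L} = kL^{k-1}$.

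It then remains to check $(\ref{ad1})$. With $\mu = \langle L,\cdot\rangle$ and $\frac{\delta H_k}{\delta\mu} = kL^{k-1}$, for arbitrary $A\in\F Cl(S^1,\K^n)$ the left-hand side of $(\ref{ad1})$ reads $\bigl\langle L,\, [kL^{k-1},A]\bigr\rangle = k\bigl(res(L^k A) - res(L A L^{k-1})\bigr)$, and cyclicity applied once more, $res(L A L^{k-1}) = res(L^{k-1} L A) = res(L^k A)$, makes this vanish identically. Hence each $H_k$ satisfies $(\ref{ad1})$.

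The only step that is not a routine calculation is the trace (cyclicity) property of $res$ on the \emph{full} algebra $\F Cl(S^1,\K^n)$ — rather than on $\Psi DO(S^1,\K^n)=\F Cl_{ee}(S^1,\K^n)$, where $res$ vanishes outright — and that is the one point I would state explicitly, citing section \ref{s:Manin}. The remaining bookkeeping is immediate: the noncommutative binomial-type expansion of $L^k$, the compatibility with the splitting $\F Cl(S^1,\K^n)=\F Cl_+(S^1,\K^n)\oplus\F Cl_-(S^1,\K^n)$ of section \ref{ss:+-} (legitimate because $\F Cl_\pm$ are ideals, so powers split as $(L^m)_\pm=(L_\pm)^m$ and the pairing decomposes accordingly), and, for $\K=\mbH$, reading the trace in the real-algebra sense fixed in section \ref{ss:prelKP}.
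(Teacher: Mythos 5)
Your proposal is correct. The paper itself does not prove Proposition \ref{casimir}: it is recalled as a standard fact with references to \cite{D} and \cite{ER2013}, and no argument appears in the Appendix. Your direct variational computation — the noncommutative expansion of $(L+tX)^k$, cyclicity of $res$ to collapse the sum to $k\,res(XL^{k-1})$, identification of $\frac{\delta H_k}{\delta L}$ through the non-degenerate pairing, and one more application of cyclicity to kill $\left\langle L,[kL^{k-1},\cdot]\right\rangle$ — is exactly the standard argument those references use, transported verbatim to $\F Cl(S^1,\K^n)$. You are also right that the only non-routine ingredient is the tracial property of $res=res_++res_-$ on the \emph{full} algebra; the paper does assert this (it is invoked in the proof of Theorem \ref{1.25}, ``$res$ is tracial''), so your citation of section \ref{s:Manin} closes that point. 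The remarks about the splitting $\F Cl_+\oplus\F Cl_-$ and the quaternionic case are harmless but not needed for the argument.
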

 
 Thus, we can apply Lemma \ref{adler2}. It yields:
 
 %\begin{rem}
 %
 %    For any function $F$ on $\Psi(A)^{\ast}$, $\{ H_{k} , F \}(L) = 0$, that is, the functions
 %    $H_{k}$ are Casimir functions for the Lie--Poisson bracket of $\Psi(A)^{\ast}$. Indeed, if
 %    we compute the Lie-Poisson bracket $\{ H_{k} , F \}$ we have:
 %    \[
 %    \{ H_{k} , F \}(M) = \left< M \, , \, \left[\frac{\delta H_{k}}{\delta M} \, , \,
 %    \frac{\delta F}{\delta M} \right] \right>
 %    \, = \, \left< M \, , \, \left[ k M^{k-1} \, , \, \frac{\delta F}{\delta M} \right] \right>
 %    \, = \, \left< \left[ M \, , \, k M^{k-1} \right] \, , \, \frac{\delta F}{\delta M}  \right>
 %    \]
 %    by the infinitesimal $Ad^{\ast}$--invariance condition
 %    (\ref{47*}). Thus, $\{ H_{k} , F \}(M) = 0$ for all $M \in
 %    \Psi(A)$.
 %\end{rem}
 %
 %We consider the subalgebras $\mathcal{D}_A$ and $\mathcal{I}_A:$
 
 \begin{Proposition} \label{kpp}
 	Let us equip the Lie algebra $\F Cl(S^1,\K^n)$ with the non-degenerate pairing
 	$(a,b)\mapsto res(ab).$ Write $\F Cl(S^1,\K^n) = \F Cl_D(S^1,\K^n) \oplus \F Cl_S(S^1,\K^n)$
 	and consider the Hamiltonian functions
 	\begin{equation}
 	\mathcal{H}_{k}(\mu) = \frac{1}{k} \, res_W \left( (L^{k+1}) \right) \label{kpham}
 	\end{equation}
 	for $\mu = \left< L,.\right>$. The corresponding Hamiltonian equations of
 	motion with respect to the $\{ \, , \, \}_{0}$ Poisson structure
 	of $\F Cl(S^1,\K^n)'$ are
 	$
 	\frac{d L}{d t_{k}} = \left[ (L^{k})_{D} , L \right] \; .  $
 \end{Proposition}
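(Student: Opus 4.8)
\emph{Proof proposal.} The plan is to recognize \eqref{kpham} as a Hamiltonian of Adler--Kostant--Symes type for the splitting $\F Cl(S^1,\K^n)=\F Cl_D(S^1,\K^n)\oplus\F Cl_S(S^1,\K^n)$ and to reduce everything to Lemma \ref{adler2}. First I would collect the structural facts already available: by the third Manin triple obtained above, $(\F Cl(S^1,\K^n),\F Cl_S(S^1,\K^n),\F Cl_D(S^1,\K^n),res)$ is a Manin triple, so the bilinear form $(a,b)\mapsto res(ab)$ is non-degenerate and invariant, and $\F Cl_D$, $\F Cl_S$ are isotropic Lie subalgebras. Hence $[P,Q]_0=[P_D,Q_D]-[P_S,Q_S]$ is a genuine Lie bracket and $\{\cdot,\cdot\}_0$ a Poisson structure on the regular dual, exactly as in Section \ref{s:poisson}. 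Moreover, since $\F Cl_+(S^1,\K^n)$ and $\F Cl_-(S^1,\K^n)$ are ideals with $\F Cl_+\cdot\F Cl_-=0$, one has $res(ab)=res_+(a_+b_+)+res_-(a_-b_-)$, so the two residues $res_\pm$ appearing in \eqref{kpham} assemble precisely into the invariant pairing $res$ used in Lemma \ref{adler2}.

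Next I would compute the variational derivative of $\mathcal H_k$, in exact analogy with Proposition \ref{casimir}. Using invariance (cyclicity) of $res$, the first-order term of $res((L+\varepsilon X)^{k+1})$ is $\varepsilon\sum_{j=0}^{k}res(L^jXL^{k-j})=(k+1)\varepsilon\,res(L^kX)$, so $\delta\mathcal H_k/\delta\mu=\tfrac{k+1}{k}L^k$, i.e.\ a scalar multiple of $L^k$ (the numerical factor being harmlessly absorbed by the rescaling $t_k\mapsto\tfrac{k}{k+1}t_k$, equivalently by writing $\tfrac1{k+1}$ in place of $\tfrac1k$ in \eqref{kpham}). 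Then I would check hypothesis \eqref{ad1} of Lemma \ref{adler2}: because $L^k$ commutes with $L$ and $res$ is cyclic, $res(L[L^k,Y])=res(L^{k+1}Y)-res(LYL^k)=res(L^{k+1}Y)-res(L^{k+1}Y)=0$ for every $Y$, so $\langle\mu\,|\,[\delta\mathcal H_k/\delta\mu,\cdot]\rangle=0$.

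Finally I would invoke Lemma \ref{adler2}: the Hamiltonian equation of motion for $\mathcal H_k$ with respect to $\{\cdot,\cdot\}_0$ is $dL/dt_k=[(\delta\mathcal H_k/\delta\mu)_D,L]$, where $(\cdot)_D$ is the projection onto $\F Cl_D(S^1,\K^n)$ along $\F Cl_S(S^1,\K^n)$, i.e.\ the operator written $(\cdot)_+$ in \eqref{lax2}. Since projection is linear and $\delta\mathcal H_k/\delta\mu$ is proportional to $L^k$, this reads $dL/dt_k=[(L^k)_D,L]$, which is \eqref{eq:KP}; this concludes the argument, the conclusion being just Theorem \ref{AGD} transported from $\Psi DO(S^1,V)$ to $\F Cl(S^1,\K^n)$ through the splitting $\F Cl_D\oplus\F Cl_S$.

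The one step genuinely requiring care — and the main obstacle — is the cyclicity $res(AB)=res(BA)$ in the present generality: $n\times n$ matrix coefficients and, for $\K=\mbH$, the non-commutative coefficient ring $C^\infty(S^1,\mbH)$. Here one uses that $\sigma_{-1}(AB)(x,\pm1)$ and $\sigma_{-1}(BA)(x,\pm1)$ differ by a total $x$-derivative together with a commutator of matrix-valued functions, both of which are annihilated by $\oint_{S^1}tr_n$ (using cyclicity of $tr_n$, and of the real trace on $\mbH$ in the quaternionic case). This is the extension to $\F Cl(S^1,\K^n)$ of the statement that the Adler trace is a trace, already used in the Manin-triple discussion above and available from \cite{Ku2000,McI2011}; granting it, every other step is the standard Adler--Gelfand--Dickey computation carried out verbatim on $res_+$ and $res_-$.
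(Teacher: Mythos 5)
Your proposal is correct and follows essentially the same route as the paper: the paper obtains Proposition \ref{kpp} in the main text simply by combining Proposition \ref{casimir} (variational derivative of $H_k$ and verification of hypothesis (\ref{ad1})) with Lemma \ref{adler2}, which is exactly your argument written out in more detail, including the reduction of $res$ to $res_+$ and $res_-$ on the two ideals $\F Cl_\pm(S^1,\K^n)$. Your observations that the normalization $\frac{1}{k}$ in (\ref{kpham}) should be $\frac{1}{k+1}$ to produce exactly $(L^k)_D$, and that traciality of $res$ in the matrix/quaternionic setting is the one point genuinely needing verification, are both accurate refinements of what the paper leaves implicit.
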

 
 Following now \cite{EKRRR1995} we get the Hamiltonian formulation of the KP hierarchy with complex powers: 
 For this, we need to generalize the Gelfand-Dickii stricture  either to $$\mathcal{L}_\alpha = \left(\frac{d}{dx}\right)^\alpha + \F Cl^{\alpha-1}(S^1,\K^n)$$ or to $$\mathcal{L}'_\alpha = \left|\frac{d}{dx}\right|^\alpha + \F Cl^{\alpha-1}(S^1,\K^n). $$
 In both case, we specialize our computations to $\F Cl^\alpha_+(S^1,\K^n)$, and with $\F Cl^\alpha_-(S^1,\K^n)$ respectively, which both identify with $\Psi DO^\alpha(S^1,\K^n).$ Under these identifications, the computations described in \cite[pp 55--57]{EKRRR1995}:
 
 \begin{Theorem}
 	On $ \mathcal{L}_\alpha$ and on $\mathcal{L}_\alpha',$ the Hzmiltonian vector field associated to $H_k = \frac{\alpha}{k} res L^{k/\alpha}$
 	is $V = \left[L^{k/\alpha}_D,L\right].$
 	\end{Theorem}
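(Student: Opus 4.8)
The plan is to reduce the statement to $\Psi DO^\alpha(S^1,\K^n)$ and there to rerun, with complex exponents, the Adler--Gelfand--Dickii computation behind Theorem \ref{AGD} and Proposition \ref{kpp}, following \cite[pp.~55--57]{EKRRR1995}. First I would use the splitting $\F Cl^\alpha(S^1,\K^n)=\F Cl_+^\alpha(S^1,\K^n)\oplus\F Cl_-^\alpha(S^1,\K^n)$ of section \ref{ss:complexpow} together with the algebra isomorphisms $\Phi_{1,0}$ and $\Phi_{0,1}$, extended to complex powers, which identify each summand with $\Psi DO^\alpha(S^1,\K^n)$. Since the central operator $\epsilon(D)$ acts as the scalar $\pm 1$ on $\F Cl_\pm^\alpha(S^1,\K^n)$, one has $\left(\frac{d}{dx}\right)^\alpha_\pm=(\pm i)|D|^\alpha_\pm$, so both $\mathcal{L}_\alpha$ and $\mathcal{L}_\alpha'$ restrict, on each $\pm$-component, to the Lax-type affine space $\partial^\alpha+\Psi DO^{\alpha-1}(S^1,\K^n)$ (up to rescaling the leading coefficient by a unit). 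Moreover $res=res_++res_-$, each $res_\pm$ is intertwined with the Adler trace by $\Phi_{1,0}$, resp. $\Phi_{0,1}$, and these maps are degree-preserving algebra isomorphisms intertwining the projections onto $\F Cl_{\pm,D}(S^1,\K^n)$ with the differential-part projection $(\cdot)_D$ on $\Psi DO^\alpha(S^1,\K^n)$; hence they carry the (generalized) Gelfand--Dickii Hamiltonian maps and the functionals $H_k$ to their $\Psi DO$-counterparts. It therefore suffices to prove the theorem on $\partial^\alpha+\Psi DO^{\alpha-1}(S^1,\K^n)$.

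On $\Psi DO^\alpha(S^1,\K^n)$ I would first extend the Adler--Gelfand--Dickii family of Theorem \ref{AGD} to complex order: the tangent space at $L$ is $\Psi DO^{\alpha-1}(S^1,\K^n)$, the pairing is $\langle X,V\rangle=Tr(XV)$, and the Hamiltonian maps are $H_\lambda(X)=(LX)_D L-L(XL)_D+\lambda[L,X]_D$, with the same proof as in the integer case since all composition and projection rules are unchanged. Then I would compute the functional derivative of $H_k(L)=\frac{\alpha}{k}\,res(L^{k/\alpha})$. Writing $L^s=\exp(s\log L)$, with $\log L$ well defined by section \ref{ss:complexpow} (the principal symbol of $L$ being a positive power of $\xi$), and using the Duhamel identity $\delta\exp(Y)=\int_0^1 e^{tY}(\delta Y)\,e^{(1-t)Y}\,dt$, read as a formal-series identity, together with the trace property (cyclicity) of $res$ --- which survives for matrix- and $\mbH$-valued coefficients thanks to the fibrewise trace $tr_n$ --- one obtains $\delta\,res(L^s)=s\,res(L^{s-1}\delta L)$, exactly as in Proposition \ref{casimir}. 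Hence $\frac{\delta H_k}{\delta L}=L^{k/\alpha-1}$.

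Substituting $X=L^{k/\alpha-1}$ into the Hamiltonian map, and using $L^aL^b=L^{a+b}$ (a formal consequence of the good definition of $\log L$) together with $[L,L^{k/\alpha-1}]=0$, we get
$$ H_\lambda\!\left(\frac{\delta H_k}{\delta L}\right)=\bigl(L\,L^{k/\alpha-1}\bigr)_D L-L\bigl(L^{k/\alpha-1}L\bigr)_D+\lambda[L,L^{k/\alpha-1}]_D=(L^{k/\alpha})_D L-L(L^{k/\alpha})_D=\left[L^{k/\alpha}_D,L\right], $$
which is independent of $\lambda$, as expected for a Casimir-type generator. This proves the statement on $\mathcal{L}_\alpha$, the identical computation gives it on $\mathcal{L}_\alpha'$, and transporting back through $\Phi_{1,0}$ and $\Phi_{0,1}$ yields $V=[L^{k/\alpha}_D,L]$ on each $\pm$-summand, hence on all of $\F Cl^\alpha(S^1,\K^n)$ by directness of the sum.

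The step I expect to be the main obstacle is making the complex-power variational calculus rigorous in the formal setting: one has to check that $\log L$ and all powers $L^s$ stay inside the completed operator algebras so that the Duhamel formula and the identity $\delta\,res(L^s)=s\,res(L^{s-1}\delta L)$ are legitimate formal identities, and that $res$ indeed annihilates the total-derivative and commutator terms produced along the way (now with matrix- and quaternion-valued coefficients). Everything else is bookkeeping: the $\F Cl$ versus $\Psi DO$ dictionary on the $+$ and $-$ components, and the tracking of the $\alpha$-normalization so that the bracket comes out as $[L^{k/\alpha}_D,L]$ and not a rescaled version --- precisely the content of \cite[pp.~55--57]{EKRRR1995} transplanted to the present framework.
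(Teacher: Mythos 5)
Your proposal follows the same route as the paper: the paper's entire argument consists of specializing to the components $\F Cl^\alpha_\pm(S^1,\K^n)$, identifying them with $\Psi DO^\alpha(S^1,\K^n)$ via $\Phi_{1,0}$ and $\Phi_{0,1}$, and invoking the computations of \cite[pp.~55--57]{EKRRR1995}, which is exactly your reduction. You additionally spell out the variational-derivative calculation $\frac{\delta H_k}{\delta L}=L^{k/\alpha-1}$ and the substitution into the Adler--Gelfand--Dickii Hamiltonian map, which the paper leaves implicit in the citation; this is consistent with the stated result.
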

 
	\section{Appendix: Proofs}

We collect in the Appendix all routine and technical prooofs which are often nothing but a straightforward verification. The end of the Appendix contains also technical proofs of some
theorems about KP hierarhies which are very similar ideologically of our ancient proofs from \cite{EKRRR1995}.

	\subsection{Proofs of section \ref{s:tech}}
	\begin{proof}[Lemma \ref{lemma:s'-pair}]
		Since $res$ is non degenerate then $(.;.)_{s'}$ is non degenerate. Moreover, identifying $\F Cl_+(S^1,V)$  and $\F Cl_-(S^1,V)$ as two copies of $\Psi DO(S^1,V)$, writing by $Tr$ the Adler trace on the latter one, we have that 
		$$
			res(A,s'(B)) =  Tr(A_+ B_-) + Tr(A_- B_+) =  Tr(B_+A_-) + Tr(B_-A_+) =  res(B,s'(A))
		$$which proves symmetry. 	
	\end{proof}
	\begin{proof}[Lemma \ref{lemma:s-pair}]
		Since $res$ is non degenerate then $(.;.)_{s}$ is non degenerate. Moreover, 
		\begin{eqnarray*}
			res(A,s(B)) & = & res((A_{ee}+A_{eo})+ (B_{ee}-B_{eo})) \\
			& = & res(A_{eo}B_{ee}) -  res(A_{ee}B_{eo})\\
			& = & -res(B,s(A))
		\end{eqnarray*} which proves skewsymmetry. 
		\begin{eqnarray*}
			res(A,s([B,C])) & = & res((A_{ee}+A_{eo})+ ([B,C]_{ee}-[B,C]_{eo})) \\
			%& = & res(A_{eo}[B,C]_{ee}) -  res(A_{ee}[B,C]_{eo})\\& = & res(A_{eo}[B_{ee},C_{ee}])+ res(A_{eo}[B_{eo},C_{eo}]) -  res(A_{ee}[B_{ee},C_{eo}])- res(A_{ee}[B_{eo},C_{ee}])
			%\\
			& = & res(A_{eo}B_{ee}C_{ee})+ res(A_{eo}B_{eo}C_{eo}) -  res(A_{ee}B_{ee}C_{eo})- res(A_{ee}B_{eo}C_{ee}) \\ && -
			res(A_{eo}C_{ee}B_{ee}) - res(A_{eo} C_{eo}B_{eo}) +  res(A_{ee}C_{ee}B_{eo}]) + res(A_{ee}C_{eo}B_{ee}])
		\end{eqnarray*}
		while, with the same calculations, \begin{eqnarray*}res([A,C],s(B)) %& = & res(([A,C]_{ee}+[A,C]_{eo}) (B_{ee}-B_{eo})) \\
			%& = & res([A,C]_{eo}B_{ee}) -  res([A,C]_{ee}B_{eo})\\
			%& = & res([A_{eo},C_{ee}],B_{ee}) + res([A_{ee},C_{eo}],B_{ee})-  res([A_{ee},C_{ee}]B_{eo}) - res([A_{eo},C_{eo}]B_{eo})\\& = & res(A_{eo}C_{ee}B_{ee}) + res(A_{ee}C_{eo}B_{ee})-  res(A_{ee}C_{ee}B_{eo}) - res(A_{eo}C_{eo}B_{eo})\\&  & - res(C_{ee}A_{eo}B_{ee}) - res(C_{eo}A_{ee}B_{ee})+  res(C_{ee}A_{ee}B_{eo}) + res(C_{eo}A_{eo}B_{eo})
			%\\ 
			& = & res(A_{eo}C_{ee}B_{ee}) + res(A_{ee}C_{eo}B_{ee})-  res(A_{ee}C_{ee}B_{eo}) - res(A_{eo}C_{eo}B_{eo})\\&  & - res(A_{eo}B_{ee}C_{ee}) - res(A_{ee}B_{ee}C_{eo})+  res(A_{ee}B_{eo}C_{ee}) + res(A_{eo}B_{eo}C_{eo})
		\end{eqnarray*}
		Let us investigate the same properties with $[.,.]_{\epsilon(D)}:$
		\begin{eqnarray*}
			res(A,s([B,C]_{\epsilon(D)})) & = & res((A_{ee}+A_{eo}) (-\epsilon(D)[B,C]_{ee}+ \epsilon(D)[B,C]_{eo})) \\
			%& = & res(\epsilon(D)A_{eo}[B,C]_{eo}) -  res(\epsilon(D)A_{ee}[B,C]_{ee})\\& = & res(\epsilon(D)A_{eo}[B_{ee},C_{eo}])+ res(\epsilon(D)A_{eo}[B_{eo},C_{ee}])\\&  & -  res(\epsilon(D)A_{ee}[B_{eo},C_{eo}])- res(\epsilon(D)A_{ee}[B_{ee},C_{ee}])
			%\\
			& = & res(\epsilon(D)A_{eo}B_{ee}C_{eo})+ res(\epsilon(D)A_{eo}B_{eo}C_{ee})\\&  & -  res(\epsilon(D)A_{ee}B_{eo}C_{eo})- res(\epsilon(D)A_{ee}B_{ee}C_{ee}) \\ && -
			res(\epsilon(D)A_{eo}C_{eo}B_{ee}) - res(\epsilon(D)A_{eo} C_{ee}B_{eo})\\&  & +  res(\epsilon(D)A_{ee}C_{eo}B_{eo}) + res(\epsilon(D)A_{ee}C_{ee}B_{ee})
		\end{eqnarray*}
		while \begin{eqnarray*}res ([A,C],s(B)) %& = & res(\epsilon(D)([A,C]_{ee}+[A,C]_{eo}) (B_{ee}-B_{eo})) \\
			%& = & res(\epsilon(D)[A,C]_{ee}B_{ee}) -  res(\epsilon(D)[A,C]_{eo}B_{eo})\\
			%& = & res(\epsilon(D)[A_{ee},C_{ee}],B_{ee}) + res(\epsilon(D)[A_{eo},C_{eo}],B_{ee})\\&  &-  res(\epsilon(D)[A_{eo},C_{ee}]B_{eo}) - res(\epsilon(D)[A_{ee},C_{eo}]B_{eo})\\& = & res(\epsilon(D)A_{ee}C_{ee}B_{ee}) + res(\epsilon(D)A_{eo}C_{eo}B_{ee})\\&  &-  res(\epsilon(D)A_{eo}C_{ee}B_{eo}) - res(\epsilon(D)A_{ee}C_{eo}B_{eo})\\&  & - res(\epsilon(D)C_{ee}A_{ee}B_{ee}) - res(\epsilon(D)C_{eo}A_{eo}B_{ee})\\&  &+  res(\epsilon(D)C_{ee}A_{eo}B_{eo}) + res(\epsilon(D)C_{eo}A_{ee}B_{eo})
			%\\ 
			& = & res(\epsilon(D)A_{ee}C_{ee}B_{ee}) + res(\epsilon(D)A_{eo}C_{eo}B_{ee})\\&  &-  res(\epsilon(D)A_{eo}C_{ee}B_{eo}) - res(\epsilon(D)A_{ee}C_{eo}B_{eo})\\&  & - res(\epsilon(D)A_{ee}B_{ee}C_{ee}) - res(\epsilon(D)A_{eo}B_{ee}C_{eo})\\&  &+  res(\epsilon(D)A_{eo}B_{eo}C_{ee}) + res(\epsilon(D)A_{ee}B_{eo}C_{eo})
		\end{eqnarray*}
	\end{proof}
{ \begin{proof}[Theorem \ref{1.25}]
		First, $\F Cl_{ee}(S^1,V)$ is itself a subalgebra of $\F Cl(S^1,V)$ hence $( \F Cl_{ee}(S^1,V), [.,.])$ is a Lie subalgebra of $\F Cl_{ee}(S^1,V)$ on which $res(A,B)$ satisfies the same well-known properties: bilinear and symmetric. Moreover, it is well-known that $res$ is non-degenerate on $\F Cl(S^1,V).$ From \cite{KV1}, one can deduce by considering only formal operators that $\F Cl_{ee}(S^1,V)$ is isotropic for $res.$ Secondly, since $\epsilon(D)$ commutes with any element of $\F Cl(S^1,V),$ we have that \begin{eqnarray*}\forall A,B,C \in  \F Cl_{eo}(S^1,V), & & [A,[B,C]_{\epsilon(D)}]_{\epsilon(D)}+ [C,[A,B]_{\epsilon(D)}]_{\epsilon(D)} + [B,[C,A]_{\epsilon(D)}]_{\epsilon(D)}\\& & =  \epsilon(D)^2 \left([A,[B,C]]+ [C,[A,B]] + [B,[C,A]]\right)=0,\end{eqnarray*} which proves that $( \F Cl_{eo}(S^1,V), [.,.]_\epsilon(D))$ is a Lie bracket. Moreover, $A \mapsto \epsilon(D)A$ is a vector space isomorphism from  $ \F Cl_{eo}(S^1,V)$ to $ \F Cl_{ee}(S^1,V),$ which implies, with $\epsilon(D)^2 = 1,$ that $$\forall (A,B) \in  \F Cl_{eo}(S^1,V), res(AB) = res\left((\epsilon(D)A)(\epsilon(D)B)\right)$$
		and shows that $\F Cl_{eo}(S^1,V)$ is isotropic for $res(AB).$ 
	Let us finish the proof with invariance on $res.$ Invariance with respect to $[.,.]$  in $\F Cl(S^1,V)$ is well-known since $res$ is tracial. Again since $\epsilon(D)$ commutates, we have that $\forall(A,B) \in \F Cl(S^1,V), [A,B]_{\epsilon(D)} = [\epsilon(D)A,B] = [A,\epsilon(D)B]$ hence for $(A,B,C) \in \F Cl(S^1,V)^3,$
	\begin{eqnarray*}res([A,B]_{\epsilon(D)}C) & = & res([{\epsilon(D)}A,B]C) \\
		& = & res(B[C,{\epsilon(D)}A]) \\
		& = & - res (B[A,C]_{\epsilon(D)}). \end{eqnarray*}
	\end{proof}}
	\subsection{Proofs of section \ref{s:id}}
	\subsubsection{Proof of Theorem \ref{th:J1}}
	The operator $i\epsilon(D)$ commutes with any operator $u \in \mathcal{F}Cl(S^1,V).$ By the way, we simplify the relation that can be found e.g. in \cite{Mal1968} the following way:
$	\left[u, J_1v\right]+ 	\left[J_1u, v\right]  =  2 J_1[u,v]
$
	and 
	$$
		J_1\left[u,v\right] - J_1	\left[J_1u, J_1v\right]  =  J_1\left[u,v\right] - J_1^3 	\left[u, v\right]
		 =  2 J_1[u,v]
	$$
	Hence $$\left[u, J_1(v)\right]+ 	\left[J_1(u), v\right] = J_1\left[u, v\right] - J_1 	\left[J_1(u), J_1(v)\right]$$
	which proves integrability. 

\begin{Lemma}
We have $J_1(\mathcal{F}Cl_{ee}(S^1,V)) =\mathcal{F}Cl_{eo}(S^1,V) $
and $J_1(\mathcal{F}Cl_{eo}(S^1,V) ) =\mathcal{F}Cl_{ee}(S^1,V) $
\end{Lemma}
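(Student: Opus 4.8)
The plan is to read the result off from the composition table for the even-even / even-odd classes, together with the facts that $\epsilon(D)$ is even-odd class and that $J_1$ is a bijection.

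First I would recall that $J_1 = i\epsilon(D)\circ(\cdot)$ and that, by the Example preceding the lemma, $\epsilon(D) \in \F Cl_{eo}(S^1,V)$. Multiplication by the scalar $i$ acts coefficient-wise on partial symbols, so it changes neither the homogeneity degrees nor the parity of a partial symbol under $\xi \mapsto -\xi$; concretely it commutes with the projection $\phi$ and with the symmetry $s$, hence preserves the decomposition $\F Cl(S^1,V) = \F Cl_{ee}(S^1,V)\oplus \F Cl_{eo}(S^1,V)$. Thus for any $A$ the class of $J_1(A)$ is the class of $\epsilon(D)\circ A$.

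Next I would invoke the composition rules stated above: for $A \in \F Cl_{ee}(S^1,V)$ one has $\epsilon(D)\circ A \in \F Cl_{eo}(S^1,V)$ (even-odd composed with even-even is even-odd), and for $A \in \F Cl_{eo}(S^1,V)$ one has $\epsilon(D)\circ A \in \F Cl_{ee}(S^1,V)$ (even-odd composed with even-odd is even-even). This gives the inclusions $J_1\bigl(\F Cl_{ee}(S^1,V)\bigr) \subseteq \F Cl_{eo}(S^1,V)$ and $J_1\bigl(\F Cl_{eo}(S^1,V)\bigr) \subseteq \F Cl_{ee}(S^1,V)$.

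Finally, for the reverse inclusions I would use $J_1^2 = -\mathrm{Id}$, which follows from $\epsilon(D)^2 = \mathrm{Id}$ together with the centrality of $\epsilon(D)$ (exactly as already used in the proof of Theorem~\ref{th:J1}); hence $J_1$ is a bijection with $J_1^{-1} = -J_1$. Applying $J_1$ to the two inclusions above yields $\F Cl_{eo}(S^1,V) = J_1^2\bigl(\F Cl_{eo}(S^1,V)\bigr) \subseteq J_1\bigl(\F Cl_{ee}(S^1,V)\bigr)$ and, symmetrically, $\F Cl_{ee}(S^1,V) \subseteq J_1\bigl(\F Cl_{eo}(S^1,V)\bigr)$, so both inclusions are equalities. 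I do not expect any genuine obstacle here; the only point deserving an explicit line is the compatibility of scalar multiplication by $i$ with the class decomposition, which is immediate from the definitions of $\phi$ and $s$.
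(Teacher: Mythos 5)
Your proposal is correct and follows essentially the same route as the paper: the paper's proof is a one-line appeal to the composition table for the even-even/even-odd classes applied to $\epsilon(D)\circ(\cdot)$, exactly as in your second paragraph. Your additional step deriving the reverse inclusions from $J_1^2=-\mathrm{Id}$ makes explicit a surjectivity point that the paper leaves implicit (the composition rules alone only give inclusions), so your write-up is if anything slightly more complete.
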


\begin{proof}
Since $J=i\epsilon(D) \circ (.)$ it follows from the composition rules between even-even and even-odd class already described.
\end{proof}
Identifying $\mathcal{F}Cl_{eo}(S^1,V)$ with $\epsilon(D) \mathcal{F}Cl_{ee}(S^1,V),$ we recall that $$ \mathcal{F}Cl(S^1,V) = \mathcal{F}Cl_{ee}(S^1,V) \oplus \epsilon(D) \mathcal{F}Cl_{ee}(S^1,V),$$
we get the complexification result.
	\subsubsection{Other proofs}
	\begin{proof}[Proposition \ref{prop:J1J2}]
		By straightforward computations, we check first that $s(\epsilon(D))=  -\epsilon(D). $ Then, since composition of symbols by $\epsilon(D)$ is only pointwise multiplication, we get, for  $a \in \mathcal{F}Cl(S^1,V),$
		\begin{eqnarray*}
			J_2 \circ J_1 (a)(x,\xi) & = & - s(\epsilon(D) \circ a) (x,\xi)\\
			& = & - \epsilon(D)(x,-\xi) \left(\sum_{k \in \Z} (-1)^ka_k(x,-\xi)\right)\hbox{ (pointwise multiplication)} \\
			& = & \epsilon(D)(x,\xi) s(a) (x,\xi) \hbox{ (pointwise multiplication)} \\
			& = & - i \epsilon(D)\circ (i  s(a)) (x,\xi)\\
			& = & -J_1 \circ J_2 (a) (x,\xi)
		\end{eqnarray*}  
	\end{proof}
	\begin{proof}[Theorem \ref{th:J2int}]
		
		\begin{eqnarray*}
			\left[u, J_2v\right]+ 	\left[J_2u, v\right] & = & i[u,v_{ee}] - i [u, v_{eo}] +i[u_{ee},v] - i [u_{eo},v] \\
			& = & i \left([u_{ee},v_{ee}] + [u_{eo},v_{ee}] - [u_{ee}, v_{eo}] - [u_{eo}, v_{eo}] \right. \\ && \left. + [u_{ee},v_{ee}] + [u_{ee},v_{eo}] -[u_{eo},v_{ee}] - [u_{eo},v_{eo}] \right)\\
			& = & 2i \left([u_{ee},v_{ee}]   - [u_{eo}, v_{eo}]  \right)\\
		\end{eqnarray*}
		and 
		\begin{eqnarray*}
			J_2\left[u,v\right] - J_2	\left[J_2u, J_2v\right] & = & i\left([u_{ee},v_{ee}] + [u_{eo}, v_{eo}]\right) -i\left([u_{ee},v_{eo}] + [u_{eo}, v_{ee}] \right)
			\\
			&  & +i\left([u_{ee},v_{ee}] + [-u_{eo}, -v_{eo}]\right) -i\left([u_{ee},-v_{eo}] + [-u_{eo}, v_{ee}] \right)\\
			& = & 2i\left([u_{ee},v_{ee}] + [u_{eo}, v_{eo}]\right)
		\end{eqnarray*}
	As a counter-example, let $X = f(x) \partial $ and let $Y = g(x) \partial$ be two vector fields over $S^1$ such that $[X,Y] \neq 0.$ Let $u = \epsilon(D) X \in \F Cl_{eo}(S^1,\R)$ and let $v = \epsilon(D) Y \in \F Cl_{eo}(S^1,\R).$ Then $ 	J_2\left[u,v\right] - J_2	\left[J_2u, J_2v\right] = 2i [X,Y]$ while  $\left[u, J_2v\right]+ 	\left[J_2u, v\right] = -2i [X,Y].$
	\end{proof}
\begin{proof}[Proposition \ref{prop:J1J3}]
	By straightforward computations, we check first that $s'(\epsilon(D))=  -\epsilon(D). $ Then, since $s'$ is a morphism of algebra, 
	\begin{eqnarray*}
		J_3 \circ J_1 (a) & = & - s'(\epsilon(D) \circ (a_+,a_-)) %\\
	%	& = & - s'((a_+,-a_-)) \\
		%& 
		= 
		%& 
		-  (-a_-,a_+) \\
		& = & - \epsilon(D)\circ (-a_-,-a_+) %\\
	%	& = & \epsilon(D)\circ (a_-,a_+)\\
		%& 
		= 
		%& 
		\epsilon(D)\circ s'(a_+,a_-)%\\
	%	& = & - (i\epsilon(D))\circ (is')(a_+,a_-)\\
		%& 
		= 
		%& 
		-J_1\circ J_3 (a)
	\end{eqnarray*}  
\end{proof}
\begin{proof}[Proposition \ref{prop:J3J2}]
	By straightforward computations, we check first that $ss'=  s's. $ Then,$J_2J_3 = J_3J_2.$
\end{proof}

\begin{proof}[Theorem \ref{th:J3int}]
	
	$
		\left[u, J_3v\right]+ 	\left[J_3u, v\right]  =  i\left([u_+v_-]+ [u_-v_+], [u_+v_-]+ [u_-v_+]\right)
	$
	and 
	$
		J_2\left[u,v\right] - J_2	\left[J_2u, J_2v\right]  =  i\left([u_{+},v_{+}] + [u_{-}, v_{-}],[u_{+},v_{+}] + [u_{-}, v_{-}] \right)
	$
	As a counter-example, let $X = f(x) \partial $ and let $Y = g(x) \partial$ be two vector fields over $S^1$ such that $[X,Y] \neq 0.$ Let $u = X_+ \in \F Cl_{+}(S^1,\R)$ and let $v = Y_+ \in \F Cl_{+}(S^1,\R).$ Then $ 	J_3\left[u,v\right] - J_2	\left[J_3u, J_3v\right] = i [X,Y]_+$ while  $\left[u, J_3v\right]+ 	\left[J_3u, v\right] =0.$
\end{proof}

\begin{proof}[Proposition \ref{Prop:J4}]
	
	\begin{itemize}
		\item $J_4^2  =  J_1J_3J_1J_3  =  -J_3J_1J_1J_3 =  - Id$
		\item $J_2J_4  =  J_2J_1J_3  =  -J_1J_2J_3 =  -J_1J_3J_2 =  - J_4J_2$
	\item $
		J_1J_4  =  J_1J_1J_3  =  -J_1J_3J_1 
		 =  -J_3J_1$
	\end{itemize}
	\end{proof}
	\subsection{Proofs of section \ref{s:KP}}
	\begin{proof}[Proof of Proposition \ref{prop:R-Kn}]
		From $$\F Cl(S^1,\K^n) = \F Cl_{ee}(S^1,\K^n) \oplus \F Cl_{eo}(S^1,\K^n) = \F Cl_{ee}(S^1,\K^n) \oplus i\epsilon(D)\F Cl_{ee}(S^1,\K^n)$$  
		we get, for $A = A_{ee} + A_{eo}\in \F Cl_{ee}(S^1,\K^n) \oplus \F Cl_{eo}(S^1,\K^n),$ and for $k \in \Z,$
		\begin{eqnarray*} \sigma_k{A} & = & \sigma_k{(A_{ee})} + \sigma_k(A_{eo}) =  a_{k,ee} \frac{d}{dx}^k + a_{k,eo}i\epsilon(D) \frac{d}{dx} \\
			& = & \left(a_{k,ee}  + i\epsilon(D)a_{k,eo}\right) \partial^k
		\end{eqnarray*}
		(where $ (a_{k,ee},a_{k,eo}) \in C^\infty(S^1,M_n(\K))$)which ends the identification of $\F Cl(S^1,\K^n)$ with $\Psi DO(\mathcal{R})$. Since the order of partial symbols is conserved, we get the same identifications between $\F Cl_D(S^1,\K^n)$ and $DO(\mathcal{R})$, and between $\F Cl_S(S^1,\K^n)$ and $IO(\mathcal{R}).$
	\end{proof}
	 \begin{proof}[Proof of Theorem \ref{th:KPlambdamu}]
	 	{
	 	We analyze separately the equation on $\F Cl_+(S^1,\K)$ and on $\F Cl_-(S^1,\K). $ 
	 	Let us work on $\F Cl_+(S^1,\K).$ The map $\Phi_{1,0}$ pulls-back the KP hierarchy on $\Psi DO(S^1,\K)$ with initial value $L_0 \in \lambda \partial + \Psi DO(S^1,\K).$ When $\lambda \neq 1,$ the classical integration of the KP hierarchy is not achieved by the classical method. However, we use here the scaling first defined to our knowledge in \cite{Ma2013}. Let $q = \lambda^{-1}.$ Let $\tilde L_0 = q L_0.$
	 	Then $\tilde L_0 \in \lambda \partial + \Psi DO(S^1,\K),$ and there exists a Sato operator $S_0 \in 1 + \Psi DO^{-1}(S^1,\K)$ such that $\tilde L_0 = S_0 \partial S_0^{-1}$ and the KP-system with initial value $\tilde L_0$ has a unique solution $\tilde L.$ We define a $\lambda-$scaling in time:
	 	$ t_k \mapsto \lambda^k t_k.$
	 Following \cite{Ma2013}, $$\tilde L(t_1,t_2,...) \hbox{ is solution of (\ref{eq:KP}}) \Leftrightarrow L(t_1,t_2,..) = \lambda \tilde L(\lambda t_1, \lambda^2 t_2,...) \hbox{ is solution of (\ref{eq:KP}}).$$
	 The initial value of the solution $L$ is $L_0,$ which proves existence, uniqueness and smooth dependence of $L$ on $L_0.$
	 Then, we can push-forward the solution $L$ of (\ref{eq:KP})on $\Psi DO(S^1,\K)$ to the solution $L_+ = \Phi_{1,0}(L)$ on $\F Cl_+(S^1,\K).$
	 The same procedure holds to get the solution $L_-$ on $\F Cl_+(S^1,\K),$ replacing the constant $\lambda$ by the constant $\mu.$ The operator $L_+ + L_-$ furnishes the desired solution of (\ref{eq:KP}) on $\F Cl(S^1,\K),$ which is unique and smoothly dependent on the initial value by construction.}\end{proof}  	
 	\begin{proof}[Proof of Theorem \ref{th:KPepsilon}]
 		Let us transform slightly Equation \ref{eq:KPepsilon} for $k \in \N^*$:
 		$$\frac{d L}{d t_{k}} = \epsilon^{k}\left[ (L^{k})_{D} , L \right] \Leftrightarrow \epsilon  \frac{d L}{d t_{k}} = \epsilon^{k+1}\left[ (L^{k})_{D} , L \right] \Leftrightarrow   \frac{d (\epsilon L)}{d t_{k}} = \left[ ((\epsilon L)^{k})_{D} , (\epsilon L) \right]
 		$$
 		By the way,the field of operators $\epsilon L,$ with initial value $\epsilon L_0 \in \partial_{a\lambda, -a\mu} +  \F Cl^{-1}(S^1,\K^n),$ is the unique solution of the KP hierarchy (\ref{eq:KP}). Moreover, the map  $L \mapsto \epsilon L$ is smooth, biunivoque, with smooth inverse, which ends the proof. 
 	\end{proof}
		%We une $\mathcal{R}$ as the algebra of functions, and we have to show that a monomial of the form $u_k \partial^k,$ for $k \in \Z,$ with $$u = u_{ee} + iu_{eo}\epsilon(D)$$ 
	%	can be expressed as $v_k \partial^k_{\lambda,\mu},$ with   $$v = v_{ee}  + iv_{eo}\epsilon(D).$$ Decomposing the equation
	%	$$  u \partial^k =v \partial^k_{\lambda,\mu}$$ in the decomposition 
	%	$$\F Cl(S^1,\K^n) = \F Cl_+(S^1,\K^n) \oplus \F Cl_-(S^1,\K^n),$$
	%	we get the system 
	%	$$\left\{\begin{array}{ccccc}
	%	\lambda^k v_{ee} & + & i\lambda^k v_{eo} & = & u_{ee} + iu_{eo}\\
	%	\mu^k v_{ee} & - & i\mu^k v_{eo} & = & u_{ee} - iu_{eo}
	%	\end{array}
	%	\right.$$
	%	Thus $v_k$ is uniquely determined by $u_k$ and $$\forall U = \sum_{Z} u_k \partial^k \in \Psi DO(\mathcal R) = \F Cl(S^1,\K^n), \Phi^{-1}_{\lambda,\mu}(U) = V = \sum_{Z} v_k \partial^k.$$
	%	Each $v_k$ is obtained by linear combination from the expression of $u_k,$ which proves that the transformation $U \mapsto V$  is smooth. Hence, the KP system (\ref{eq:KP}) can be solved with respect to power series of $\partial_{\lambda,\mu}$ which ends the proof. 
	%\end{proof}
\begin{proof}[Proof of Theorem \ref{th:KPalpha}]
	Let us first analyze case (\ref{eq:complex1}). Then $$L_0^{1/\alpha} = \exp\left(\frac{1}{\alpha}\log L\right) \in \frac{d}{dx} + \F Cl^0(S^1,\K).$$
	We can then adapt \cite{RS1981} and define the dressing operator
	$$ U = \exp\left(\sum_{k \in \N^*} t_k (L_0^{1/\alpha})^k\right)$$
	that we decompose into 
	$ U = U_+ + U_-$
	where  $$ U_\pm = \exp\left(\sum_{k \in \N^*} t_k (L_0^{1/\alpha})_\pm^k\right).$$
	Working independently on the two components $\F Cl_\pm(S^1,\K),$
	Mulase factorization holds, 
	$U_\pm = S^{-1}_\pm Y_\pm$
	and setting $L_\pm = Y_\pm (L_0)_\pm Y^{-1}_\pm,$
	which implies  that $\forall k \in \N^*, \quad L_\pm^{k/\alpha} = Y_\pm (L_0)_\pm^{k/\alpha} Y^{-1}_\pm.$
	We moreover have that $\forall \beta \in \C^*, \quad L_0^{\beta} U = U L_0^{\beta}$
	which implies that 
	\begin{eqnarray*}
		L_\pm& =& Y_\pm (L_0)_\pm Y^{-1}_\pm = S_\pm S_\pm^{-1} Y_\pm (L_0)_\pm Y^{-1}_\pm S_\pm S^{-1}_\pm = S_\pm (L_0)_\pm S^{-1}_\pm
	\end{eqnarray*}
	and similarily 
$
		L_\pm^{k/\alpha} = Y_\pm (L_0)^{k/\alpha}_\pm Y^{-1}_\pm = S_\pm (L_0)^{k/\alpha}_\pm S^{-1}_\pm
$
	We can now differentiate $U$ 
	\begin{eqnarray*}
		\frac{d U_\pm}{dt_k} & = & (L_0)^{k/\alpha}_\pm U_\pm  =  - S^{-1}_\pm \frac{d S_\pm}{dt_k}S^{-1}_\pm Y_\pm + S^{-1}_\pm \frac{d Y_\pm}{dt_k} \end{eqnarray*}
	which implies that 
	$$ S_\pm (L_0)^{k/\alpha}_\pm S^{-1}_\pm = - \frac{d S_\pm}{dt_k}S^{-1}_\pm + \frac{d Y_\pm}{dt_k} Y_\pm.$$
	By the way, $ \left(L^{k/\alpha}_\pm\right)_D = \frac{d Y_\pm}{dt_k} Y_\pm$
	and hence 
	\begin{eqnarray*}
		\frac{d L_\pm}{dt_k} & = & \frac{d  Y_\pm (L_0)_\pm Y^{-1}_\pm}{dt_k} =  \frac{d  Y_\pm}{dt_k} (L_0)_\pm Y^{-1}_\pm -   Y_\pm (L_0)_\pm Y^{-1}_\pm\frac{dY_\pm}{dt_k}Y^{-1}_\pm \\
		& = & \left(L^{k/\alpha}_\pm\right)_DY_\pm (L_0)_\pm Y^{-1}_\pm - Y_\pm (L_0)_\pm Y^{-1}_\pm\left(L^{k/\alpha}_\pm\right)_D \\
		& = & \left[ \left(L^{k/\alpha}_\pm\right)_D,Y_\pm (L_0)_\pm Y^{-1}_\pm\right]
	\end{eqnarray*}
	Gathering the $\pm$ parts, we get that $L=L_+ +L_-$ is a solution of (\ref{eq:KPalpha}) with initial condition (\ref{eq:complex1}). 
	Let us now deal with initial condition (\ref{eq:complex2}). In that case, 
	$$L_0^{1/\alpha} \in |D| + \F Cl_0(S^1,\K) =  i\epsilon(D)\left(\frac{d}{dx} + \F Cl_0(S^1,\K)\right)  =  i\epsilon(D)\frac{d}{dx} + \F Cl_0(S^1,\K).$$
	Let $U$ be the dressing operator with respect to $L_0 $ along the lines of the previous computations, that we decompose into the $$U=U_+ + U_-$$ in the  $\F Cl_+(S^1,\K)$ and $\F Cl_-(S^1,\K)-$components. Then $U_+$ and $U_-$ decompose in the Mulase decomposition and we can re-construct two operators $S$ and $Y$ in $\F Cl(S^1,\K)$ from this decomposition. We set $ L = Y L_0 Y^{-1} = S L_0 S^{-1}.$
	Then, with the same computation as before, 
	\begin{eqnarray*}
		\frac{d L}{dt_k} & = & \frac{d  Y L_0 Y^{-1}}{dt_k} =  \frac{d  Y}{dt_k} L_0 Y^{-1} -   Y L_0 Y^{-1}_\pm\frac{dY}{dt_k}Y^{-1} \\
		& = & \left(L^{k/\alpha}\right)_DY L_0 Y^{-1} - Y L_0 Y^{-1}\left(L^{k/\alpha}\right)_D  =  \left[ \left(L^{k/\alpha}\right)_D,Y L_0 Y^{-1}\right]
	\end{eqnarray*}
\end{proof}
	

\begin{thebibliography}{99}

        \bibitem{Adl} Adler, M.; On a trace functional for formal pseudo-differential
	operators and the symplectic structure of Korteweg-de Vries type equations
	\textit{Inventiones Math.} \textbf{50} 219-248 (1979)
	
	\bibitem{BGV} Berline, N.; Getzler, E.; Vergne, M.; \textit{Heat Kernels and Dirac Operators} Springer (2004)
	
	\bibitem{AdBilal} Bilal, A.; Non-local matrix generalizations of $W$-algebras. {\em Comm. Math. Phys.} \textbf(1995), \textbf{170, 1}, 117--150. ;
	
	\bibitem{BK}  Bokobza-Haggiag, J.; { Op\'erateurs pseudo-diff\'erentiels sur une vari\'et\'e diff\'erentiable}; 
	{\it Ann. Inst. Fourier, Grenoble} \textbf{19,1}  125-177 (1969)
 
	\bibitem{CDMP} Cardona, A.; Ducourtioux, C.; Magnot, J-P.; Paycha, S.;
	Weighted traces on pseudo-differential operators and geometry on loop groups;
	\textit{Infin. Dimens. Anal. Quantum Probab. Relat. Top.} \textbf{5} no. 4 503-541 (2002)

	\bibitem{D} Dickey, L.A.; \textit{Soliton equations and Hamiltonian systems, second edition} (2003).
	
	\bibitem{Dieu} Dieudonn\'e, J.; \textit{El\'ements d'analyse} Gauthier-Villars, Paris (1978)
	
        \bibitem{Drin}  Drinfeld, V.;  On some unsolved problems in quantum group theory
	in: Kulish P.P. (eds) \textit{IQuantum Groups}, Lecture Notes in Mathematics, {\bf 1510}, 1-8(1992), 
	Springer, Berlin, Heidelberg. https://doi.org/10.1007/BFb0101175,
	
	
	\bibitem{ER2013} Eslami Rad, A.; Reyes, E. G.; The Kadomtsev-Petviashvili hierarchy and the Mulase factorization of
	formal Lie groups {\it J. Geom. Mech.} {\bf 5}, no 3 (2013) 345--363.
	
	\bibitem{ERMR} Eslami Rad, A.; Magnot, J.-P.; Reyes, E. G.; The Cauchy problem of the Kadomtsev-Petviashvili 
	hierarchy with arbitrary coefficient algebra. {\it J. Nonlinear Math. Phys.} 24:sup1 (2017), 103--120.
	
	\bibitem{EKRRR1995} Enriquez, B.; Khoroshkin, S.; Radul, A.; Rosly, A.; Rubtsov, V.; Poisson-Lie aspects of classical W-algebras; in: \textit{The interplay between Differential geometry and differential equations} \textit{Amer.  Math. Soc. Transl. (2)} \textbf{167}, 37-59 (1995)
	
	\bibitem{FMR1993} Figueroa-O'Farill, J.; Mas, J.; Ramos, E.; A one-parameter family of Hamiltonian structures for the KP Hierarchy and a Continuous deformation of the $W_{KP}$ algebra. \textit{Comm. Math. Phys.} \textbf{158} 17--43 (1993)

	\bibitem{Gil} Gilkey, P;
	{\it Invariance theory, the heat equation and the Atiyah-Singer index theorem}
	Publish or Perish (1984)
	
%	\bibitem{Gl2002} Gl\"ockner, H; Algebras whose groups of the units are Lie groups
%	\textit{Studia Math. } \textbf{153}, no2 (2002),
%	147--177.
	\bibitem{McI2011} Mc Intosh, I.; The quaternionic KP hierarchy and conformally immersed 2-tori on the 4-sphere. \textit{Tohoku Math. J.} \textbf{63} 183-215 (2011)  
	
	\bibitem{Ka} Kassel, Ch.;
	Le r\'esidu non commutatif (d'apr\`es M. Wodzicki)  S\'eminaire
	Bourbaki, Vol. 1988/89. \textit{Ast\'erisque} {\bf 177-178},
	Exp. No. 708, 199-229 (1989)
	
	\bibitem{KV1} Kontsevich, M.; Vishik, S.;
	{ Determinants of elliptic pseudo-differential operators} Max
	Plank Institut fur Mathematik, Bonn, Germany, preprint n. 94-30
	(1994)
	
	\bibitem{KV2}  Kontsevich, M.; Vishik, S.; Geometry of determinants of elliptic operators.
	Functional analysis on the eve of the 21st century, Vol. 1 (New Brunswick, NJ, 1993), 
	\textit{Progr. Math.} \textbf{131},173-197
	(1995)
	
	\bibitem{KW} Khesin, B.; Wendt, R.; \textit{The Geometry of Infinite-Dimensional Groups} Springer Verlag (2009)

        \bibitem{KZ} Khesin, B.A. and Zakharevich, I.; Poisson-Lie groups of pseudo-differential symbols, {\em Comm. Math. Phys.} \textbf{171} no. 3 (1995), 475--530.
		
	\bibitem{Ku2000} Kuperschmidt, B.A.; \textit{KP or mKP: Mathematics of Lagrangian, Hamiltonian and Integrable Systems.} Math. Surv. Monographs \textbf{78} (2000)
		
	\bibitem{Ma2003}  Magnot, J-P.; The K\"ahler form on the loop group and the Radul cocycle on Pseudo-differential Operators; 
	\textit{GROUP'24: Physical and Mathematical aspects of symmetries}, Proceedings of the 24th International Colloquium on Group Theorical Methods in Physics, Paris, France, 15-20 July 2002; Institut of Physic conferences Publishing \textbf{173}, 671-675, IOP Bristol and Philadelphia (2003) 

%		\bibitem{Ma2006} Magnot, J-P.; Chern forms on mapping spaces,
%		\textit{Acta Appl. Math.} \textbf{91}, no. 1, 67-95 (2006).
		
	\bibitem{Ma2006-2}  Magnot, J-P.; Renormalized traces and cocycles on the algebra 
		of $S^1$-pseudo-differential operators; \textit{Lett. Math. Phys.} \textbf{75} no2, 111-127 (2006)


	\bibitem{Ma2013} Magnot, J-P.; Ambrose-Singer theorem on diffeological bundles and complete integrability
		of KP equations. {\em Int. J. Geom. Meth. Mod. Phys.} {\bf 10}, no 9 (2013) Article ID
		1350043.

        \bibitem{Ma2016} Magnot, J-P.; On $Diff(M)-$pseudo-differential operators and the geometry of non linear Grassmannians. 
\textit{Mathematics} \textbf{4}, 1; doi:10.3390/math4010001 (2016)

	\bibitem{MR2016} Magnot, J-P.; Reyes, E. G.; Well-posedness of the Kadomtsev-Petviashvili hierarchy, 
		Mulase factorization, and Fr\"olicher Lie groups.  \textit{Ann. H. Poincar\'e} (in press)

	\bibitem{MR2018} Magnot, J-P; Reyes E. G.; $Diff_+(S^1)-$pseudo-differential operators and the Kadomtsev-Petviashvili hierarchy \texttt{ArXiv:1808.03791}
%		\bibitem{Mick} Mickelsson, J.; \textit{Current algebras and groups}. Plenum monographs in Nonlinear Physics, 
%		Springer (1989)

	\bibitem{Mal1968} Malgrange, B.; Sur l’int\'egrabilit\'e des structures presque-complexes \textit{S\'eminaire Jean Leray} \textbf{1},  1-8 (1968-1969)

%		\bibitem{MJD2000} Miwa, T.; Jimbo, M.; Date, E.; \textit{Solitons, Differential Equations, Symmetries and 
%			Infinite Dimensional Algebras} (translated by Miles Reid) Cambridge University Press (2000) 

	\bibitem{M1} Mulase, M.; Complete integrability of the Kadomtsev-Petvishvili equation.
		{\em Advances in Math.} \textbf{54} (1984), 57--66.
%		\bibitem{M2} Mulase, M.;  Cohomological structure in soliton equations and
%		Jacobian varieties. {\em J. Diff. Geom.} 19 (1984), 403--430.
	\bibitem{M3} Mulase, M.; Solvability of the super KP equation and a generalization of the Birkhoff
		decomposition. {\em Invent. Math.} \textbf{92} (1988), 1--46.


%		\bibitem{Om} Omori, H.; {\it Infinite Dimensional Lie Groups}; AMS Translations of Mathematical Monographs no 
%		{\bf 158}  Amer. Math. Soc., Providence, R.I. (1997)

	\bibitem{PayBook} Paycha, S; 
		\textit{Regularised integrals, sums and traces. An analytic point of view.}
		University Lecture Series \textbf{59}, AMS (2012).

%		\bibitem{PS} Pressley, A.; Segal, G.; \textit{Loop groups} OUP (1986) 
       \bibitem{RS1981} Reyman, A.G.;  Semenov-Tian-Shansky, M.A.; Reduction of Hamiltonian Systems, Affine Lie Algebras and Lax Equations II,\textit{Invent. Math.} \textbf{63} (1981), 423–432.
		
       \bibitem{Scott} Scott, S.;
		\textit{Traces and determinants of pseudo-differential operators}; 
		OUP (2010)

	\bibitem{See} Seeley, R.T.; {Complex powers of an elliptic operator}
		\textit{AMS Proc. Symp. Pure Math.} \textbf{10},  288-307 (1968)
%		\bibitem{SW} Segal, G.; Wilson, G.; Loop groups and equations of KdV type. Publications IHES.

	\bibitem{Wid} Widom, H.; { A complete symbolic calculus for pseudo-differential operators}; 
		{\it Bull. Sc. Math. 2e serie} \textbf{104} (1980) 19-63 
		
	\bibitem{Wod1984} Wodzicki, M.; {Local invariants in spectral asymmetry}
		\textit{Inv. Math.} {\bf 75}, 143-178  (1984)
		
		
\end{thebibliography}
\end{document}